 \newtheorem{proposition}{ Proposition}[section]
 \newtheorem{remark}{ Remark}
\title{PCA for Implied Volatility Surfaces}
\author{M. Avellaneda\thanks{Courant Institute of Mathematical Sciences, New York University, 251 Mercer Street,
New York, NY 10012-1185, USA (avellaneda@courant.nyu.edu)}, B. Healy\thanks{Department of Financial Computing and Analytics, University College London, Gower Street, London WC1E 6BT, UK  (brian@decisionsci.net)}, A. Papanicolaou\thanks{Department of Finance and Risk Engineering, NYU Tandon School of Engineering,
6 MetroTech Center, Brooklyn, NY 11201-3840, USA (ap1345@nyu.edu)}~ and G. Papanicolaou\thanks{Department of Mathematics, Stanford University, Stanford, CA 94305 (papanicolaou@stanford.edu)}}
\date{\today}
\begin{document}
\maketitle

\begin{abstract}
  Principal component analysis (PCA) is a useful tool when trying to construct factor models from historical asset returns. For the implied volatilities of U.S. equities there is a PCA-based model with a principal eigenportfolio whose return time series lies close to that of an overarching market factor. The authors show that this market factor is the index resulting from the daily compounding of a weighted average of implied-volatility returns, with weights based on the options' open interest (OI) and Vega. The authors also analyze the singular vectors derived from the tensor structure of the implied volatilities of S\&P500 constituents, and find evidence indicating that some type of OI and Vega-weighted index should be one of at least two significant factors in this market.
\end{abstract}

\newpage

\section{Introduction} 
\label{sec:intro}

We show through principal component analysis (PCA) that a relatively small number of factors can account for most of the variation in the collective movements of the implied volatilities derived from U.S. equity options. In fact, a matrix formed with normalized implied volatility returns over time has positive covariance with the first principal eigenvector, which is closely related to an open interest and vega (OI-Vega)-weighted basket of implied volatilities. This draws parallels with the first principal component obtained from the covariance of the matrix of normalized equity returns and its proximity to the capitalization-weighted portfolio (see \cite{avellaneda2010,boyle2014}), since OI is a measure of market size for options just as capitalization is a measure of market size for equities, and both implicitly carry liquidity information. OI is the number of open contracts for a given option (name, strike, maturity) at a given time. Our findings highlight and give detailed insight into how PCA can be used to extract information from the covariance structure for a large dataset of implied volatilities, and how new and improved implied volatility factors can be constructed using OI and Vega -- both of which will be useful for portfolio and risk managers who have a need for better statistical prediction models to improve their estimated risk metrics (such as VaR and expected shortfall). To date, the preeminent volatility index is VIX, which is constructed from index options. The VIX has proven reliable but on occasion has shown susceptibility to outlier prices and manipulative trading (see \cite{griffin2017}). The OI-Vega based factors constructed in this paper are more robust as they are based on hundreds of implied volatilities and place more emphasis on those contracts having most trading interest.

The study in this paper builds on the work in \cite{avellanedaDobi2014}, where they consider a large dataset of implied volatility surfaces for a few thousand U.S. equities, and use PCA to find the smallest number of factors needed to explain the collective movements of these volatilities. They also construct principal eigenportfolios and examine the qualitative structure of each from the $1^{st}$ through $4^{th}$ eigenvectors. We draw from the same data source as \cite{avellanedaDobi2014}, namely, the implied volatility surface (IVS) data available from OptionMetrics through Wharton Research Data Services (WRDS). We hypothesize that the normalized covariance matrix of market-wide implied volatilities has a \textit{low-rank plus random} structure (known as the ``spike model"), and similar to \cite{avellanedaDobi2014}, we find that removal of the low-rank components leaves a residual whose squared singular values are close in distribution to a Marchenko-Pastur law. Using Random Matrix Theory (RMT), the presence of principal factors should make it possible to reject a model of purely random noise. RMT was used in \cite{avellanedaDobi2014}, with the spectra limiting Marchenko-Pastur distribution providing the basis for establishing cutoffs for the identification of the non-random structure. The analyses in this paper consistently (for multiple years) show there to be at least two outliers in the singular value distribution, indicating that at least two factors are driving the time series of IVS returns. 

As already noted, we focus on the $1^{st}$ principal eigenportfolio and its closeness to various OI and Vega-weighted portfolios. An eigenportfolio is a vector of portfolio weights that is derived from an eigenvector of the returns' covariance matrix. In equities, it is well known that the eigenportfolio constructed from the $1^{st}$ principal eigenvector has explanatory power for the cross section of U.S. equity returns (\cite{avellaneda2010,boyle2014}), and that it tracks closely with a dominant factor such as a market portfolio. Perhaps the most significant finding in this paper is a sizable body of evidence indicating that option OI and OI-Vega are key elements for construction of factors for explaining the collective cross-sectional changes of implied volatility surfaces. Specifically, we show that various factors constructed from OI and OI-Vega-weightings of implied volatility returns have significant explanatory power for interpreting the principal eigenportolio's returns. This finding can be considered as the \textit{implied volatility analogue} to the equity market's $1^{st}$ principal factor, namely the capitalization-weighted returns portfolio. As already noted, it appears that OI plays a similar role for implied volatility to that played by capitalization for equities. However, such a comparison is very informal as the CAPM and its related economic theory bind equities and capitalization closely together, whereas the implied volatility results shown in this paper are, at present, statistical findings. 

The time series of implied volatility surfaces can be put into vector form, but its natural representation is a 4-dimensional tensor, with the 4 dimensions being time, name, option maturity and option delta (normalized strike). Our study of the $1^{st}$ principal eigenportfolio can be extended to this tensor setting, which provides an example of how factor construction can, in fact, be improved by considering the natural representation offered by the tensor structure. The maturity and strike dimensions lend themselves to individual factors, for which we can construct individualized OI-weighted factors for each option maturity or for each maturity-delta pair. Individualized factors allow for a more nuanced weighing of changes in implied volatilities, and this leads to improved explanatory power for the covariance structure's, suitably defined, principal eigenportfolio.

\subsection{Review of Literature}

The work of \cite{avellanedaDobi2014} and \cite{dobi2014modeling} provides a cross-sectional classification of U.S. equity options based on implied volatility data for the period from August 2004 to August 2013, jointly with equity returns. The spectrum of the joint equity-IVS is used, in particular the leading eigenvalues, to classify options into those carrying mostly systemic risk and into those carrying mostly idiosyncratic risk. Then employing methods from principal component analysis and results from random matrix theory, the significant eigenvalues are identified, and it is shown that approximately nine principal components suffice to reproduce the implied volatility surfaces of all equities studied, with even fewer risk factors for so-called systemic names, such as SPY, QQQ and AAPL. An explicit model is introduced, which can be used to track the dynamics of the implied volatility surface, yet is compact and computationally tractable. 

Focusing on the implied volatility surface for a single asset, \cite{cont2002dynamics} examine time series of option prices for options on the S\&P500 and FTSE100 indices. They show how the implied volatility surface can be deformed and represented as a randomly fluctuating surface driven by a small number of orthogonal random factors and find a simple factor model compatible with the empirical observations. Also of interest are methods for pricing baskets of many assets using option implied volatilities (see \cite{avellanedaBaskets2002}).

Prior to work on implied volatilities there was a bounty of research on equities. The original work on portfolio composition dates back to \cite{markowitz1952portfolio}, and subsequent work on the CAPM model which focused on the expected return of an asset relative to the risk-free instrument and derived a relationship between this and the excess return of a market or benchmark portfolio. Work in subsequent years suggested that factors other than the market excess return were being priced. In particular,  \cite{roll1980empirical} used data for individual equities during the 1962–1972 period and found that at least three are priced in the generating process of returns, and \cite{fama1992cross} found the most significant factors to be market excess return, company size and the ratio of the book value to the market value of the firm. An early use of PCA and random matrix theory for analyzing equity returns is \cite{plerou2002random}.

The importance of eigenportfolios is highlighted in \cite{boyle2014}, where there is examination of conditions under which frontier portfolios have positive weights on all assets. This is of interest since the market portfolio given by CAPM is mean variance efficient and has positive weights on all assets. Prior to this work is \cite{avellaneda2010}, which studies statistical arbitrage strategies in U.S. equities with trading signals generated using PCA. Modeling the residuals of stock returns as a mean-reverting process, \cite{avellaneda2010} develop contrarian trading signals and then back-test these over the broad universe of U.S. equities. The fact that these PCA-based strategies have an average annual Sharpe ratio that is statistically and economically significant provides empirical support for the PCA approach.

\subsection{Structure and Results of the Paper}

This paper has three main sections after this introduction. The first section addresses the estimation of the low-rank principal component structure from the standardized returns of options' implied volatility. The main result of this section is the introduction of an effective dimension approach for assessing the randomness of residuals, especially when there is only randomness in time since the vectorized IVS data produces residuals that do retain some of the structure of the data.  The second section explores the role of the $1^{st}$ principal component in constructing an eigenportfolio, with option OI and Vega as weights, as the primary factor in evaluating collective movements of implied volatility surfaces. The final section makes use of the data's natural tensor structure for construction of improved principal eigenportfolios. The main contribution of this paper is the presented evidence demonstrating the importance of OI when measuring changes in implied volatilities. Performing PCA to determine the number of relevant factors is a fairly standard procedure once we've standardized the data, but construction and analysis of eigenportfolios requires a deeper understanding of the data, including OI. The tensor analysis does provide more depth of understanding, as it shows us that construction of factors individualized to sub-categories of options (e.g., separate OI-based factors for each of the options' maturities) leads to a clear improvement in the eigenportfolio's ability to account for implied volatilities' movements.

\section{Matrix of Implied Volatility Returns}
\label{sec:PCA}

Let $t$ be an index denoting calendar days. Let $i$ be an index denoting an individual option contract, and denote the implied volatility for this particular option contract as $\hat\sigma_i(t)$. We define at time $t$ the vector of daily returns on the $i^{th}$ contract's implied volatility as

\begin{align}
\label{eq:returns}
    r_i(t) = \frac{d\hat\sigma_{i}(t)}{\hat\sigma_{i}(t)} \qquad \text{for} \quad 1\leq i\leq N \quad \text{and} \quad 1\leq t\leq T  \ ,  
\end{align}
where $d\hat\sigma_{i}(t) = \hat\sigma_{i}(t+dt)-\hat\sigma_{i}(t)$ with $dt=1/252=$ 1 day. \textbf{We assume throughout that the number of contracts far exceeds the number of days, $N\gg T$,} which means that the covariance/correlation matrix has several eigenvalues equal to zero. We standardize these returns and then place them into a matrix $R\in\mathbb R^{N\times T}$, given by

\begin{align}
    \label{eq:standardizedReturns}
    R &= \left[\frac{r_{i}(t) -\bar r_i}{h_i}\right]_{1\leq i\leq N,1\leq t\leq T}
\end{align}
where $\bar r_i = \frac{1}{T}\sum_tr_{i}(t)$ and $h_i = \sqrt{\frac{1}{T-1}\sum_t(r_{i}(t)-\bar r_i)^2}$. 
Our hypothesis is that $R$ can be decomposed into a low-rank factor matrix $F$ and a random matrix $X$
\begin{align}
    \label{eq:lowRankMatrixModel}
    R = F + X
\end{align}
which can be tested using the ``spike-model" approach  (see \cite{benaych2011eigenvalues}). The low-rank matrix $F$ can be further decomposed into orthogonal components
\begin{align}
    \label{eq:factorDecomposition}
    F = \sum_{i=1}^d f_i\theta_i^*    
\end{align}
where each vector $f_i$ is a principal characteristic of $R$ with orthogonality between $f_i$ and $f_j$ $\forall \ i\neq j$, each $\theta_i$ is its loading, and where $*$ denotes matrix/vector adjoint. In particular, if $\|f_i\|=1$ then $\|\theta_i\|^2$ is an eigenvalue of $FF^*$.

There are two obvious issues to address: the value of $d$ in \eqref{eq:factorDecomposition} and the vectors $(f_i)_{i=1,\dots,d}$. In the RMT literature it is equally as important to determine the $\theta_i$'s, as the criticality of a given $\theta_i$ will determine whether or not component $f_i$ is distinguishable from $R$'s bulk eigenvectors. In applications to financial data however, the top components are typically much greater than the critical threshold and attempts to include the middle ranks of $F$ will usually lead to overfitting.

A brief description of the IVS data is in the appendix.
For our analysis, we extracted 56 implied volatility values for each of the (roughly) 500 S\&P500 constituents, for each of the approximately 252 business days in each year of data, 2012-2017. Specifically, we use options with time to maturity: 30 days, 60 days, 91 days, 122 days, 152 days, 182 days, 273 days \& 365 days which have delta (normalized strike; see Appendix \ref{sec:data}): -20, -30, -40, 50, 40, 30, 20. Therefore, for each maturity we use three out of the money put options (-20, -30, -40), one at-the-money option (50) and three out-of-the-money call options (40, 30, 20). Out-of-the-money options are used as these are more widely traded and hence are more liquid and have more reliable prices. Thus $N=500\times 8\times 7=28,000$ and
$T=252$ (or $250$ or $251$ depending on when holidays fall in the year) if we use a one-year estimation window. \\

\subsection{Singular Values of Non-Principal Structure}

A very basic estimator of the covariance matrix is $\widehat \rho = RR^*/T$. Much of the literature has addressed methods for improvements of this estimator, including shrinkage of the eigenvalues in \cite{ledoit2004well} and asymptotic behavior of eigenvectors in \cite{ledoit2011eigenvectors} and \cite{ledoit2012nonlinear}. Perhaps the most applicable reference for what we're seeking in this paper is \cite{benaych2011eigenvalues}, which contains a result stating that if the magnitude of the $\|\theta_i\|$'s are greater than some critical threshold with respect to the variance of $X$'s entries in \eqref{eq:lowRankMatrixModel}, then the principal component vectors of $\widehat\rho$ will be inside a cone centered around the principal eigenvectors of $FF^*/T$. For financial data, the $1^{st}$ principal component often accounts for as much as 50\% of the total variance and thus has an eigenvalue that is well over the threshold, but higher-order factors may be closer to the critical level. Detection of factors whose eigenvalue(s) are near the critical threshold is interesting but not the main focus in this paper. Instead, we focus on finding an estimate of the minimum number of factors needed to have a statistical non-rejection of the estimated low-rank model.

In practice, the estimator $\widehat\rho$ is not calculated because usually $N\gg T$ making it more efficient to compute the Singular Value Decomposition (SVD). The SVD represents $R$ as
\begin{align}
    \label{eq:SVD}
    R &= U SV^*\ ,
\end{align}
where $U=[U_1,U_2,\dots,U_T]$ is an $N\times T$ matrix with orthonormal columns, $S$ is a $T\times T$ diagonal matrix with entries $S_{11}\geq S_{22}\geq \dots\geq S_{TT}\geq 0$, and $V=[V_1,V_2,\dots V_T]$ is a $T\times T$ matrix with orthonormal columns. The non-zero eigenvalues of the correlation structure are the $S_{ii}^2$ values, and if $R$ were completely random (i.e., if $F=0$ in \eqref{eq:lowRankMatrixModel}), then the histogram of these values would be close to a Marchenko-Pastur (MP) density when $N$ and $T$ are large. However, from Figure \ref{fig:singularValues}, it is clear that at least two values separated visibly from the bulk of $(S_{ii}^2/N)_{1\leq i\leq T}$. Hence, the rank of $F$ is at least two, which means at least two principal components need to be removed from the data for the remainder to be considered ``noise".
\begin{figure}[t]
    \centering
    \begin{subfigure}[b]{0.3\textwidth}
        \includegraphics[width=\textwidth]{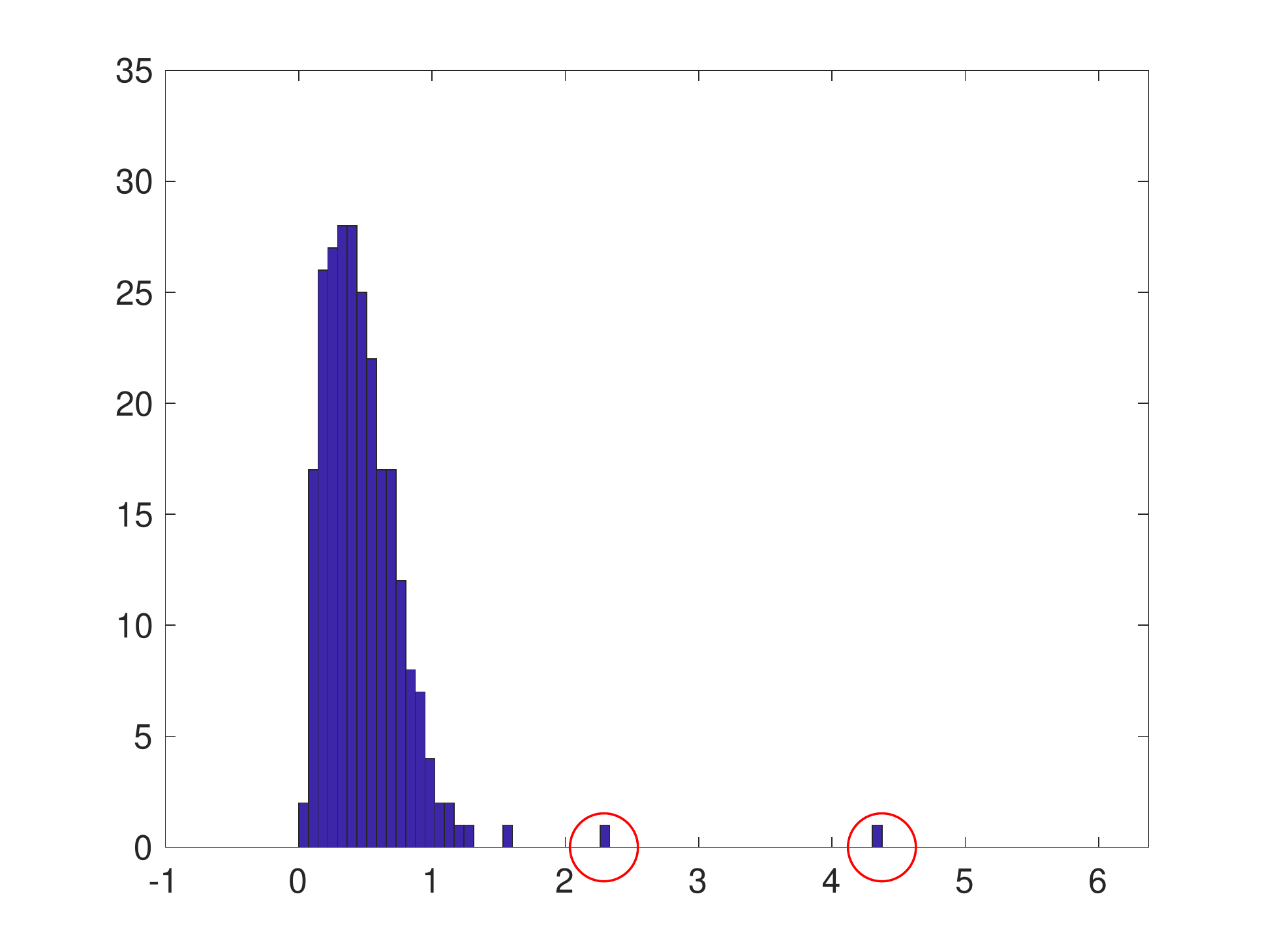}
        \caption{2012}
        \label{fig:2012SVflat}
    \end{subfigure}
\begin{subfigure}[b]{0.3\textwidth}
        \includegraphics[width=\textwidth]{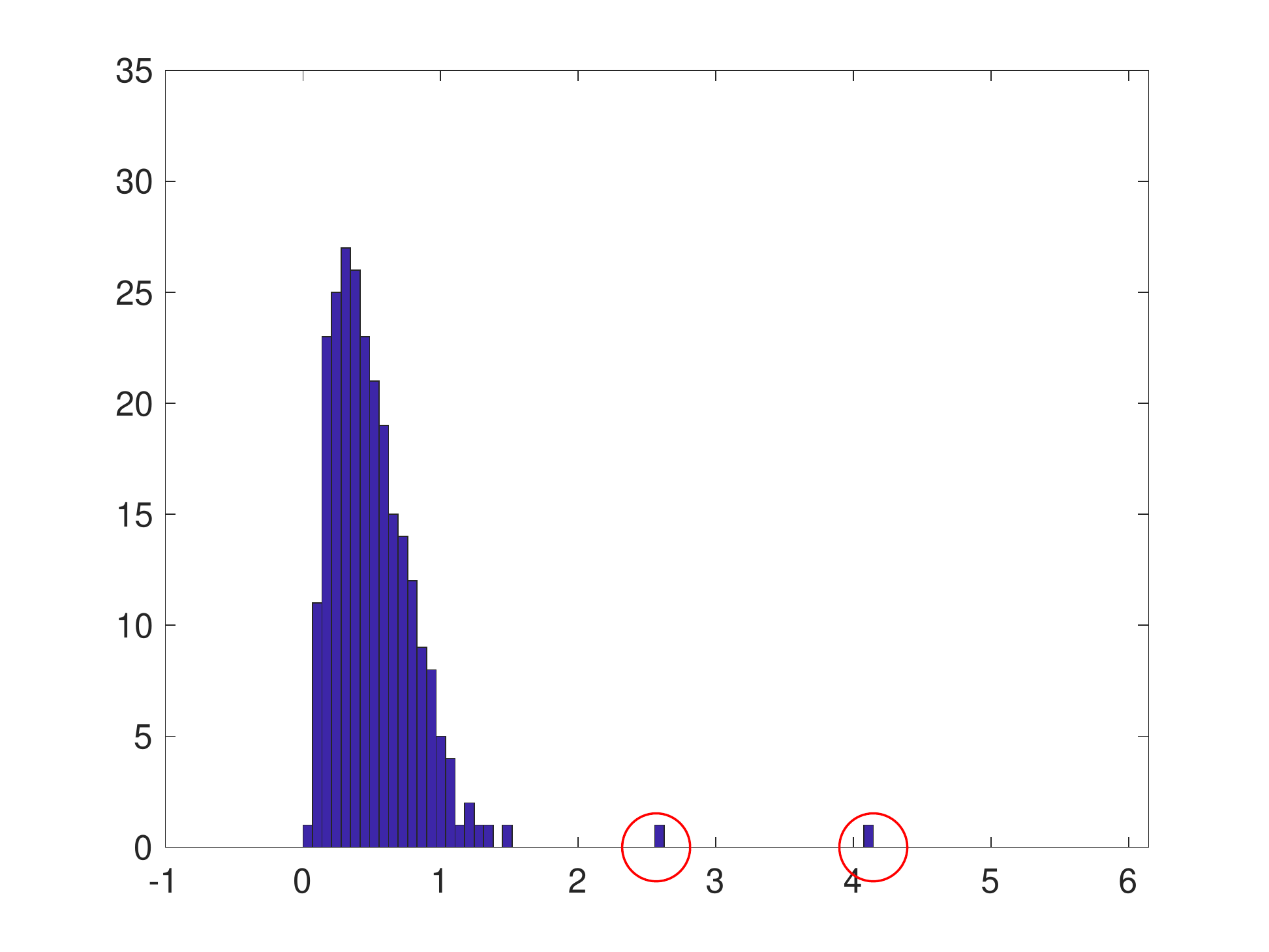}
        \caption{2013}
        \label{fig:2013SVflat}
    \end{subfigure}      
\begin{subfigure}[b]{0.3\textwidth}
        \includegraphics[width=\textwidth]{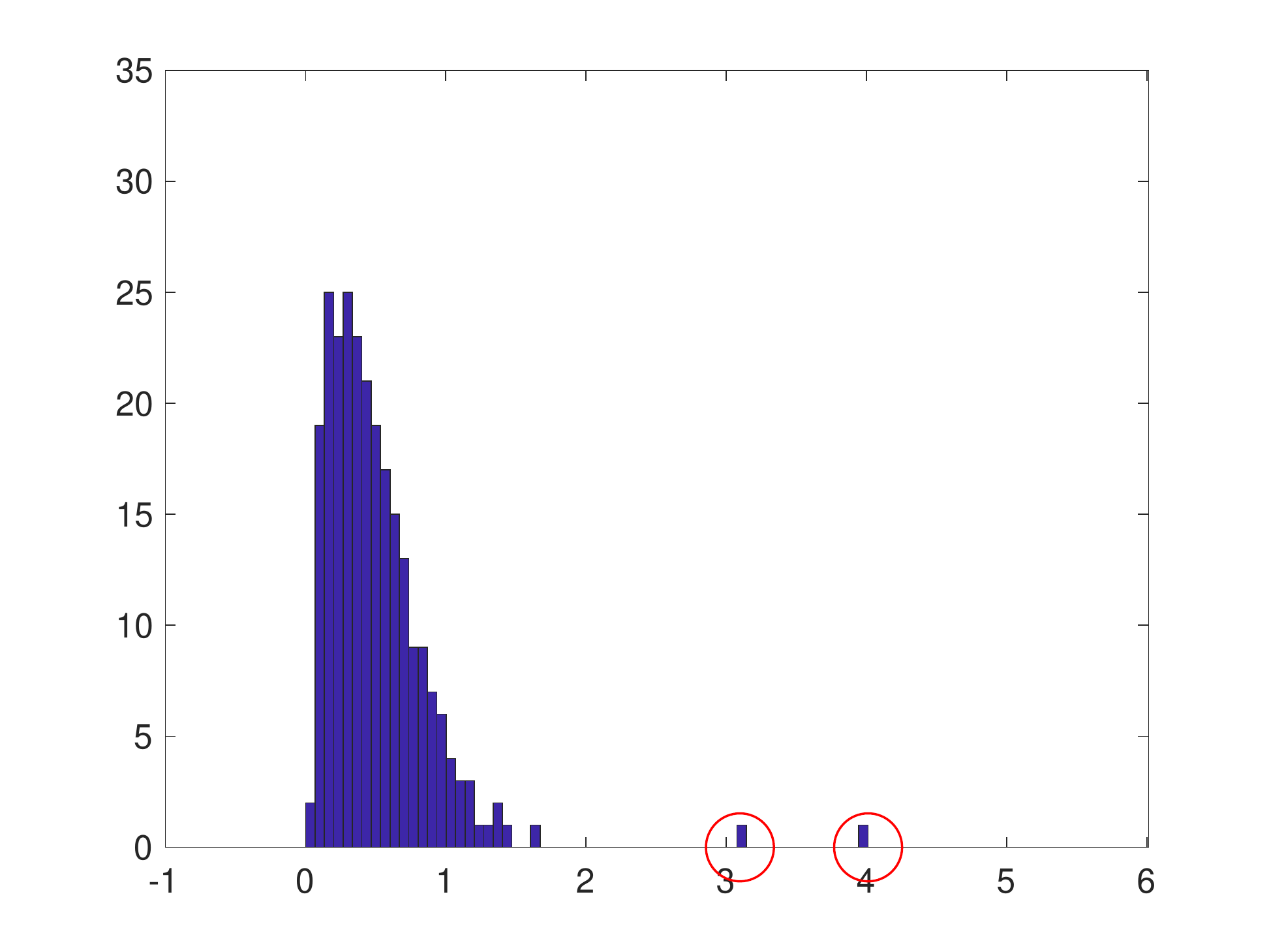}
        \caption{2014}
        \label{fig:2014SVflat}
    \end{subfigure}\\
\begin{subfigure}[b]{0.3\textwidth}
        \includegraphics[width=\textwidth]{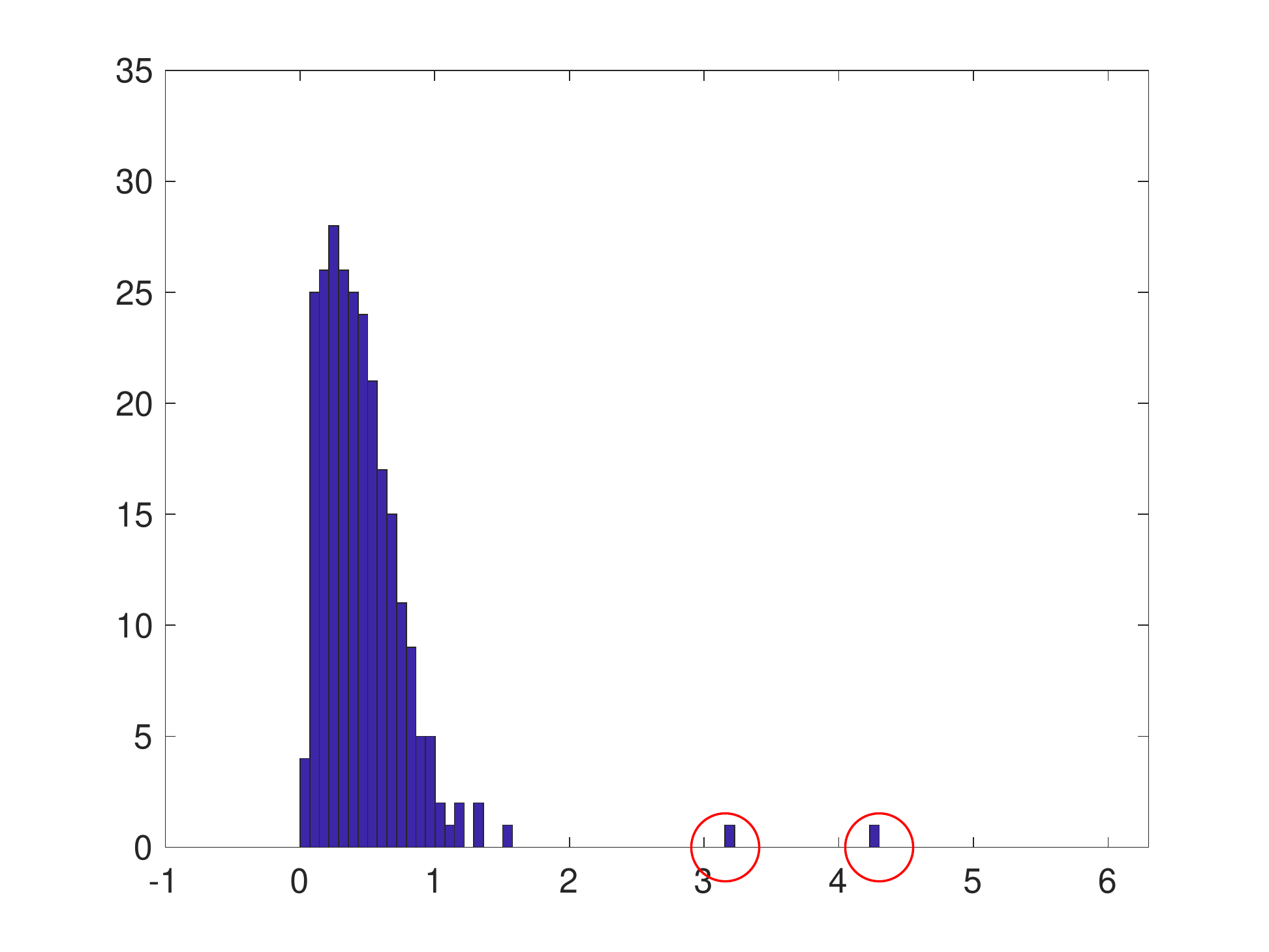}
        \caption{2015}
        \label{fig:2015SVflat}
    \end{subfigure}
\begin{subfigure}[b]{0.3\textwidth}
        \includegraphics[width=\textwidth]{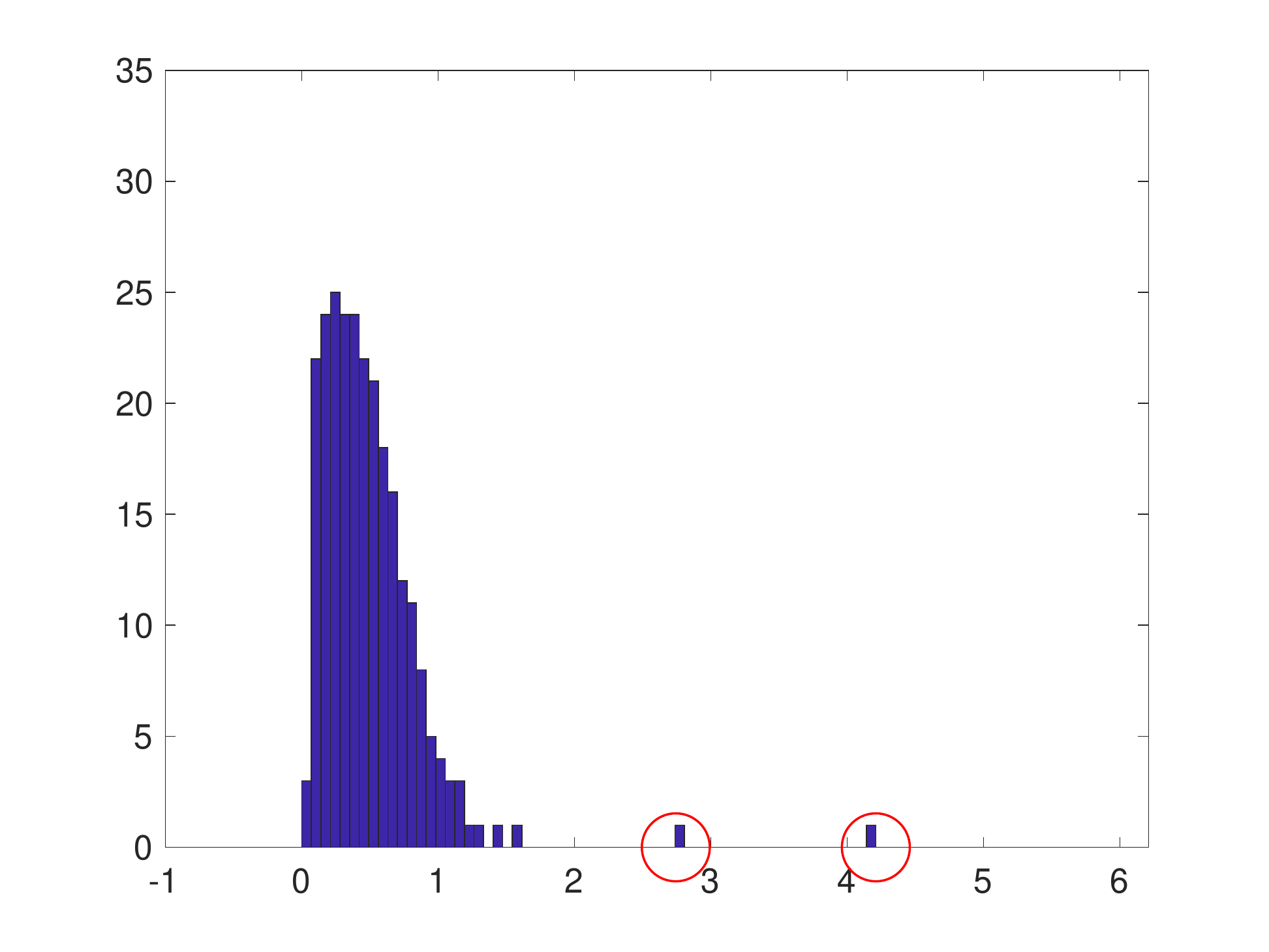}
        \caption{2016}
        \label{fig:2016SVflat}
    \end{subfigure}
\begin{subfigure}[b]{0.3\textwidth}
        \includegraphics[width=\textwidth]{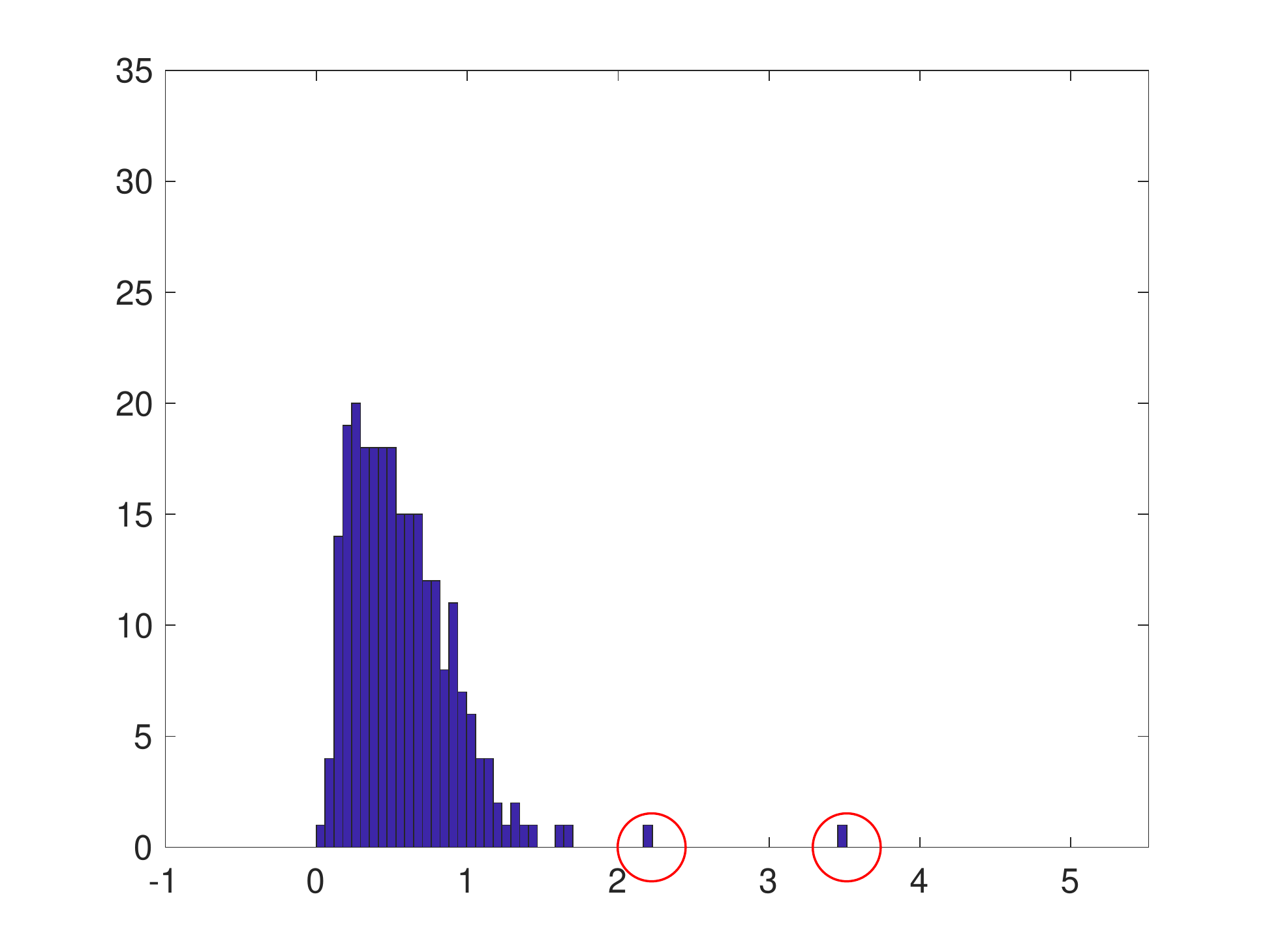}
        \caption{2017}
        \label{fig:2017SVflat}
    \end{subfigure}
     \caption{Histograms of $\log( 1 + S_{ii}^2/N)$ where $S_{ii}$ are the non-zero singular values of $R$, estimated with one-year windows from 2012 to 2017. If $F=0$, there will be low probability of outliers from the bulk, and the histogram can be fitted with a Marchenko-Pastur density. However, in each of these histograms we see at least two outliers (circled marks), which indicates the rank of $F$ is at least two. Hence, principal components need to be removed.
     \label{fig:singularValues}}
\end{figure}

For some $d\leq T$, the best rank-$d$ estimator of $F$ that minimizes the Frobenius norm of error, is 
\[\widehat F = \sum_{i\leq d}S_{ii}U_iV_i^*\ ,\] 
with residual $\widetilde R = R-\widehat F$. The true rank of $F$ is greater than $d$ if a statistical test of the residual rejects the hypothesis of purely random entries in $\widetilde R$. A Kolmogorov-Smirnoff (KS) test is likely to reject this hypothesis if $d$ is too small. We devise a simple KS test using the data's own estimated MP distribution, that is, we use the estimated asymptotic distribution parameters obtained from $(S_{ii}^2/N)_{d< i\leq T}$ and then check for significance of the associated KS statistic as we now describe.

The MP density is
\begin{align}
\label{eq:MPdensity}
\nu(x) = \frac{1}{2\pi \gamma^2} \frac{\sqrt{(\lambda_+-x)(x-\lambda_-)}}{\lambda x}\qquad\hbox{for }\lambda_-\leq x\leq \lambda_+
\end{align}
where 
\begin{align}
\label{eq:MPlmbda}
\lambda_\pm = \gamma^2(1\pm\sqrt\lambda)^2 ~\hbox{and}  ~\lambda>0 \ .
\end{align}
If $R$ were a random $N\times T$ matrix with independent identically distributed entries of mean zero and variance $\gamma^2$, then for $N>T$ the RMT tells us that the empirical spectral distribution of the covariance $\frac{1}{N} R^* R$ 
\begin{align*}
    \frac{1}{T}\sum_{i=1}^T {\bf{1}}_{\{S_{ii}^2/N \leq x\}},
\end{align*}
converges in probability (pointwise in $x$) to the distribution of the MP density in \eqref{eq:MPdensity} as $N$ and $T$ tend to infinity with $\lim\frac{T}{N}=\lambda\in(0,1)$ fixed. Conversely, the spectral distribution of $\frac{1}{T}RR^*$ is the case of $\lambda > 1$, wherein the limit law has an additional discrete mass at zero with weight $1-1/\lambda$, which appears because the $N\times N$ covariance has rank at most $T < N$, and so there are $N-T$ zero eigenvalues. Since we are interested only in eigenvalues through their empirical spectral density, we can consider the $T\times T$ covariance $\frac{1}{N} R^*R$ for which the dimension ratio $\lambda = T/N$ is less than one and there is no mass at zero in the asymptotic MP law.

The issue with the IVS data matrix $R$ and its residual $\widetilde R = R-\widehat F$ is that we are not dealing with matrices with independent identically distributed entries. We are in fact very far from it, and so it is not at all clear that the empirical spectral density of the residual matrices will be close to the MP law. There are significant correlations among the entries of the residual matrices, even without addressing the normalization issue. As already noted, there is considerable theory on separating the bulk spectrum from the spike eigenvalues for idealized random matrix spike models, and we may also cite the survey, \cite{johnstone2018pca}, and in dealing with normalization issues, \cite{ElKaroui2008}. The theoretical criteria provided in the literature do not work with the IVS data, as expected. Writing a data matrix as a factor matrix plus a residual so that the residual is ``noise", or has no useful information, is a problem that arises often and in many different disciplines, not only with financial data, but also for example in imaging in materials science, \cite{berman2019improved}. With real data, this is almost always treated with a variety of empirical estimation methods whose validity is assessed on the basis of the results produced in specific applications.

For the IVS data, we will fit the empirical spectrum of the residuals to the MP law by matching supports as we now describe. The quality of the fit is quantified by a KS test. The main result of this empirical fit, which works well for the IVS data, is to extract an estimated dimension ratio $\hat\lambda$ and standard deviation $\hat\gamma$ (see Table \ref{tab:eff_dim}).

Let $X_i = S_{ii}^2/N$ be the eigenvalues (normalized) of the IVS data matrix $R$ (normalized). Given the number of factors $d$ that we want to retain, the estimators for for the support $\lambda_\pm$ of the empirical spectral density, or histogram, of the residual that we use are
\begin{align}
\label{eq:lambdaPM_Est}
\hat\lambda_+ = X_{d+1}\qquad \hbox{and} \qquad \hat\lambda_- = X_{T}\ ,
\end{align}
which then using \eqref{eq:MPlmbda} with $\lambda\in(0,1)$ gives
\begin{align}
\label{eq:lambdaGamma_Est}
\hat\gamma = \frac{\sqrt{\hat\lambda_+ }+\sqrt{\hat\lambda_-}}{2} \qquad \hbox{and} \qquad \hat\lambda = \left( \frac{\sqrt{\hat\lambda_+ }-\sqrt{\hat\lambda_-}}{2\hat\gamma}\right)^2\ .
\end{align}

The fitted MP densities to the empirical spectral densities of the residuals with 9 components removed, are shown in Figure \ref{fig:fittedRank_d_Rehiduals} for each of the years from 2012 to 2017. For each of these years a 2-sample KS test does not reject, and increasing $d$ to $10$, $20$, $30$ and $50$ continues to result in a visibly good fit and non-rejection by the KS test. Hence, we conclude that $9$ factors is typically enough to describe the daily systematic movements among all implied volatility surfaces over a single year of daily data. In contrast, for equity returns for the S\&P500 constituents, it typically requires (roughly) 20 factors to account for the majority of daily movement, with the number dropping below 10 during the 2008 financial crisis (see \cite{avellaneda2010}). 

A direct comparison  with \cite{avellanedaDobi2014}, wherein a cutoff for purely random entries was determined to be around 108 factors (out of $\sim$3,000 names), 
is not possible because the dataset in \cite{avellanedaDobi2014} includes equity returns normalised using the strike of the ATM option. This is a more complex data set because the vectorized time series of IVS plus equities (of size $28,000+500$ in the context of this paper) is quite heterogeneous and therefore the ``low-rank plus random" decomposition needs additional attention.

In this paper, the 9 factors (out of $\sim$500 names for IVS) 
describe the systematic movements. Compared to equity returns by themselves, the number of factors are of comparable size for the different years 2012-2017 and $d=20$ factors are needed for the S\&P500 constituents to produce a ``random" residual.

\begin{figure}[t]
    \centering
    \begin{subfigure}[b]{0.3\textwidth}
        \includegraphics[width=\textwidth]{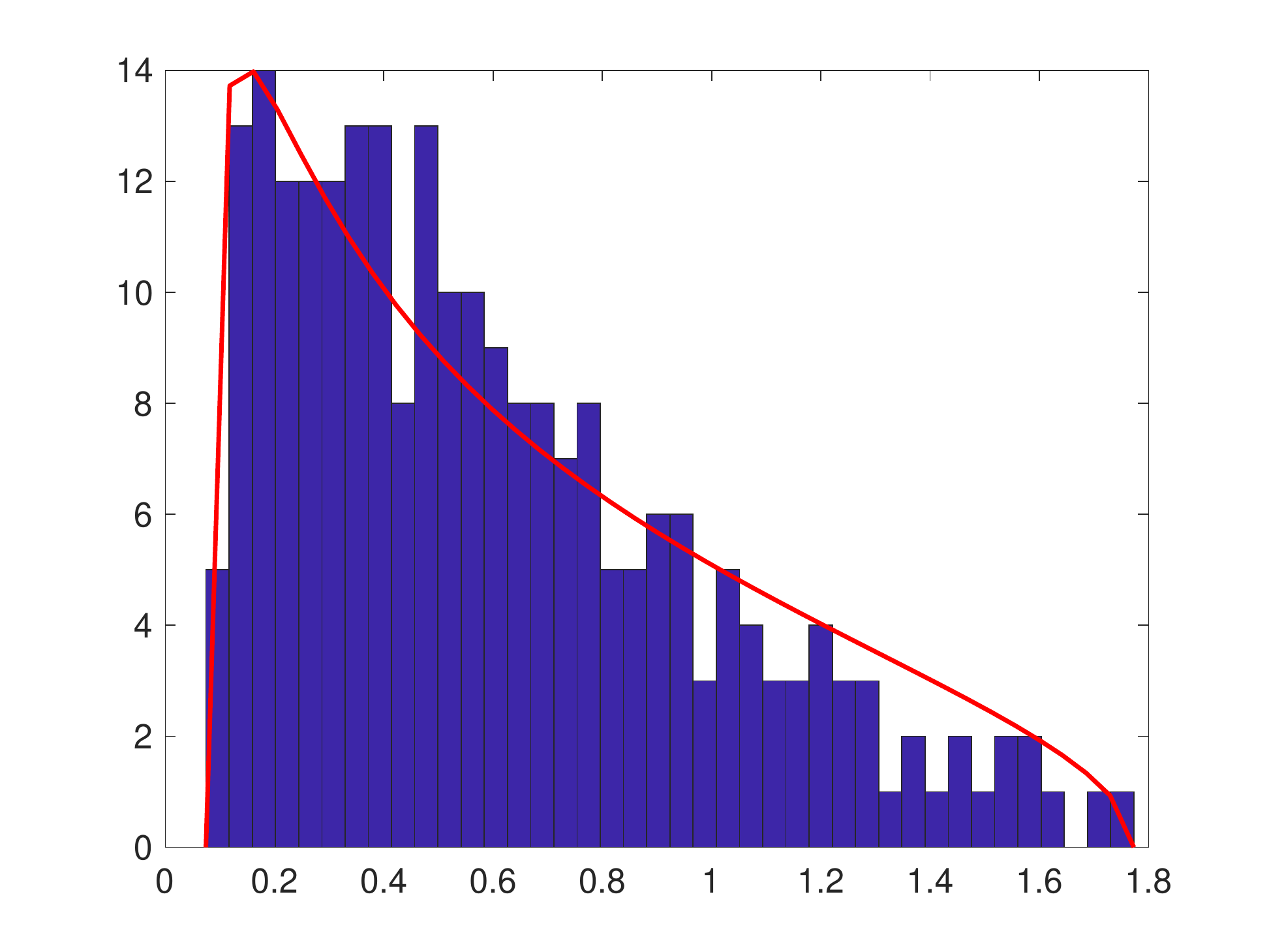}
        \caption{2012}
        \label{fig:2012MPfit}
    \end{subfigure}
\begin{subfigure}[b]{0.3\textwidth}
        \includegraphics[width=\textwidth]{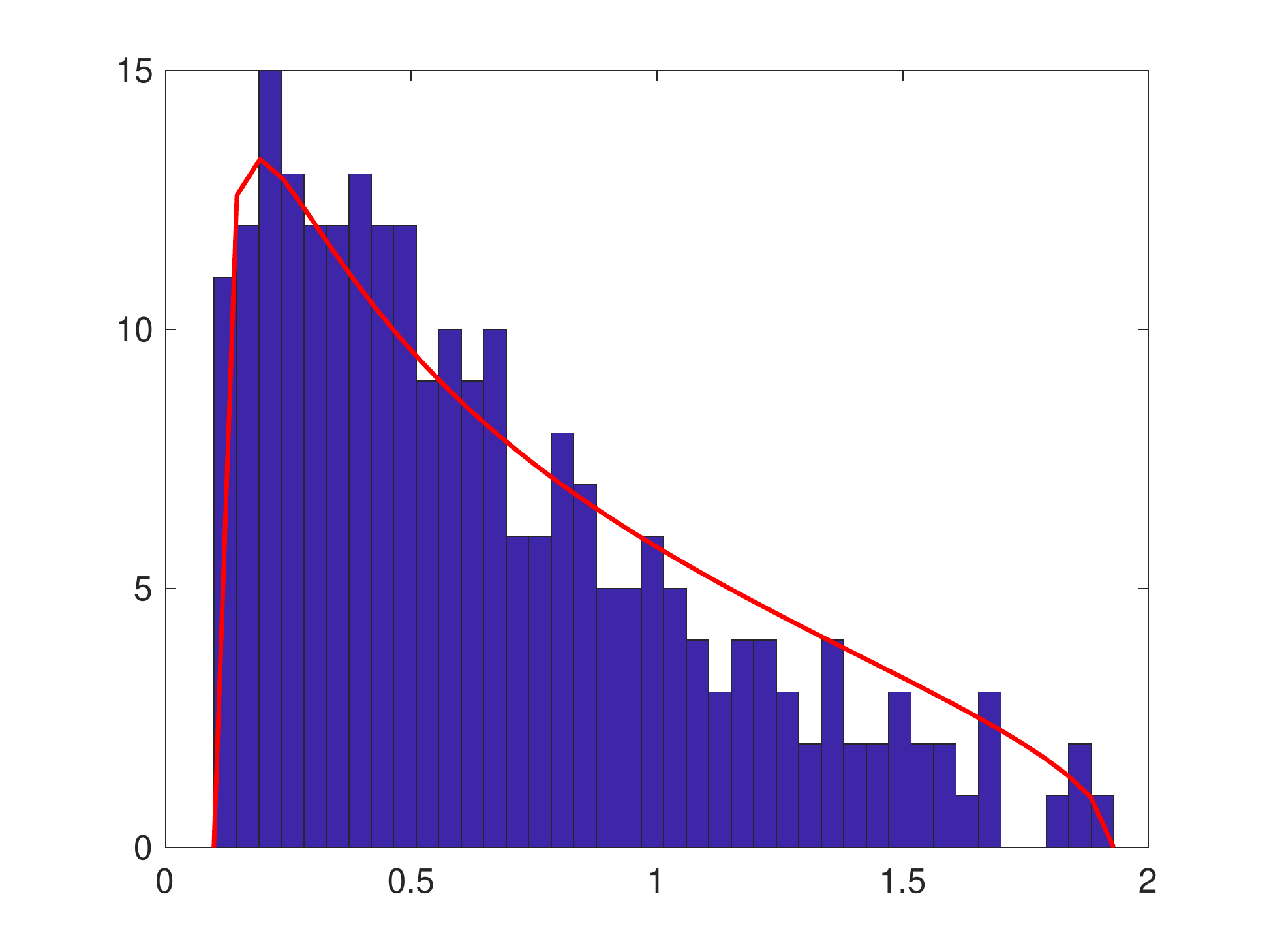}
        \caption{2013}
        \label{fig:2013MPfit}
    \end{subfigure}      
\begin{subfigure}[b]{0.3\textwidth}
        \includegraphics[width=\textwidth]{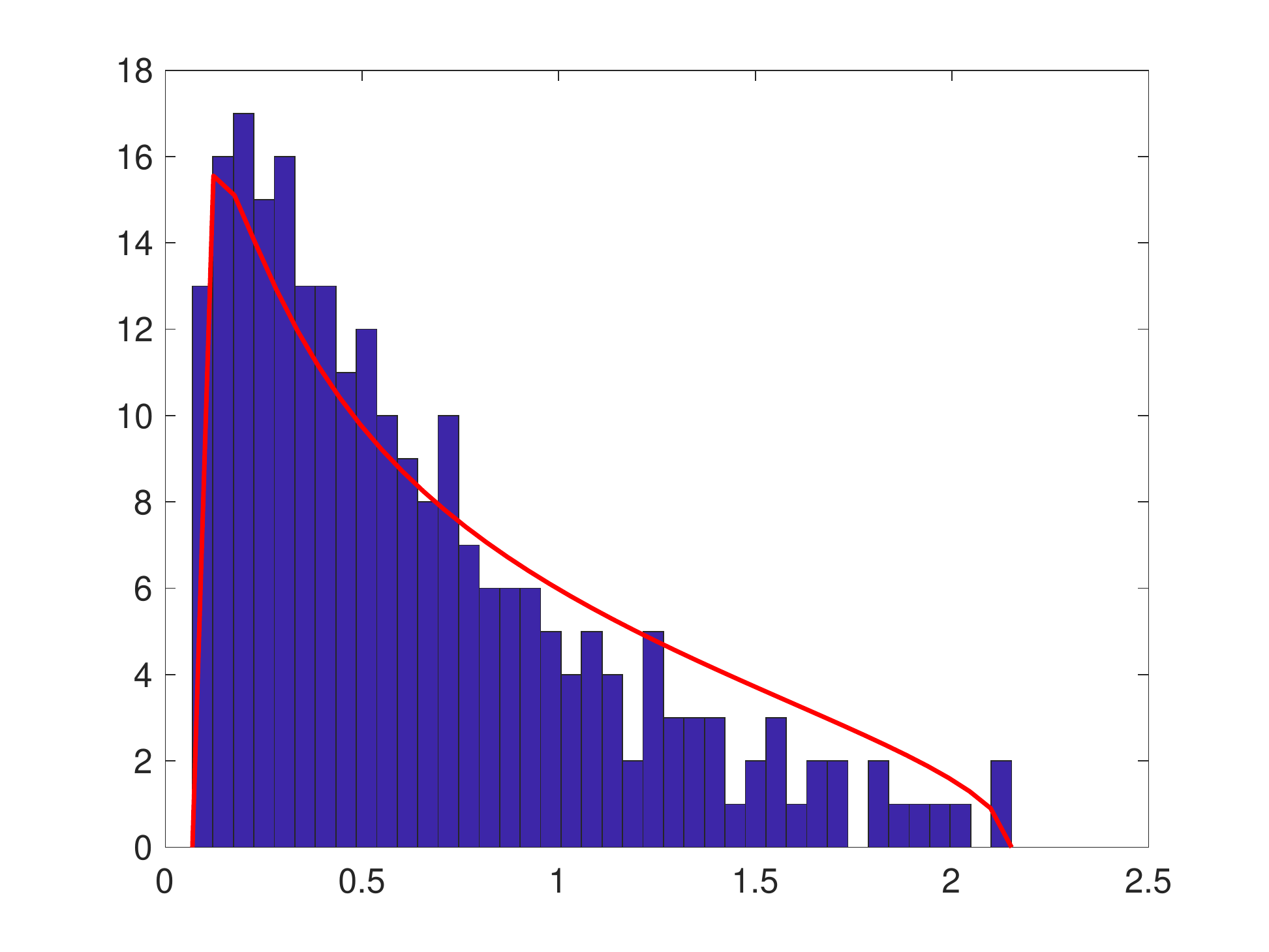}
        \caption{2014}
        \label{fig:2014MPfit}
    \end{subfigure}\\
\begin{subfigure}[b]{0.3\textwidth}
        \includegraphics[width=\textwidth]{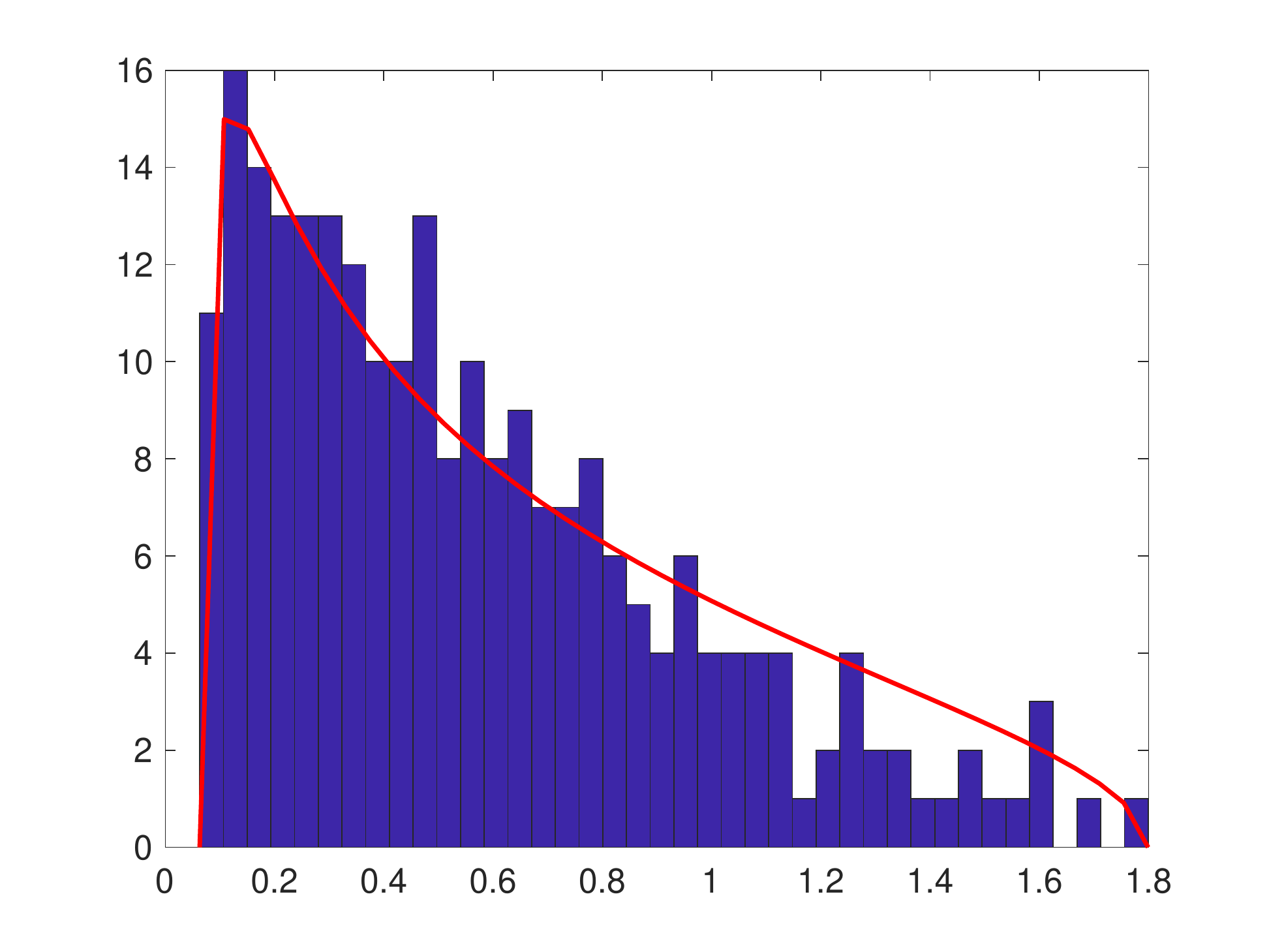}
        \caption{2015}
        \label{fig:2015MPfit}
    \end{subfigure}
\begin{subfigure}[b]{0.3\textwidth}
        \includegraphics[width=\textwidth]{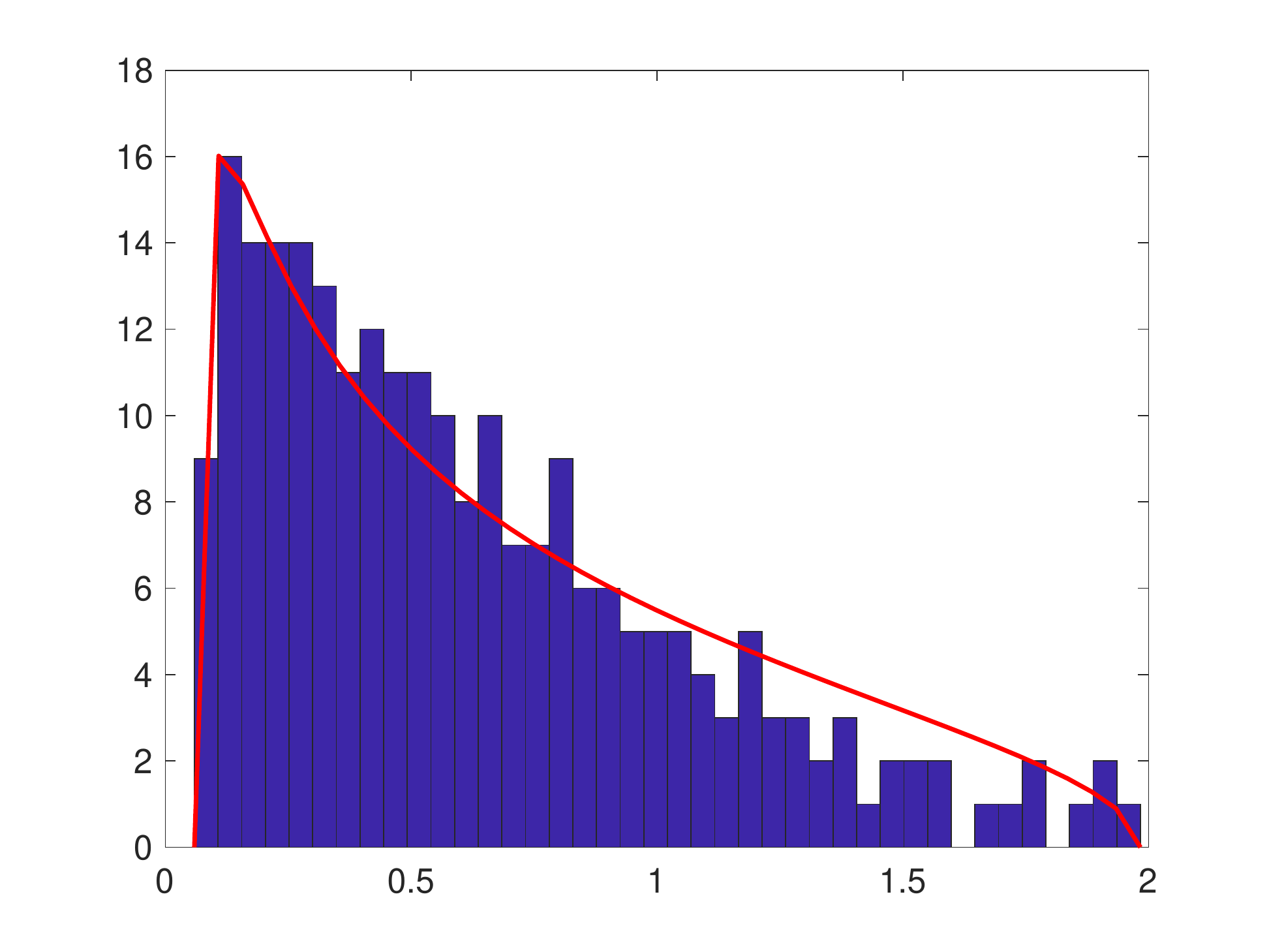}
        \caption{2016}
        \label{fig:2016MPfit}
    \end{subfigure}
\begin{subfigure}[b]{0.3\textwidth}
        \includegraphics[width=\textwidth]{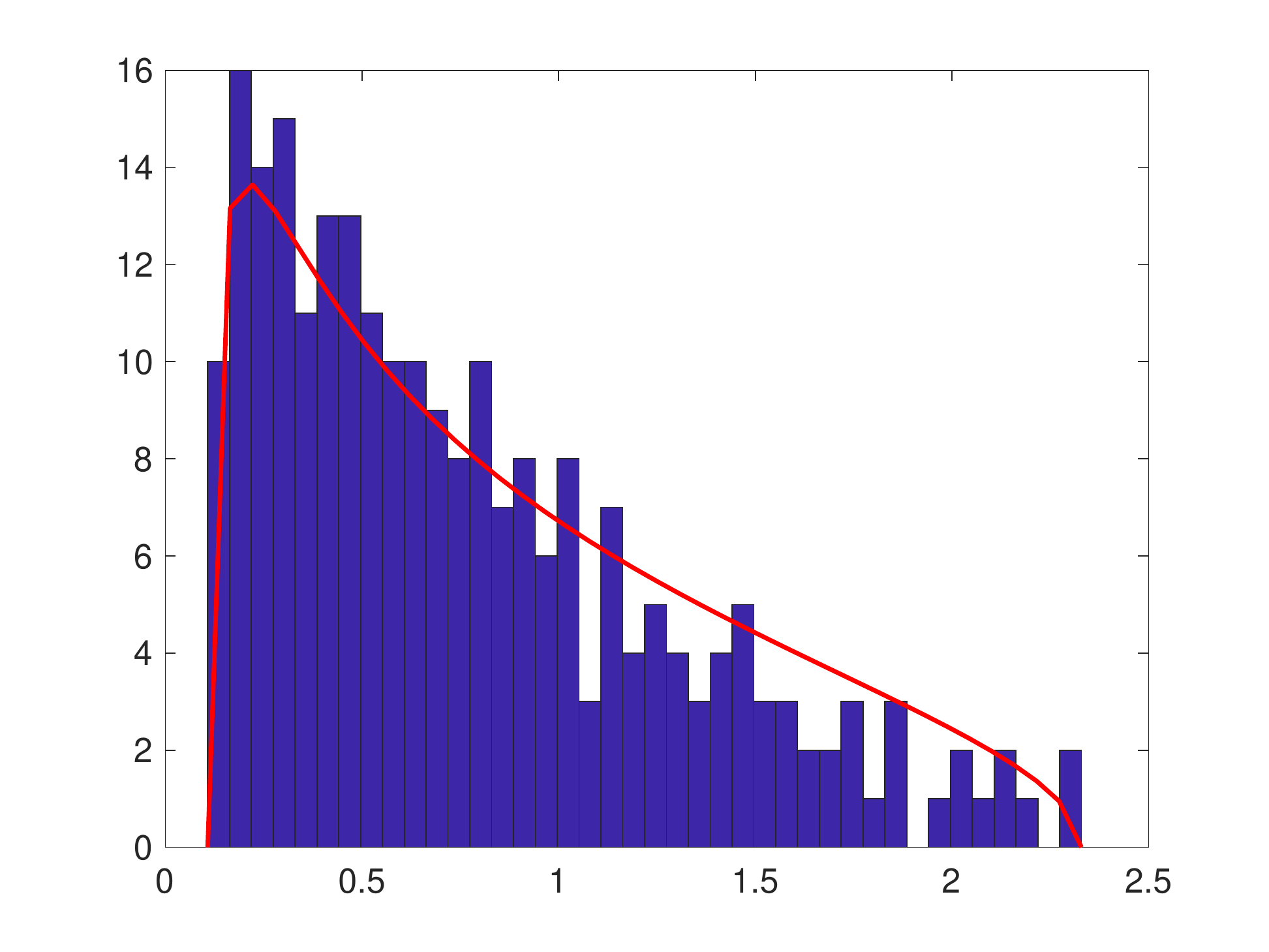}
        \caption{2017}
        \label{fig:2017MPfit}
    \end{subfigure}
     \caption{For each of the years from 2012 to 2017, we removed the top $d=9$ principal singular values and fitted a MP density to the histogram of the remaining $T-d$ squared singular values over $N$, i.e., $S_{ii}^2/N$ for $i>d$. For each of these years, a 2-sample Kolmogorov-Smirnoff test does not reject. Removal of more than 9 components also leads to non-rejection of the hypothesis that the residual matrices $\widetilde R$ have purely random entries.
     \label{fig:fittedRank_d_Rehiduals}}
\end{figure}

\subsection{Effective Dimension in Residual Matrices}

We will now introduce the notion of effective dimension of the data as follows. The parameter estimates given in \eqref{eq:lambdaPM_Est} and \eqref{eq:lambdaGamma_Est} for the dimension ratio $\lambda =T/N$ and the data standard deviation $\gamma$ imply that the empirical spectral density of the data fits a Marchenko-Pastur distribution associated with a random matrix whose dimensions are different from those of the actual data matrix $R$. This is because the IVS data has entries with significant correlations between them, and the effective dimension idea is a way to account for or quantify this feature. We could say roughly that the residual data matrix does not have independent entries, but it does have some kind of independence when the data is grouped in blocks and the number of such blocks plays the role of an effective dimension. This assessment is made using the empirical spectral density of the residual.

Denote by $\widetilde N = T/\hat\lambda $, which we call this the effective dimension associated with a residual that is purely random. In other words, the $N$-dimensional columns of $U_{d+1},U_{d+2},\dots$ and $U_{T}$ do not affect the spectrum and we have
\begin{equation}
    \label{eq:effDimEquiv}
    \frac{1}{N}\widetilde R^*\widetilde R= \frac{1}{N}\sum_{i=d+1}^TS_{ii}^2 V_iV_i^* \stackrel{\mathcal D}{\approx} \frac{1}{\widetilde N}Y^*Y\ ,
\end{equation}
where $Y$ is a $\widetilde N\times T$ matrix with purely random entries, and where ``$\stackrel{\mathcal D}{\approx}$" denotes approximate equality in distribution, in the sense of the fitting of the empirical spectral density to the MP law that was described in the previous section. Hence, we are looking for pure randomness in the temporal loadings, and we are not concerned if there is non-random structure remaining in the higher-order spatial components. In fact, for the implied volatility data there are clear patterns of non-random structure in the residual $\widetilde R$ even for $d$ large enough for non-rejection of the randomness hypothesis, which is clearly seen in Figure \ref{fig:Raw_Residuals_for_2017_IVS} for the year 2017 data. The reason for the non-random patterns is simple: the implied volatility's 4-dimensional tensor structure was flattened into a 2-dimensional $N\times T$ matrix using a lexicographical ordering for vectorizing the IVS data (name, strike and maturity) that prevails even after removal of PCA factors. This patterned structure does not mean that we have incorrectly concluded "pure" randomness in the residual as seen by the empirical spectral density, but rather it suggests that the spatial modes (name, strike and maturity) retain some of their structure and that the randomness we have concluded from the KS test is due to randomness in the temporal loadings. Indeed, randomness of temporal loadings is precisely what is suggested by the approximate distributional equivalence expressed in \eqref{eq:effDimEquiv}.
\begin{figure}[t]
\center
\includegraphics[width=0.6\linewidth]{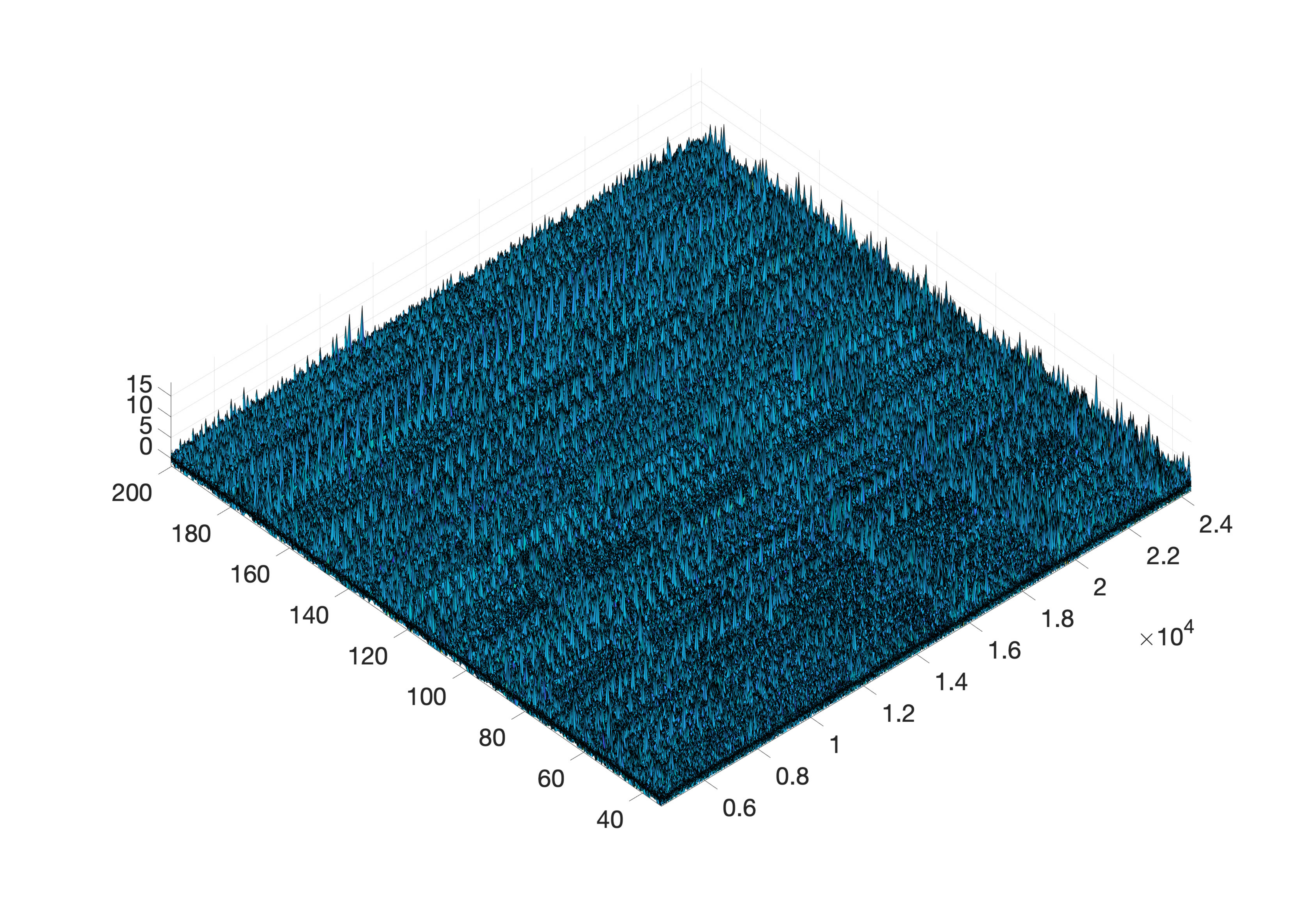} 
\caption{Non-random structure can be seen in the residual matrix $\widetilde R$ even after sufficiently many components have been removed for non-rejection by the KS test. The horizontal coordinate on the right is IVS name, delta and maturity (vectorized), and that on the left is time. This patterned structure suggests that the spatial modes are non-random and that the randomness we have concluded from the KS test is due to randomness in the temporal loadings. Random temporal loadings is precisely what is suggested in \eqref{eq:effDimEquiv}.}
\label{fig:Raw_Residuals_for_2017_IVS}
\end{figure}
\begin{figure}
\center
\includegraphics[width=0.6\linewidth]{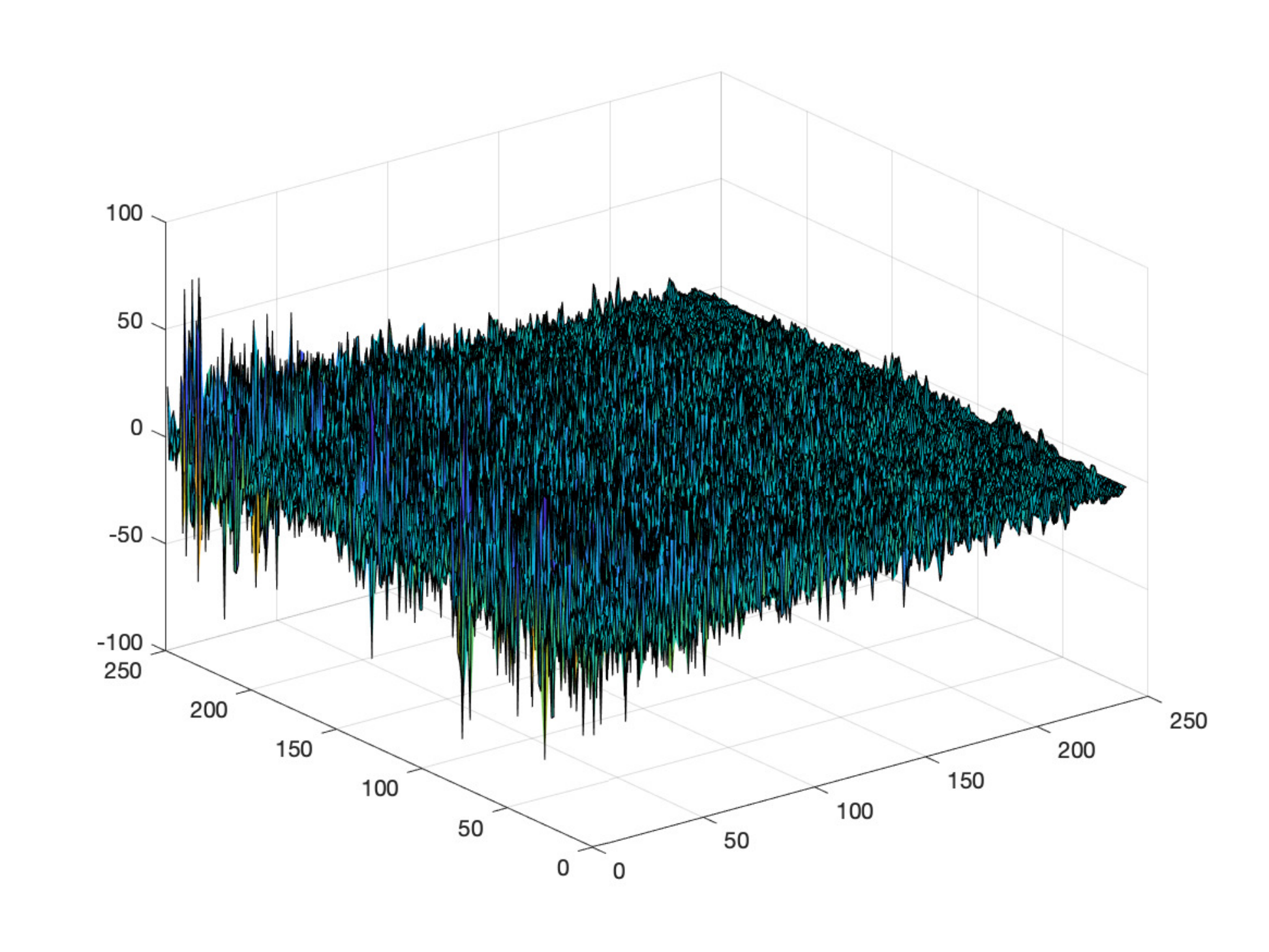} 
\caption{The matrix $S\widetilde V^*$ of temporal loadings for the 2017 data, of dimension $(T-d)\times T$ with $d=9$. The distinctive tapering is not an indication of non-randomness in the residual, but instead says that $S\widetilde V^*/N$ is close in distribution to $\Sigma Q^*$ where $Q\Sigma Q^*$ is the spectral form of another matrix $Y^*Y/\widetilde N$ and $Y$ is an $\widetilde N\times T$ matrix with purely random entries of variance $\hat\gamma^2$. }
\label{fig:EPS_Residuals_for_2017_IVS}
\end{figure}
\begin{table}
    \centering
    \begin{tabular}{c|c|c|c|c|c}
    year&$\hat\lambda_+$&$\hat\lambda_-$&$\hat\gamma$&$\hat\lambda$&$\widetilde N$\\
    \hline
      2012   &1.77 & 0.07 & 0.80 & 0.44 & 573 \\
      2013   &1.93 & 0.10 & 0.85 & 0.40 & 633 \\
      2014   &2.15 & 0.07 & 0.86 & 0.49 & 519 \\
      2015   &1.80 & 0.06 & 0.80 & 0.47 & 537 \\
      2016   &1.98 & 0.06 & 0.83 & 0.50 & 506 \\
      2017   &2.33 & 0.11 & 0.93 & 0.42 & 601 
    \end{tabular}
    \caption{The estimated parameters for the MP distribution and the effective dimension after the removal of 9 principal components, for each of the 251 or 250 IVS daily returns observed in each year. The data points are $X_i= S_{ii}^2/N$ for $i>9$, the estimates are given by \eqref{eq:lambdaPM_Est} and \eqref{eq:lambdaGamma_Est}, and the effective dimension is $\widetilde N = T/\hat \lambda$. Notice in particular that $\widetilde N\ll N \sim 25,000$. }
    \label{tab:eff_dim}
\end{table}

Denote the $t^{th}$ column of $R$ as $R_t$, which we can write as
\[R_t = \sum_{i=1}^d f_i\theta_{it} + \sum_{i=d+1}^TS_{ii}U_iV_{it}\ ,\]
where $\theta_{it}$ and $V_{it}$ are the $t^{th}$ entry of $\theta_i$ and $V_i$, respectively; $U_i$ is the $i^{th}$ column of $U$. Clearly $\theta_{it}$ are the temporal loadings on the $i^{th}$ principal factor, and for $i>d$ the temporal loadings are $S_{ii}V_{it}$. Denote the higher-order temporal modes as $\widetilde V = [V_{d+1},\dots, V_T]$. The KS test has indicated randomness of these loadings for $i>d$, which means the $(T-d) \times T$ matrix $S \widetilde V^*$ has covariance spectrum close in distribution to that of a random matrix. But this matrix has a distinctive tapering that would counter any claim of pure randomness. Indeed, Figure \ref{fig:EPS_Residuals_for_2017_IVS} shows this tapering for the 2017 data for the $S_{ii}^2/N$ with $i>9$, that is, with $d=9$. However, if one considers a purely random matrix $Y$ of dimension $T\times \widetilde N$ with entries having variance $\hat\gamma^2$, and expresses it in spectral form such that $Y^*Y/\widetilde N=Q\Sigma Q^*$, then $\Sigma Q^*$ is a tapered matrix that is derived from a purely random matrix. Moreover, the squared singular values of $\Sigma Q^*$ fit the MP law and are equal in distribution to the squared singular values of $S\widetilde V^*/N$. 

The notion of effective dimension as we have introduced it here is robust and useful in understanding the real information content of the IVS data. Note in particular that in Table \ref{tab:eff_dim}, which has the parameter estimators for the MP fit as well as the effective dimension for each of the 6 years we've considered, a striking finding is that we have a very low effective dimension for each year, namely $500\sim\widetilde N\ll N\sim 25,000$. The effective dimension of the IVS residuals in a factor decomposition indicates that there is structure in these residuals; they are not purely random matrices.

\section{Principal Eigenportfolios and OI-Weighted Indices}
\label{sec:eigenPortfolios}

The portfolio constructed from the first eigenvector of the normalized covariance, the eigenportfolio, has been analyzed in \cite{avellaneda2010} and \cite{boyle2014} and elsewhere for equities returns. In this section, we will apply a similar approach and terminology to construct a portfolio of implied volatilities and compare it to an analog of a market portfolio. Although such portfolios can exist in theory, they are not directly\footnote{A portfolio can be constructed with returns equal to a linear combination of implied volatilities returns, i.e., a linear combination of the $r_i(t)$'s.  However, such a portfolio would probably incur a roll yield due to the daily rebalancing required to maintain a constant maturity and constant delta position.} tradeable. However it is still of interest to introduce them and use them as a proxy for the market's collective volatility, similar to the VIX index. 

Central to our analysis of the eigenportfolio is its ``closeness" to an open-interest (OI)-weighted factor of our construction. Analogous to the case in equities where the eigenportfolio's returns track the capitalization-weighted market portfolio (with varying degrees of closeness), we show that the implied volatility eigenportfolio's returns track closely to various OI-weighted factors or indices. It is essential to include the OI as it is the implied volatilities' analogue to capitalization, thereby capturing each option contract's importance.\footnote{In practice, the CBOE's VIX calculation uses OI too. Specifically, the VIX construction includes only SPX options that are between the at-the-money mark and the last strike before the first two consecutive strikes that have zero OI.} Both in the analysis of implied volatilities and equities, the underlying spike model (see \cite{benaych2011eigenvalues}) provides a mathematical framework for closeness of the eigenportfolio and factor returns. Figure \ref{fig:2017_ATM_OI} shows daily OI amounts for at-the-money options for the year 2017. It is important to note the spikes and the zeros in the OI. Most options have zero open trades at any point in time, which is an indication that a rather small percentage of options can explain changes in implied volatilities.

\begin{figure}
    \centering
\begin{subfigure}[b]{0.4\textwidth}
        \includegraphics[width=\textwidth]{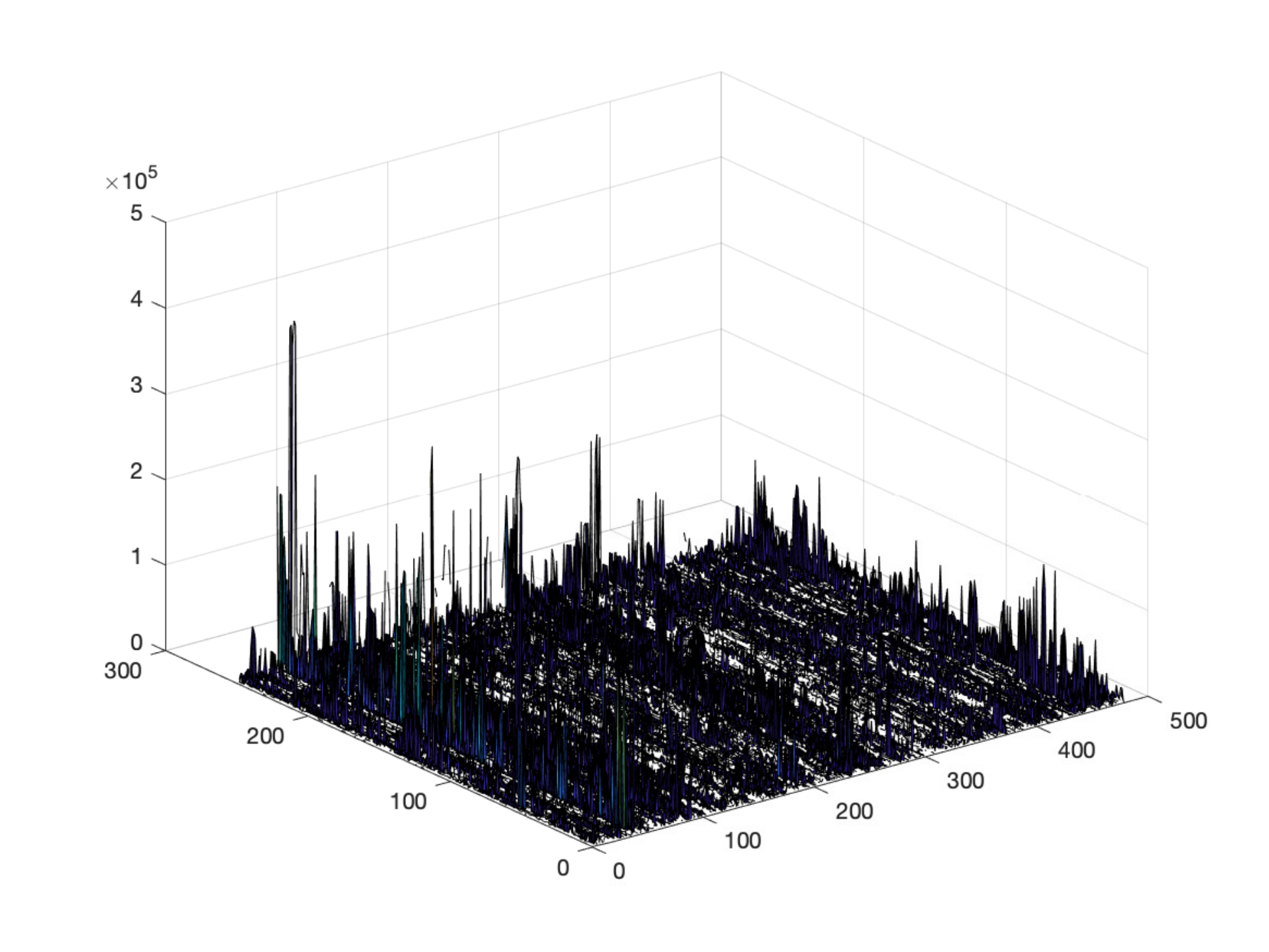}
        \caption{ATM Front Month}
        \label{fig:2017_ATM_OI_1sr}
    \end{subfigure}
\begin{subfigure}[b]{0.4\textwidth}
        \includegraphics[width=\textwidth]{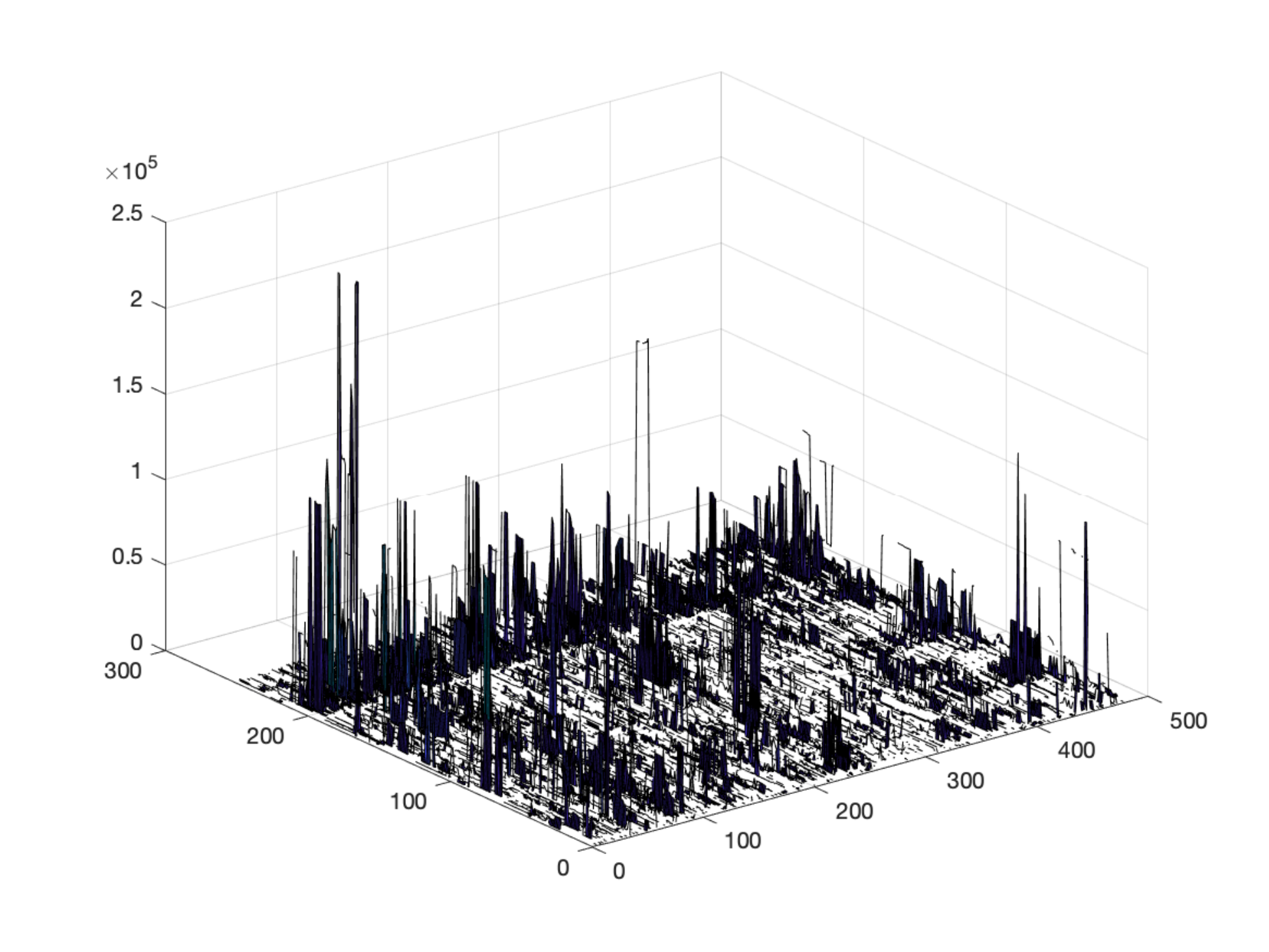}
        \caption{ATM Third Month}
        \label{fig:2017_ATM_OI_3rd}
    \end{subfigure}
     \caption{The daily values of open interest (OI) for the at-the-money options with 30 days to maturity (the front month) and 90 days to maturity (third month). The horizontal coordinate on the left is time and on the right option name. Note the spikes and the zeros in the OI, which indicate that only a small percentage of options can explain changes in implied volatilities because most options do not have open trades any on given day. \label{fig:2017_ATM_OI}}
\end{figure}

\subsection{``Trading" Implied Volatility and Construction of OI-Based Factors}

It is not only OI, however, that must be taken into consideration in constructing IVS returns portfolios. We must also account for the sensitivity of IVS to fluctuations as measured by the Vega, for example. We can motivate this sensitivity in portfolios we are about to create by first introducing a synthetic market of exchange-traded notes (ETNs). For each option we can think of an ETN whose prospectus states the daily returns to be 
\begin{align}
    \label{eq:ETNreturns}
    \frac{dE_i(t)}{E_i(t)}: = r_{i}(t)\ ,
\end{align}
where $r_i(t)$'s are the components of the implied volatility return vector defined in \eqref{eq:returns}, and $dt=1/252=$ 1 day. The ETN whose returns are given by \eqref{eq:ETNreturns} are the stochastic component in the returns of a $\Delta$-neutral options position. Indeed, letting $C_i(t)$ denote the option price with underlying price, time-to-maturity and delta $(S_i(t),\tau_i,\Delta_i)$, from It\^o's lemma we compute the (unitless) differential of a $\Delta$-neutral position (up to a term of size ``Big-Oh"),
\begin{equation}
    \label{eq:deltaNeutralReturns}
    \frac{dC_i(t) - \Delta_idS_i(t)}{S_i(t)} = \frac{\mathcal V_i(t)}{S_i(t)}d \hat\sigma_i(t)+O(dt)= \mathcal V_i^{untls}(t)r_i(t)+O(dt)\ ,
\end{equation}
where $\mathcal V_i(t)$ is the Vega for the $i^{th}$ option and
\begin{align}
\label{eq:unitlessvega}
 \mathcal V_i^{untls}(t) = \frac{\hat\sigma_i(t)\mathcal V_i(t)}{S_i(t)}\ ,
\end{align}
is a unitless Vega, which is the dollar-Vega divided by the price of the underlying. We will use \eqref{eq:unitlessvega} in constructing weighing factors, in addition to OI, after we first discuss them more generally. 

We would like to construct a global factor that can describe upwards of 50\% of daily variance for all of the IVS ETNs. In equities, such a factor is the market portfolio, which suggests to us that the number of outstanding shares (contracts) should have some bearing on the relevance of an individual equity in factor construction. Indeed, in options this is precisely the OI, and a general form for a global factor with only OI weighing is
\begin{align}
    \label{eq:OIfactor}
    \frac{dQ(t)}{Q(t)}:=\frac{\sum_i\omega\Big(\mathcal{O I}_i(t)\Big)r_i(t)}{\sum_i\omega\Big(\mathcal{O I}_i(t)\Big)}\ ,
\end{align}
where $\mathcal{O I}_i(t)$ is the OI for the $i^{th}$ option and $\omega(~\cdot~)$ is a weighting function of our choosing. In the simplest case we have $d=1$ in \eqref{eq:lowRankMatrixModel}, and the ETN returns have a simple factor-based returns model,
\begin{align}
    \label{eq:CAPM_return_model}
    \frac{dE_i(t)}{E_i(t)}= \beta_i\frac{dQ(t)}{Q(t)} +\xi_i(t)\ ,
\end{align}
where $\xi_i$ is an idiosyncratic noise component independent of $Q(t)$. Ordinary least squares regression shows us that the $\beta_i$'s are given by the covariance with the factor, 

\begin{align}
\beta_i = \hbox{cov}\left(\frac{dE_i}{E_i},\frac{dQ}{Q}\right)/h_q^2  \ ,  
\end{align} 
where $h_q^2 = \hbox{var}(\frac{dQ}{Q}) $. 

\begin{remark}
\cite{boyle2014} explains how the principal eigenportfolio is a frontier portfolio if the Perron-Frobenius theorem applies. However, for options the frontier/CAPM theory does not apply because the lifetime of an option is too short. Therefore, comparisons with equities are merely an informal, statistical analogy. 
\end{remark}

\subsection{The Spike Model and the Principal Eigenportfolio}

Let us consider the $N\times N$ empirical covariance matrix for the returns of the synthetic ETNs,
\begin{align*}
\widehat\Sigma_{ij} := \frac{1}{T-1} \sum_{t=1}^T (r_i(t)-\bar r_i)(r_j(t)-\bar r_j)\qquad\hbox{for }1\leq i,j\leq N .
\end{align*}
Using the returns model in \eqref{eq:CAPM_return_model}, the population covariance matrix is
\begin{equation}
    \label{eq:equitySpikeModel}
    \Sigma
= h_q^2\beta\beta^* + \Omega\ ,
\end{equation}
where $\Omega$ is the covariance matrix with $\Omega_{ij} = \hbox{cov}(\xi_i(\cdot),\xi_j(\cdot))$, and $h_q^2 = \hbox{var}\left(\frac{dQ(\cdot)}{Q(\cdot)}\right)$. Equation \eqref{eq:equitySpikeModel} is a spike model, as referred to above, because the $\beta$'s describe a substantial portion of variance and cause a single eigenvalue to stick out from the rest of the spectrum. The $1^{st}$ principal component of the ETN empirical covariance matrix will be nearly proportional to the $\beta$'s of the OI-weighted portfolio if the $\xi(t)$ covariance is not too large. For the model in \eqref{eq:lowRankMatrixModel}, the distribution of the empirical matrix principal component is shown in \cite{benaych2011eigenvalues} to be within a cone surrounding the spike model's low-rank component if the difference between $\|\theta\|^2$ and the variances of the noise is over a critical amount. For the spike model in \eqref{eq:equitySpikeModel}, the critical threshold is crossed if $h_q^2\|\beta\|^2$ exceeds a threshold determined by the covariances of the $\xi_i(t)$'s, which should happen as $N$ grows.

In finance there usually are differing sizes among the $\Sigma_{ii}$'s, which means better statistical estimation of principal eigenvectors results from consideration of correlations rather than covariances, which is the normalization issue we noted earlier. The ETNs' empirical correlation matrix is
\[\widehat\rho =  h^{-1}\widehat \Sigma  h^{-1}\ ,\]
where $ h$ is a diagonal matrix of standard deviations $h_i$ defined in \eqref{eq:standardizedReturns}. Letting $u_1$ denote the principal eigenvector of $\widehat\rho$, the spike model suggests $u_1\approx \frac{1}{c}h^{-1}\beta$ for $c$ a normalizing constant. Using another (orthogonal) eigenvector $\widetilde u$ such that $\widetilde u\perp u_1$, we can also construct portfolios as done in \cite{avellaneda2010},
\begin{align*}
      \pi_1=\frac{ h^{-1}u_1}{\sum_i( h^{-1}u_1)_i}\ ,\hspace{1.5cm}
    \widetilde \pi=\frac{ h^{-1}\widetilde u}{\sum_i( h^{-1}\widetilde u)_i}\ ,
\end{align*} 
which are orthogonal in the sense that covariance of these portfolios' returns is zero,
\[\pi_1^*\widehat\Sigma\widetilde \pi = u_1^*\widehat\rho~ \widetilde u= \lambda_1u_1^*\widetilde u = 0\ ,\]
where $\lambda_1>0$ is the principal eigenvector such that $\widehat\rho u_1 = \lambda_1u_1$. 

\begin{proposition}
    Returns of the top eigenportfolio tend toward the factor returns plus some tracking error. Returns of the orthogonal portfolios tend toward factor neutrality.
\end{proposition}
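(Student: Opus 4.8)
The plan is to reduce both assertions to the behaviour of the two portfolio factor loadings $\pi_1^*\beta$ and $\widetilde\pi^*\beta$. Using the returns model \eqref{eq:CAPM_return_model}, the realized return of any portfolio $\pi$ with $\sum_i\pi_i=1$ is
\[
R_\pi(t):=\sum_i\pi_i\,r_i(t)=(\pi^*\beta)\,\frac{dQ(t)}{Q(t)}+\pi^*\xi(t),
\]
i.e.\ a scalar multiple of the factor return plus the portfolio's own idiosyncratic residual $\pi^*\xi(t)$, whose variance is $\pi^*\Omega\pi$. So it suffices to show: (i) $\pi_1^*\beta$ converges to a fixed positive constant while $\pi_1^*\Omega\pi_1\to0$ (the top eigenportfolio is the factor plus vanishing tracking error); and (ii) $\widetilde\pi^*\beta\to0$ (the orthogonal portfolios are factor-neutral). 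Throughout I would work in the regime $N,T\to\infty$, $T/N\to\lambda\in(0,1)$, with the idiosyncratic covariance controlled ($\|\Omega\|$ bounded, weights suitably diversified).

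For the top eigenportfolio, I would invoke the spike-model identification $u_1\approx\frac1c h^{-1}\beta$ recorded above the proposition, made precise by \cite{benaych2011eigenvalues}: the population correlation $h^{-1}\Sigma h^{-1}=h_q^2(h^{-1}\beta)(h^{-1}\beta)^*+h^{-1}\Omega h^{-1}$ carries a rank-one spike of strength $h_q^2\|h^{-1}\beta\|^2$, which is $\Theta(N)$ and hence well above the Benaych-Georges--Nadakuditi threshold, so $\langle u_1,\,h^{-1}\beta/\|h^{-1}\beta\|\rangle\to1$ and $\pi_1$ converges entrywise to the weights proportional to $\beta_i/h_i^2$ (normalized to sum to one). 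Substituting, $\pi_1^*\beta\to B:=\frac{\sum_i\beta_i^2/h_i^2}{\sum_i\beta_i/h_i^2}>0$, so $R_{\pi_1}(t)\to B\,dQ(t)/Q(t)+\pi_1^*\xi(t)$. The residual is then killed by a law-of-large-numbers estimate: since $\pi_1$ is a normalized average with $\max_i|(\pi_1)_i|\to0$ and $\|\Omega\|$ bounded, $\pi_1^*\Omega\pi_1\le\|\Omega\|\,\|\pi_1\|^2\to0$; equivalently $\mathrm{corr}(R_{\pi_1},dQ/Q)\to1$. This is the first assertion.

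For the orthogonal portfolios I would exploit the exact identity $\pi_1^*\widehat\Sigma\widetilde\pi=0$ established in the excerpt. Replacing $\widehat\Sigma$ by its population form $\Sigma=h_q^2\beta\beta^*+\Omega$ (up to sampling error) and using $\beta^*\pi_1\to B$ gives
\[
h_q^2\,B\,(\beta^*\widetilde\pi)+\pi_1^*\Omega\widetilde\pi\approx0,
\qquad\text{hence}\qquad
\beta^*\widetilde\pi\approx-\frac{\pi_1^*\Omega\widetilde\pi}{h_q^2\,B}.
\]
Bounding $|\pi_1^*\Omega\widetilde\pi|\le\|\Omega\|\,\|\pi_1\|\,\|\widetilde\pi\|$, with $\|\pi_1\|\to0$ (Perron-type weights of order $N^{-1/2}$), $\|\widetilde\pi\|=O(1)$, and $h_q^2B$ bounded away from zero, one gets $\beta^*\widetilde\pi=O(\|\Omega\|/\sqrt N)\to0$ whenever $\|\Omega\|$ is controlled --- the same condition that makes the spike super-critical. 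Hence $R_{\widetilde\pi}(t)\to\widetilde\pi^*\xi(t)$, which is uncorrelated with $dQ/Q$ by the independence of $\xi$ and $Q$ in \eqref{eq:CAPM_return_model}: asymptotic factor neutrality.

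The hard part will be making \emph{tend toward} quantitative and honest, which requires simultaneous control of three effects: (a) only the empirical normalized correlation $\widehat\rho$ is observed, not the population correlation, so the spike-model identification of $u_1$ must survive the standardization, itself a delicate matter (cf.\ \cite{ElKaroui2008}); (b) the spike strength must genuinely clear the bulk edge of $h^{-1}\Omega h^{-1}$ as $N$ grows, which needs an explicit hypothesis on how $\beta$ and $\Omega$ scale; and (c) the diversification requirement $\max_i|(\pi_1)_i|\to0$ --- equivalently $\ell^2$-smallness of the weights $\beta_i/h_i^2$ --- is far from automatic, since open interest (hence arguably the economically meaningful weight) is concentrated on a small fraction of contracts, as stressed around Figure~\ref{fig:2017_ATM_OI}. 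Absent explicit control of (a)--(c) the statement is best read as a heuristic consequence of the spike model, consistent with the informal statistical-analogy status the paper itself assigns to it.
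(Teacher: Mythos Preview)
Your proposal is correct and shares the paper's core idea---invoke the spike-model identification $u_1\approx\tfrac1c\,h^{-1}\beta$ from \cite{benaych2011eigenvalues} and read off both conclusions---but differs from the paper's argument in two places worth noting.

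First, the paper does \emph{not} attempt to kill the tracking error. The proposition says ``plus some tracking error,'' and the paper's proof simply sets $\varepsilon_1(t)=\lim_N\sum_i\pi_{1i}\xi_i(t)$ and leaves it there; no law-of-large-numbers bound $\pi_1^*\Omega\pi_1\to0$ is claimed. Your extra step is stronger than required and, as you yourself flag in point~(c), needs a diversification hypothesis that the paper does not impose.

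Second, for factor neutrality the paper takes a more direct route than your covariance-identity argument. Once $u_1\approx\tfrac1c\,h^{-1}\beta$ (equivalently $\beta\approx c\,h u_1$), one substitutes directly:
\[
\widetilde\pi^*\beta\;\propto\;(h^{-1}\widetilde u)^*\beta\;\approx\;c\,\widetilde u^*h^{-1}h\,u_1\;=\;c\,\widetilde u^*u_1\;=\;0,
\]
using only the orthogonality of the eigenvectors of $\widehat\rho$. Your route through $\pi_1^*\widehat\Sigma\widetilde\pi=0$ and the population substitution $\widehat\Sigma\approx h_q^2\beta\beta^*+\Omega$ reaches the same conclusion but incurs the cross term $\pi_1^*\Omega\widetilde\pi$ (and, strictly, the sampling error $\pi_1^*(\widehat\Sigma-\Sigma)\widetilde\pi$), both of which then need bounding---a detour the paper sidesteps. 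The paper's direct substitution also avoids your assumption $\|\widetilde\pi\|=O(1)$, which is not obvious since the normalizing denominator $\sum_i(h^{-1}\widetilde u)_i$ can be small for sign-mixed eigenvectors.

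Your closing caveats (a)--(c) are well taken: the paper's own proof is at the same heuristic level you describe, with the ``$\approx$'' made exact only in the limit as the spike clears the critical threshold.
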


\begin{proof}
Assuming the parameters are such that we are over the critical levels in \cite{benaych2011eigenvalues} and letting $EP_1(t)$ be
the principal eigenportfolio, we have
\begin{align*}
    &\frac{dEP_1(t)}{EP_1(t)} -\frac{\sum_i(\beta_i/h_i)^2}{\sum_i\beta_i/h_i^2}\frac{ dQ(t)}{Q(t)}\\
    &=\left( \sum_i \left(\pi_{1i}-\frac{\beta_i/h_i^2}{\sum_j\beta_j/h_j^2}\right)\beta_i\right)\frac{ dQ(t)}{Q(t)} + \sum_i\pi_{1i}\xi_i(t)\\
    &\rightarrow \varepsilon_1(t)\ ,
\end{align*}
where factor returns disappear because $\pi_{1i}-\frac{\beta_i/h_i^2}{\sum_j\beta_j/h_j^2}\rightarrow 0$ as the random matrix's dimensions grow, and where $\varepsilon_1(t)=\lim_N\sum_i\pi_{1i}\xi_i(t)$ is the tracking error. All orthogonal portfolios $\widetilde \pi$ are approximately factor neutral,
\[\frac{d\widetilde{EP}(t)}{\widetilde{EP}(t)}  
=\underbrace{\left( \sum_i \widetilde\pi_i\beta_i\right)}_{\approx 0}\frac{dQ(t)}{Q(t)}  + \sum_i\widetilde\pi_i\xi_i(t) \rightarrow \widetilde\varepsilon(t)\ ,\]
where $\widetilde\varepsilon(t) = \lim_N \sum_i\widetilde\pi_i\xi_i(t)$, and where the limit happens because $(h^{-1}\widetilde u)^*\beta \approx \widetilde u^* h^{-1} h u_1 = 0$; the ``$\approx$" becomes more accurate and tends toward an equality as the gap between $\|h^{-1}\beta\|$ and the critical level is increased.
\end{proof}

\subsection{Empirical Analysis}

The $1^{st}$ spatial\footnote{That is, of the IVS vector of names, deltas and maturities} singular vector $U_1$ computed by the SVD in \eqref{eq:SVD} is the empirical estimator for $u_1$, and the estimator for each $h_i$ is the empirical standard deviation of each $r_i$. Hence, we can compute the eigenportfolio from the data, and then compare it with the empirically estimated $\beta$'s. In our studies we will consider two weighting functions,
\begin{align}
    \label{eq:weightingFunctions}
    &\omega(\mathcal{OI}) =\mathcal{OI}\qquad\hbox{or}\qquad\omega(\mathcal{OI}) =\log(1+\mathcal{OI})\times\mathcal V^{untls}\ ,
\end{align}
where $\mathcal V^{untls}$ denotes the unitless Vega of \eqref{eq:unitlessvega}. Generally speaking, $\omega(\mathcal{OI}) =\mathcal{OI}$ results in a factor with a less signficant intercept in ex-post regressions of eigenportfolio returns onto the factor returns, whereas the $\omega(\mathcal{OI}) =\log(1+\mathcal{OI})\times\mathcal V^{untls}$ results in the same regression having significant intercept but lower projection error. The plain OI weighting is a bit strange, however, because it counts contracts without taking into account the sensitivity of the contract to a change in the volatility of the underlying stock. It should also be noted that the log-weighting has a factor loading that is closer to unity, whereas the plain OI-weighting that is significantly less than unity. 

The OI and unitless Vega weighing does in fact matter and it performs better when the tensor data structure is taken into consideration. We have chosen the combined OI and unitless Vega weighing in the form $\log(1+\mathcal{OI})\times\mathcal V^{untls}$, which works well for the IVS data, although it is rather arbitrary at present.

In Figure \ref{fig:sortedEigenportfolios}, we see the comparison for each of the years, with each year's eigenportfolio computed using the 251 or 250 days of data from the year, and with a $Q(t)$ factor computed using the weight function $\omega(\mathcal{OI}) = \mathcal{OI}$. If the eigenportfolio's weights are sorted in descending order (i.e., we sort $\pi_1$ in descending order) and then the sorting index is used to permute the vector $ \hbox{diag}^{-2}(h)\beta/\sum_i(\hbox{diag}^{-2}(h)\beta)_i$, then the sorted vector and the permuted vector should line up. Indeed, the plots in Figure \ref{fig:sortedEigenportfolios} show this lining up, with the sorted eigenportfolio in red and the permuted $\beta$'s in blue. 
\begin{figure}[h!]
    \centering
    \begin{subfigure}[b]{0.3\textwidth}
        \includegraphics[width=\textwidth]{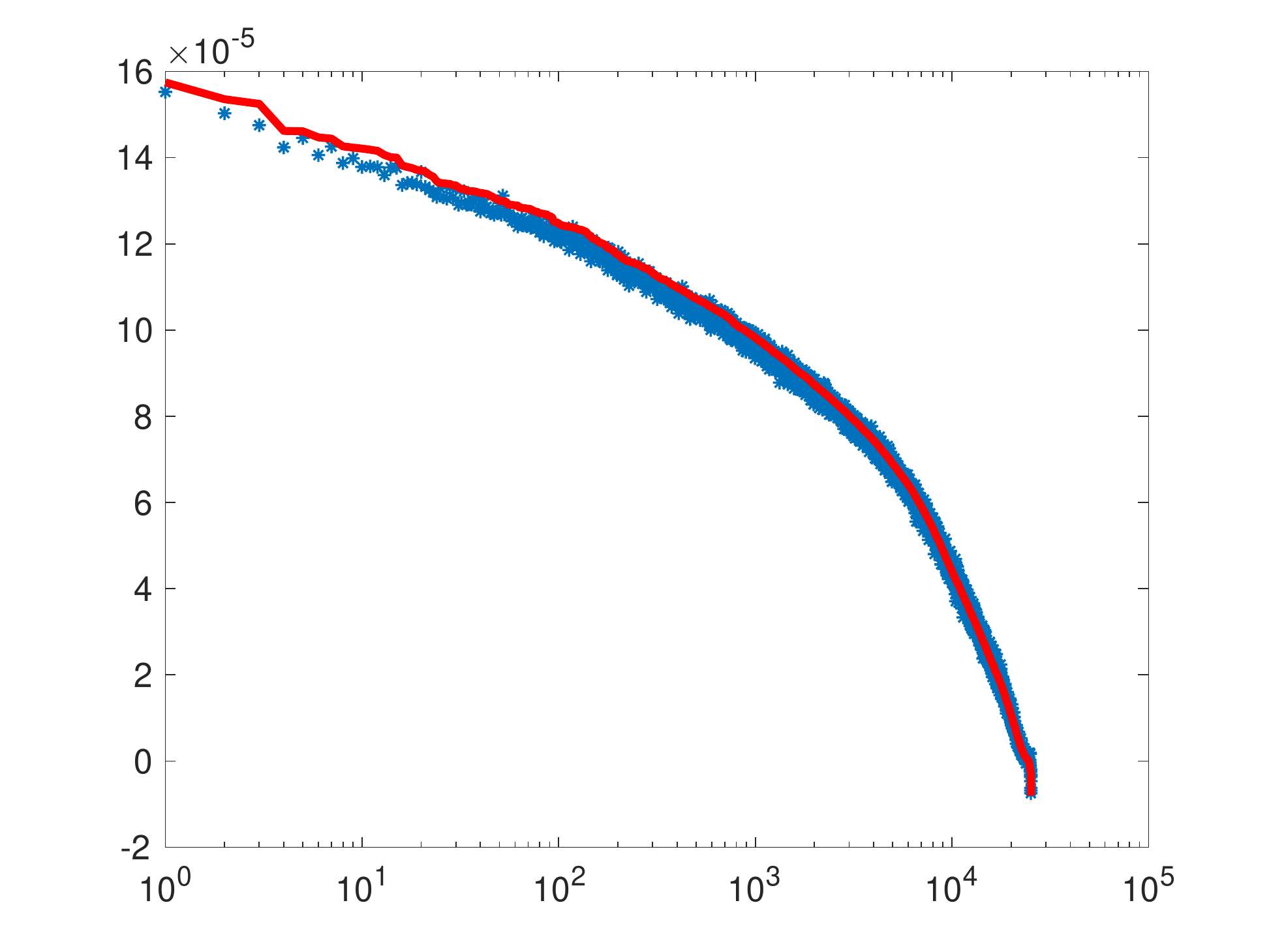}
        \caption{2012}
        \label{fig:2012IVSsorted}
    \end{subfigure}
\begin{subfigure}[b]{0.3\textwidth}
        \includegraphics[width=\textwidth]{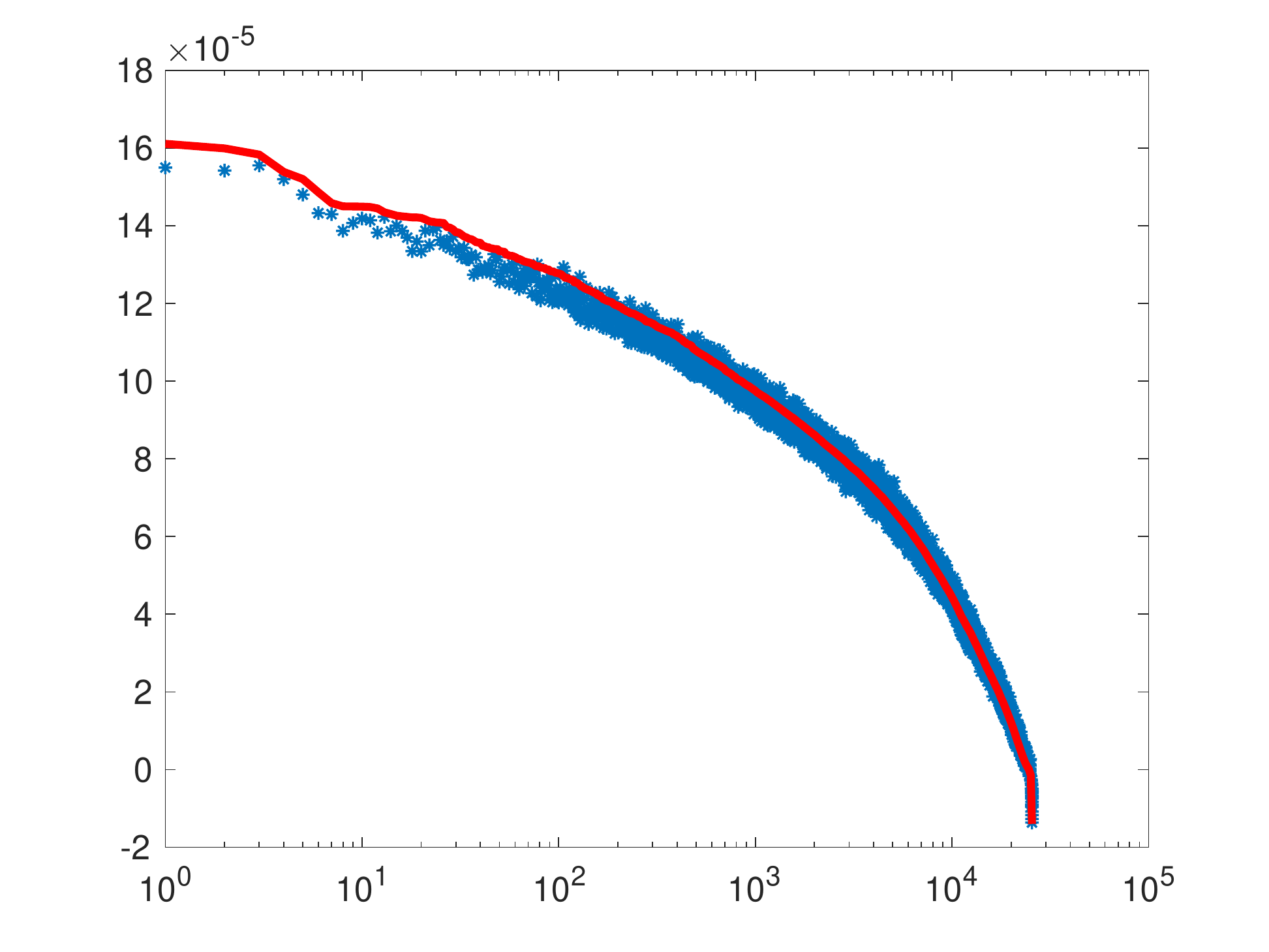}
        \caption{2013}
        \label{fig:2013IVSsorted}
    \end{subfigure}    
\begin{subfigure}[b]{0.3\textwidth}
        \includegraphics[width=\textwidth]{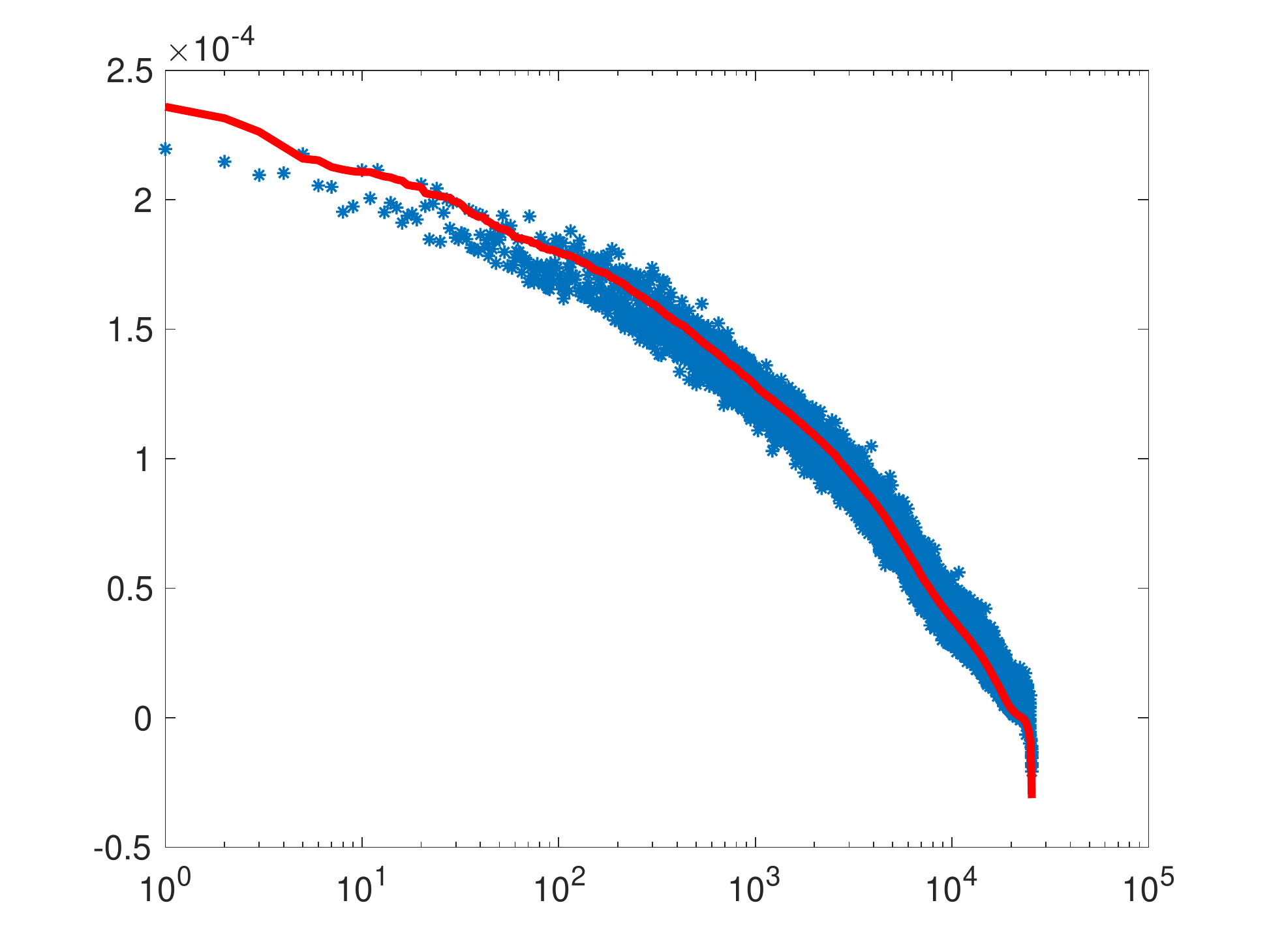}
        \caption{2014}
        \label{fig:20164IVSsorted}
    \end{subfigure}\\
\begin{subfigure}[b]{0.3\textwidth}
        \includegraphics[width=\textwidth]{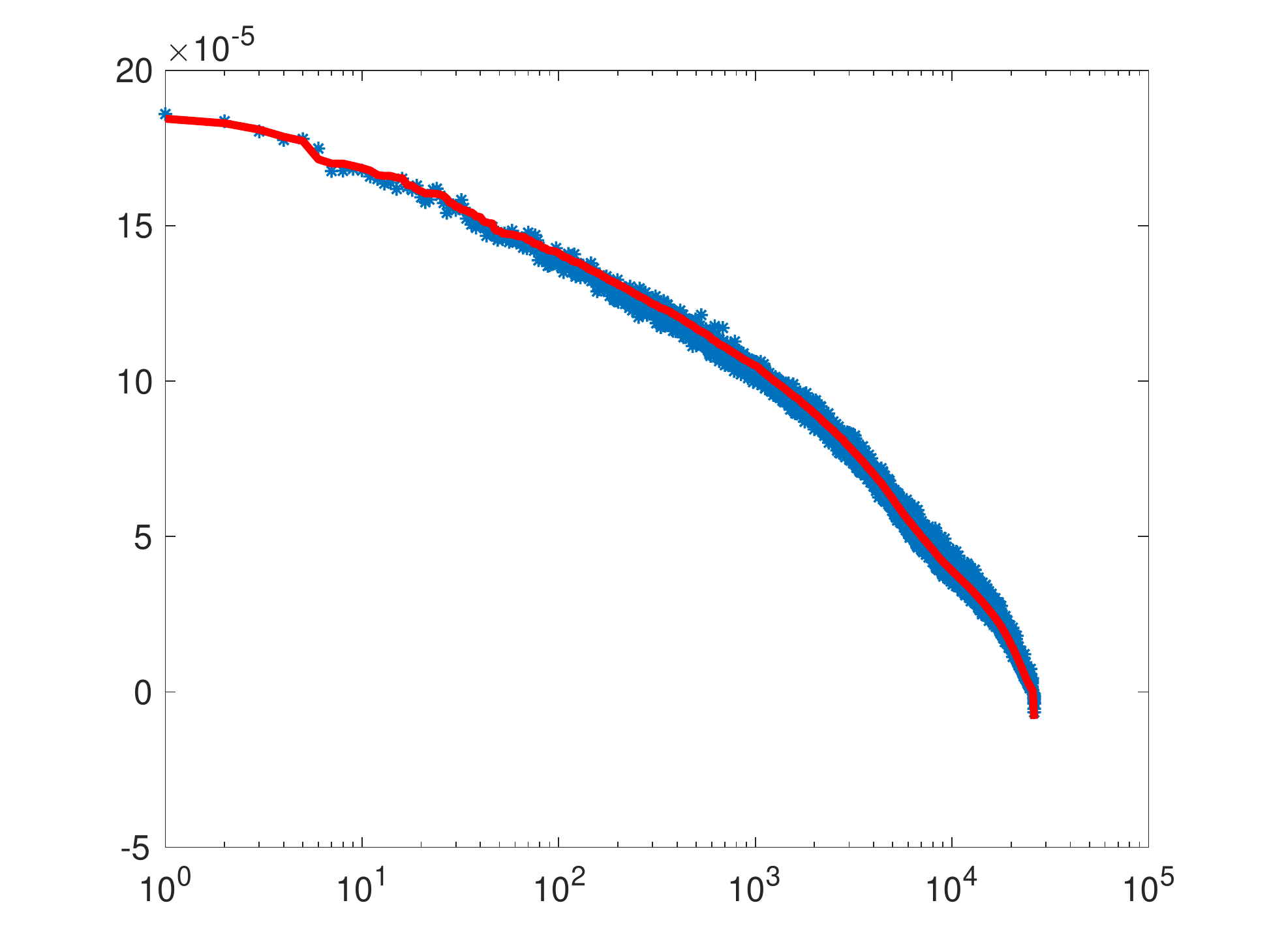}
        \caption{2015}
        \label{fig:2015IVSsorted}
    \end{subfigure}
\begin{subfigure}[b]{0.3\textwidth}
        \includegraphics[width=\textwidth]{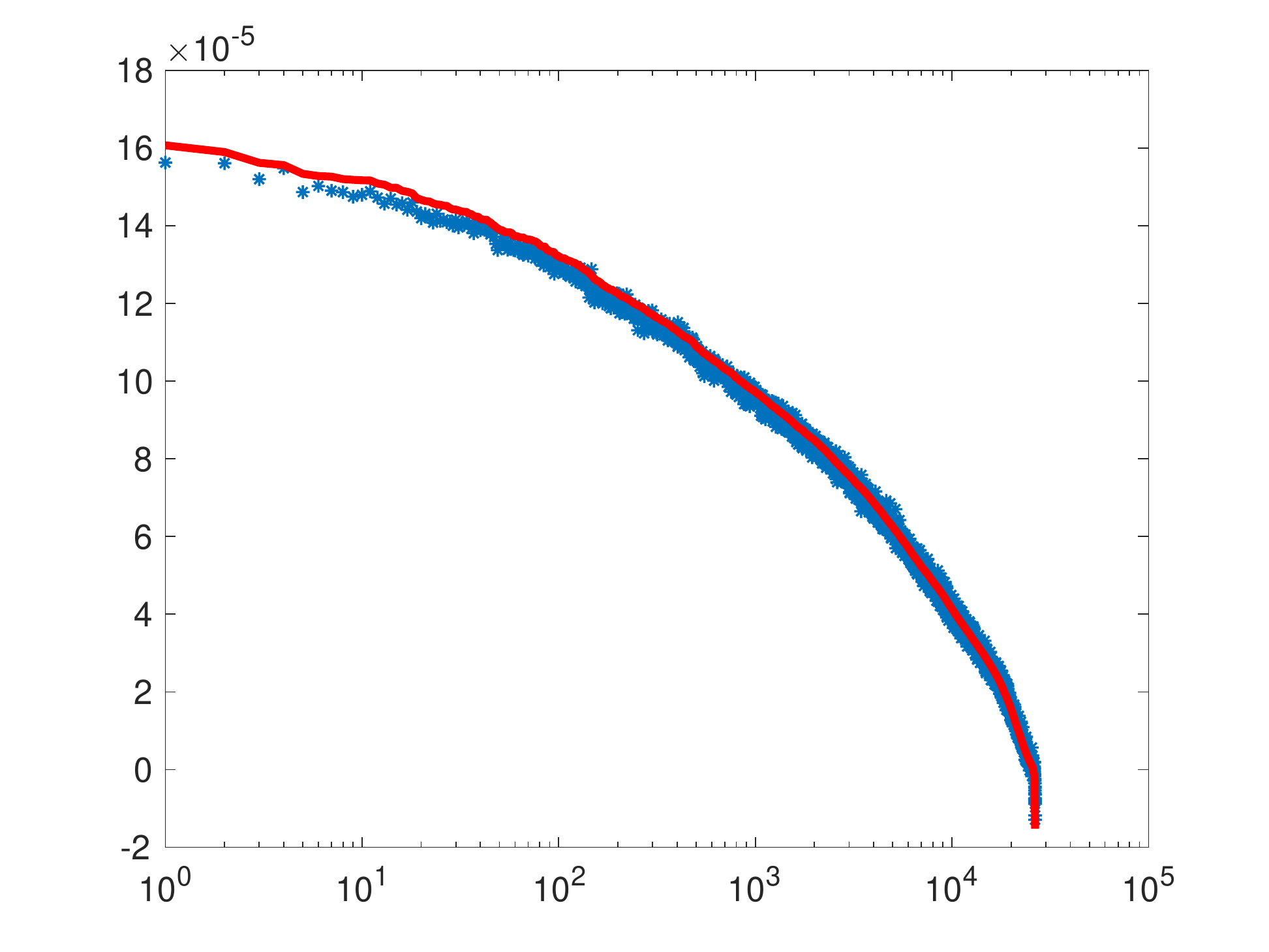}
        \caption{2016}
        \label{fig:2016IVSsorted}
    \end{subfigure}
\begin{subfigure}[b]{0.3\textwidth}
        \includegraphics[width=\textwidth]{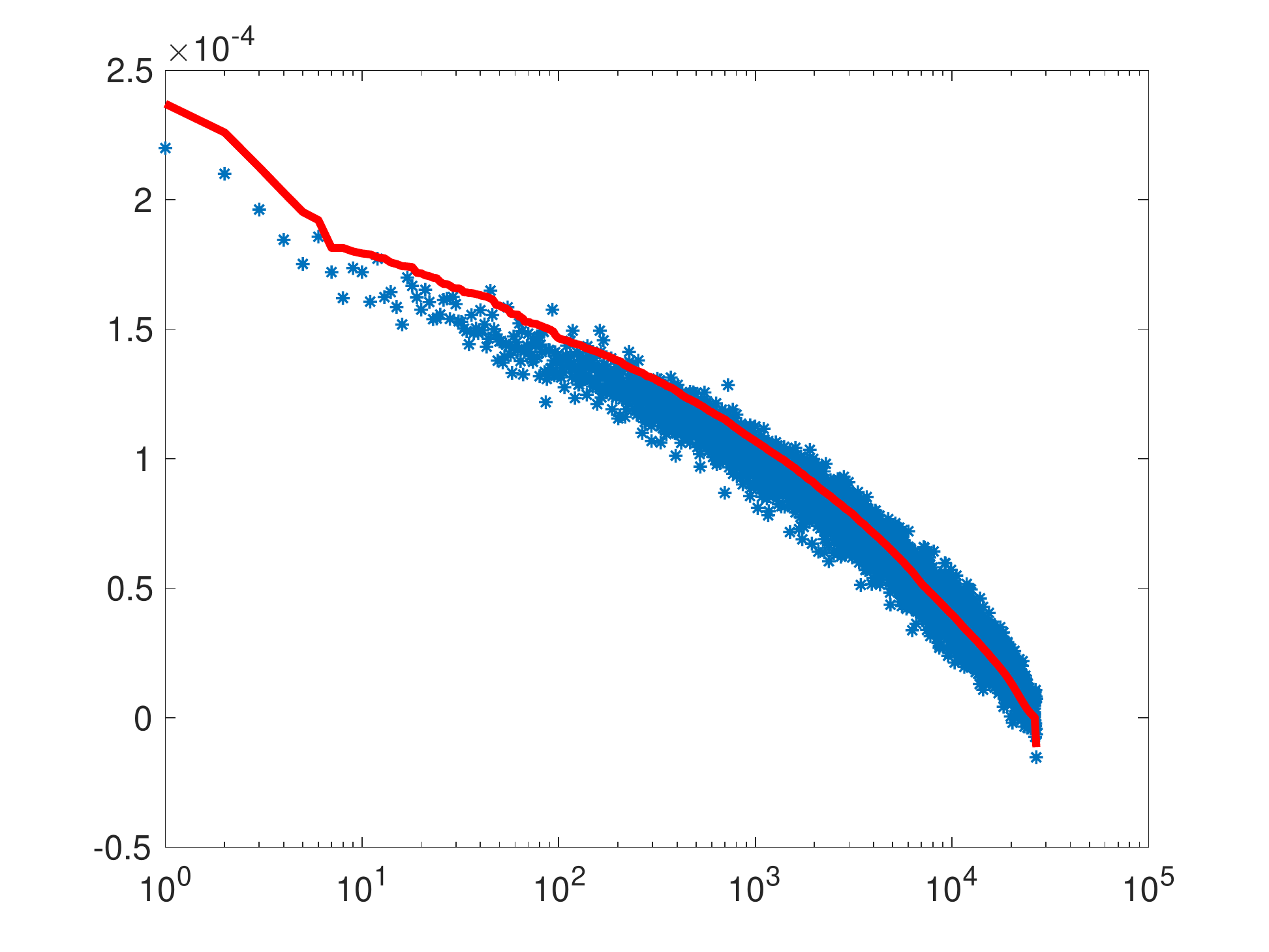}
        \caption{2017}
        \label{fig:2017IVSsorted}
    \end{subfigure}
     \caption{The eigenportfolio computed from the data (red) compared with the theoretical eigenportfolio that is close to $\beta_i/h_i^2$ (blue). The horizontal axes are in log scale. Each $\beta_i = \hbox{cov}(dE_i(\cdot)/E_i(\cdot),dQ(\cdot)/Q(\cdot))/h_q^2$, where for these plots the factor $Q(t)$ has been computed according to the formula in \eqref{eq:OIfactor} using the weighting function $\omega(\mathcal{OI}) = \mathcal{OI}$. To generate this plot, we first sort the eigenportfolio in descending order, and then insert the sorting index into the vector $h^{-2}\beta/\sum_i(h^{-2}\beta)_i$. The lining up of the two vectors using a single sorting index is evidence that the factor computed using the weighting  function $\omega(\mathcal{OI}) = \mathcal{OI}$ is close to the data's principal component.
     \label{fig:sortedEigenportfolios}}
\end{figure}

We can also check the name, the tenor, the $\Delta$, the $\beta_i/h_i^2$'s and the OI for the top-weighted options. These traits are listed for the top 32 options in the 2017 sorting. It is interesting to note that most of the top options are out-of-the-money put options, and all with 365 days to maturity (the longest-dated options in the dataset). Generally speaking, long-dated options have higher Vega\footnote{The Black-Scholes call/put option Vega of is $\mathcal V(t) = SN'(d_1)\sqrt{\tau}$ where $\tau$ is the time-to-maturity. Hence, all other things being equal, longer-dated options have higher Vega.} and therefore are most sensitive to changes in implied volatility. The interpretation of long-dated options dominating the $1^{st}$ eigenportfolio is simple: the $1^{st}$ eigenportfolio explains the most systematic movements among the options and should be the least sensitive to idiosyncratic noise, and therefore ignores short-dated options that may fluctuate idiosyncratically due to short-lived risk events. The years 2013 to 2017 had similar characteristics in the top 32 eigenportfolio options.

\begin{table}[t!]
    \centering
    \begin{tabular}{c|c|c|c|c}
    ticker&maturity (days)&$\Delta$&$\beta/h^2$&OI (average)\\
    \hline
ADSK &365 &-20 &3.5332$\times  10^3$ & 2751 \\
KLAC &365 &-20 &3.4206$\times  10^3$ & 3962 \\
DHR &365 &-20 &3.3575$\times  10^3$ & 18017 \\
KLAC &365 &-30 &3.2402$\times  10^3$ & 3906 \\
LRCX &365 &-30 &3.2904$\times  10^3$ & 4063 \\
KLAC &365 &-40 &3.1996$\times  10^3$ & 5171 \\
LRCX &365 &-40 &3.2450$\times  10^3$ & 4002 \\
INTU &365 &-40 &3.1577$\times  10^3$ & 4229 \\
INTU &365 &-30 &3.1832$\times  10^3$ & 5457 \\
ADI &365 &-20 &3.1372$\times  10^3$ & 15907 \\
XLNX &365 &-20 &3.1402$\times  10^3$ & 4237 \\
FLR &365 &-20 &3.1365$\times  10^3$ & 19285 \\
ADI &365 &-30 &3.0942$\times  10^3$ & 13276 \\
ADSK &365 &-30 &3.1348$\times  10^3$ & 4018 \\
TXN &365 &-20 &3.1324$\times  10^3$ & 19669 \\
DHR &365 &-30 &3.0415$\times  10^3$ & 23395 \\
CHRW &365 &-40 &3.0531$\times  10^3$ & 2874 \\
CHRW &365 &-30 &3.0543$\times  10^3$ & 3044 \\
FLR &365 &-30 &3.0444$\times  10^3$ & 22425 \\
ROST &365 &-30 &3.1108$\times  10^3$ & 5809 \\
RCL &365 &-40 &3.0327$\times  10^3$ & 7943 \\
FDX &365 &-30 &3.0153$\times  10^3$ & 34131 \\
RCL &365 &-20 &3.0019$\times  10^3$ & 24656 \\
EXPD &365 &-20 &2.9784$\times  10^3$ & 5796 \\
LRCX &365 &40 &3.0237$\times  10^3$ & 14541 \\
FDX &365 &-20 &2.9778$\times  10^3$ & 43925 \\
FLR &365 &40 &2.9695$\times  10^3$ & 18146 \\
LRCX &365 &30 &3.0293$\times  10^3$ & 12629 \\
ADSK &365 &-40 &2.9768$\times  10^3$ & 3314 \\
SBUX &365 &-40 &3.0220$\times  10^3$ & 39345 \\
IR &365 &-20 &2.9369$\times  10^3$ & 4699 \\
CF &365 &-20 &2.9476$\times  10^3$ & 72143 
\end{tabular}
    \caption{The ordering of names in the top 32 slots in the eigenportfolio of S\&P500 constituents' implied volatility for the year 2012. The maturity for all these top-weighted options is 365 days, which are the longest-dated options in our dataset. Generally speaking, long-dated options have higher Vega, which means that the $1^{st}$ eigenportfolio finds the most systematic explanation of implied volatility surface movements by considering the options with the greatest sensitivity to implied volatility changes.}
    \label{tab:impliedVolPCA207names}
\end{table}

We also perform checks to make sure that eigenportfolios can track in an online setting, i.e., all quantities used to compute a portfolio in real-time are adapted to the filtration generated by the options data.\footnote{The online eigenportfolio that we compute has one major anticipatory element, which is survivorship bias. We have selected names of the S\&P500 constituents from 2017 and collect their options' data going back to 2012. Hence, there is some survivorship bias in favor of names that perform well enough to stay in the index for these 6 years. We acknowledge this bias and realize that it will persist over time, but for our analysis of daily statistics it does not make our findings any less valid. Moreover, the OI weighing favors liquid options for which survivorship bias is likely to be reduced.} We group all 6 years of data into one simulated run, and then compute an eigenportfolio each month using the previous six months' daily returns. The returns are compounded daily but we only update the portfolio weights monthly. Figures \ref{fig:eigenAndOI_scatter}-\ref{fig:portfolioTrackingLOGweight} show the online 6-month sliding window performance of the eigenportfolio alongside the factor returns and the VIX returns. Notice in Figure \ref{fig:eigenAndVIX_scatter} that the eigenportfolio returns have significant explanatory power for the returns on VIX (i.e., $d \hbox{VIX}(t)/\hbox{VIX}(t)$), but also that returns for the OI factor and eigenportfolio have a much stronger linear dependence. Indeed, even if we use the weighting function $\omega(\mathcal{OI}) = \log(1+\mathcal{OI})\times\mathcal{V}^{untls}$, which allows the OI factor to have better tracking with the VIX (see Figure \ref{fig:portfolioTrackingLOGweight}), the stronger linear dependence in daily returns between factor and eigenportfolio still prevails. 
\begin{figure}[t]
\centering
    \begin{subfigure}[t]{0.45\textwidth}
    \captionsetup{width=.9\linewidth}
    \includegraphics[width=\textwidth]{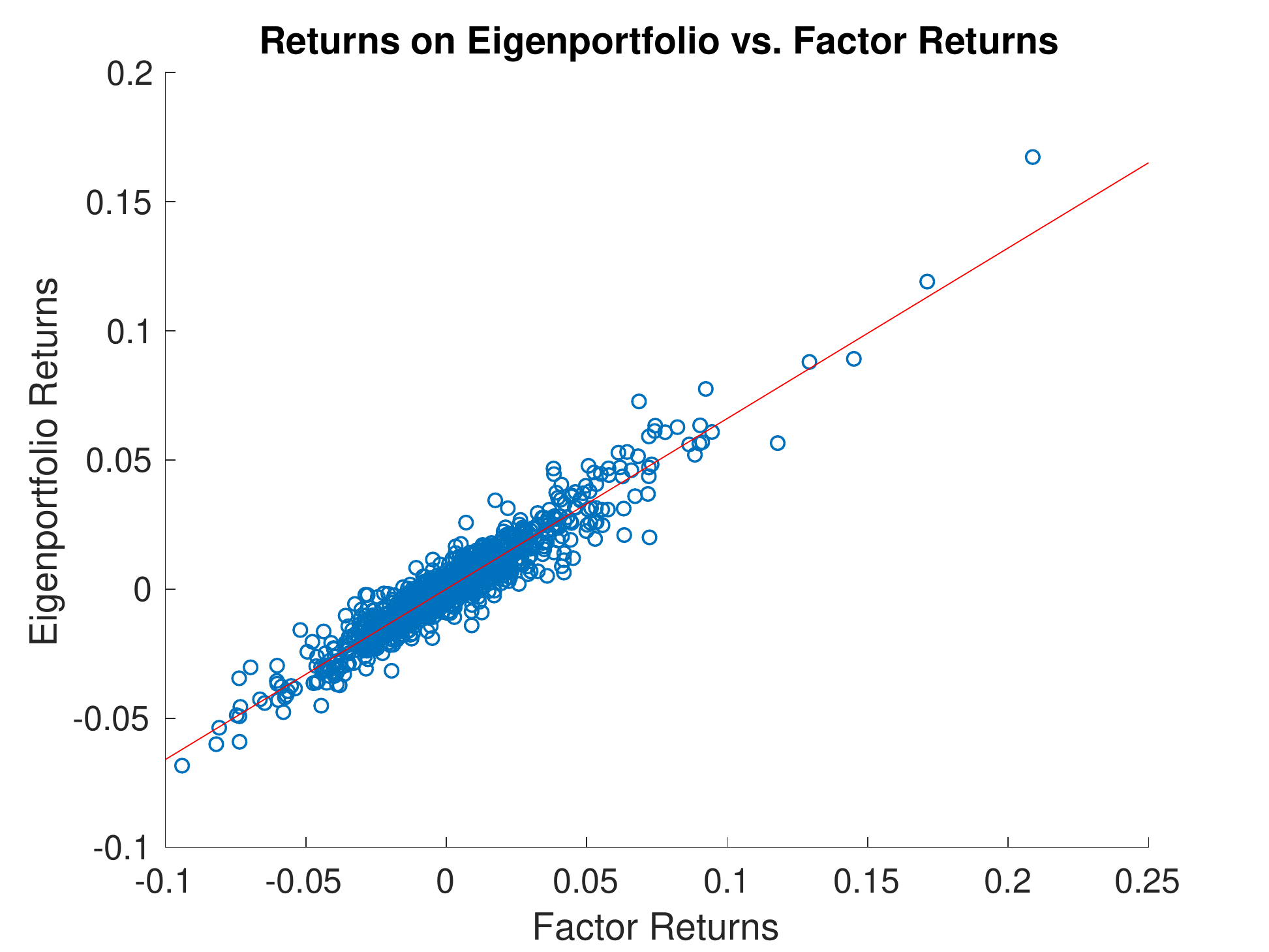}
    \caption{A scatter plot of the adapted eigenportfolio returns against the returns of OI-weighted factor computed using weighting function $\omega(\mathcal{OI})=\mathcal{OI}$, from 2012 through 2017. The eigenportfolio is computed using a 6-month sliding window, which is why the first 6 months of 2012 are not included in the output.}
    \label{fig:eigenAndOI_scatter}
    \end{subfigure}
    \begin{subfigure}[t]{0.45\textwidth}
    \captionsetup{width=.9\linewidth}
    \includegraphics[width=\textwidth]{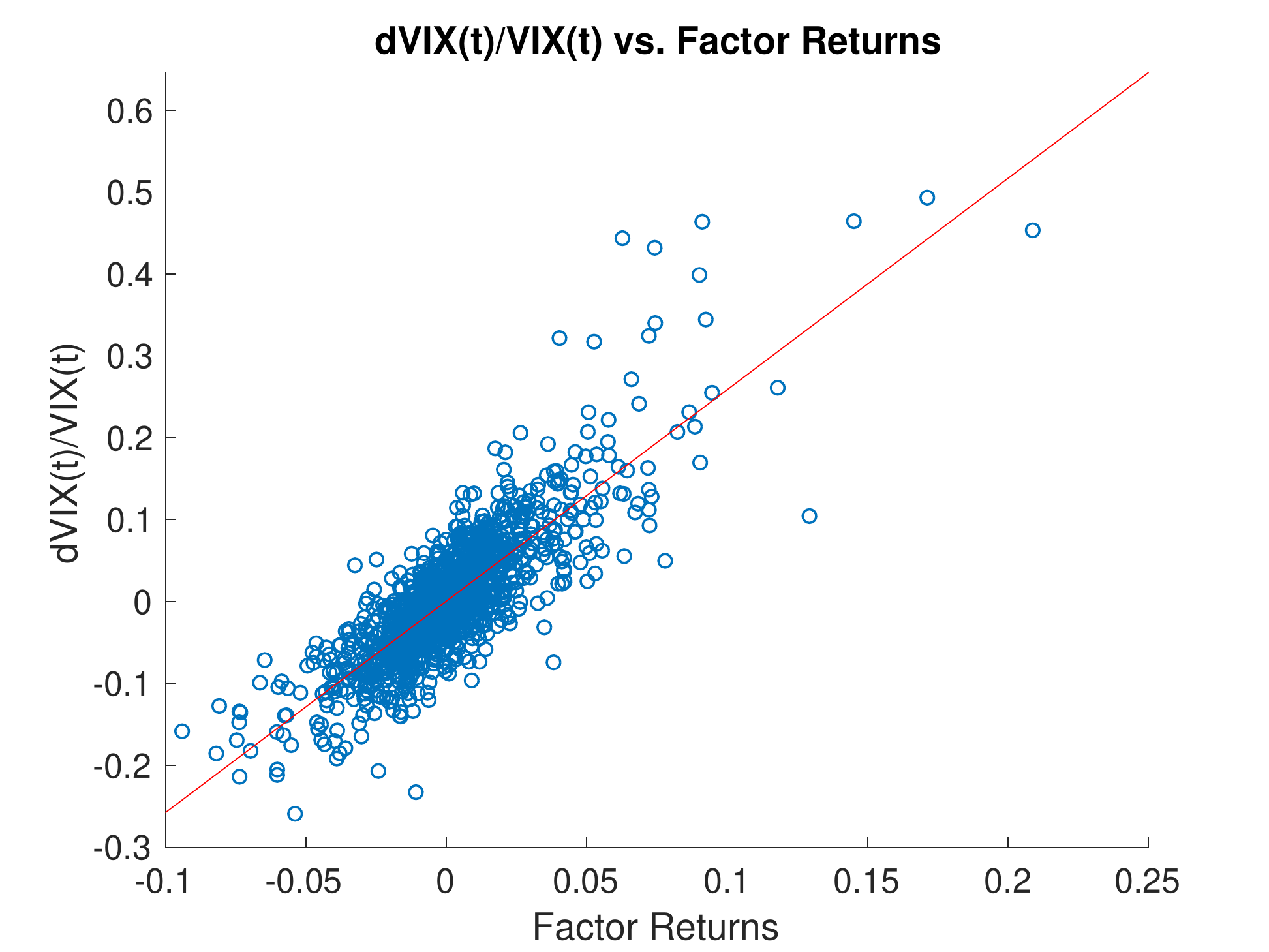}
        \caption{A scatter plot of the VIX returns against the returns of OI-weighted factor computed using weighting function $\omega(\mathcal{OI})=\mathcal{OI}$, from 2012 through 2017.}
        \label{fig:eigenAndVIX_scatter}
    \end{subfigure}\\
    \begin{subfigure}[t]{0.45\textwidth}
    \captionsetup{width=.9\linewidth}
            \includegraphics[width=\textwidth]{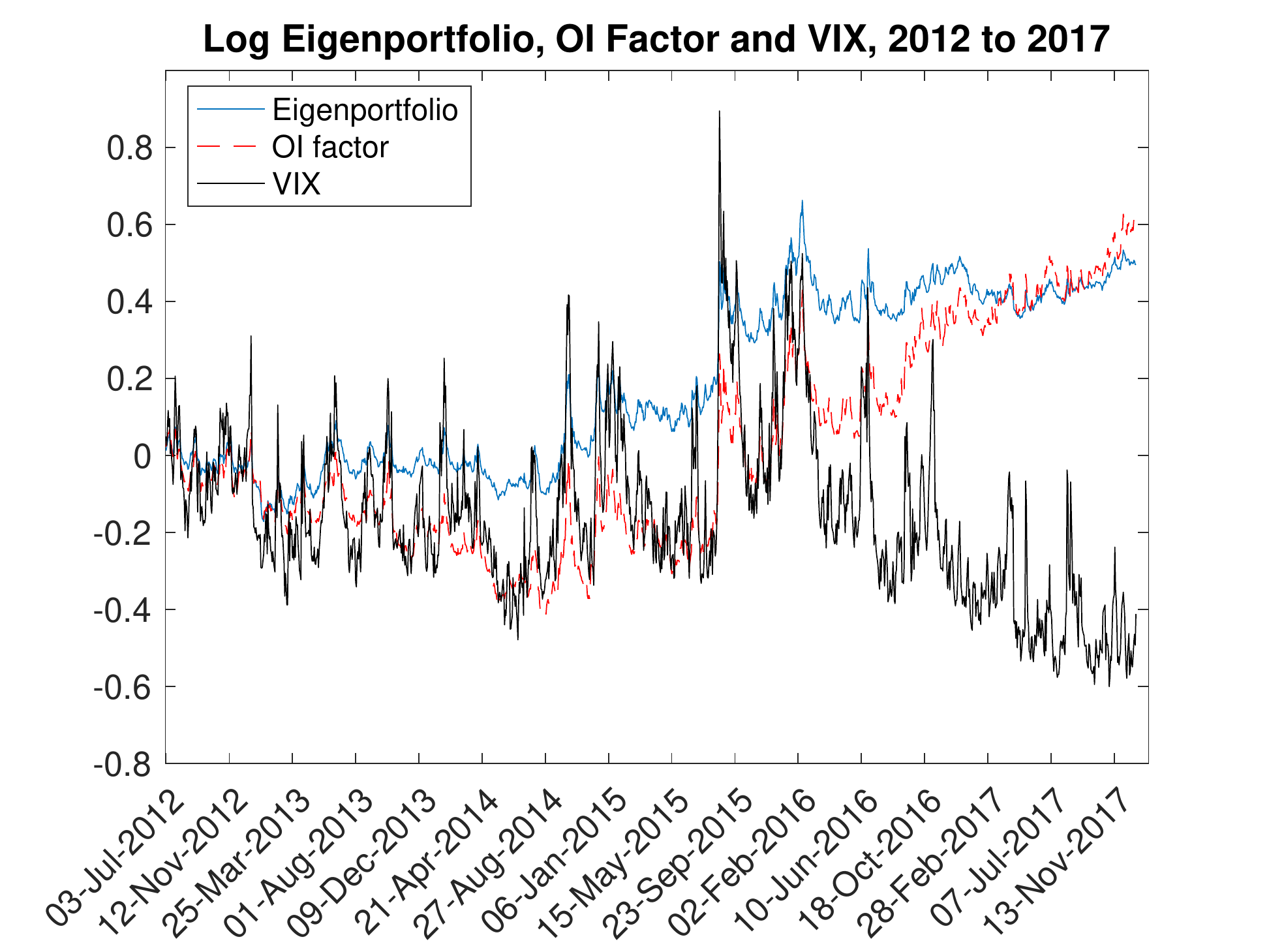} \caption{The adapted 6-month sliding-window eigenportfolio and the OI-weighted factor computed using weighting function $\omega(\mathcal{OI})=\mathcal{OI}$. }
            \label{fig:portfolioTracking}
    \end{subfigure}
    \begin{subfigure}[t]{0.45\textwidth}
    \captionsetup{width=.9\linewidth}
       \includegraphics[width=\textwidth]{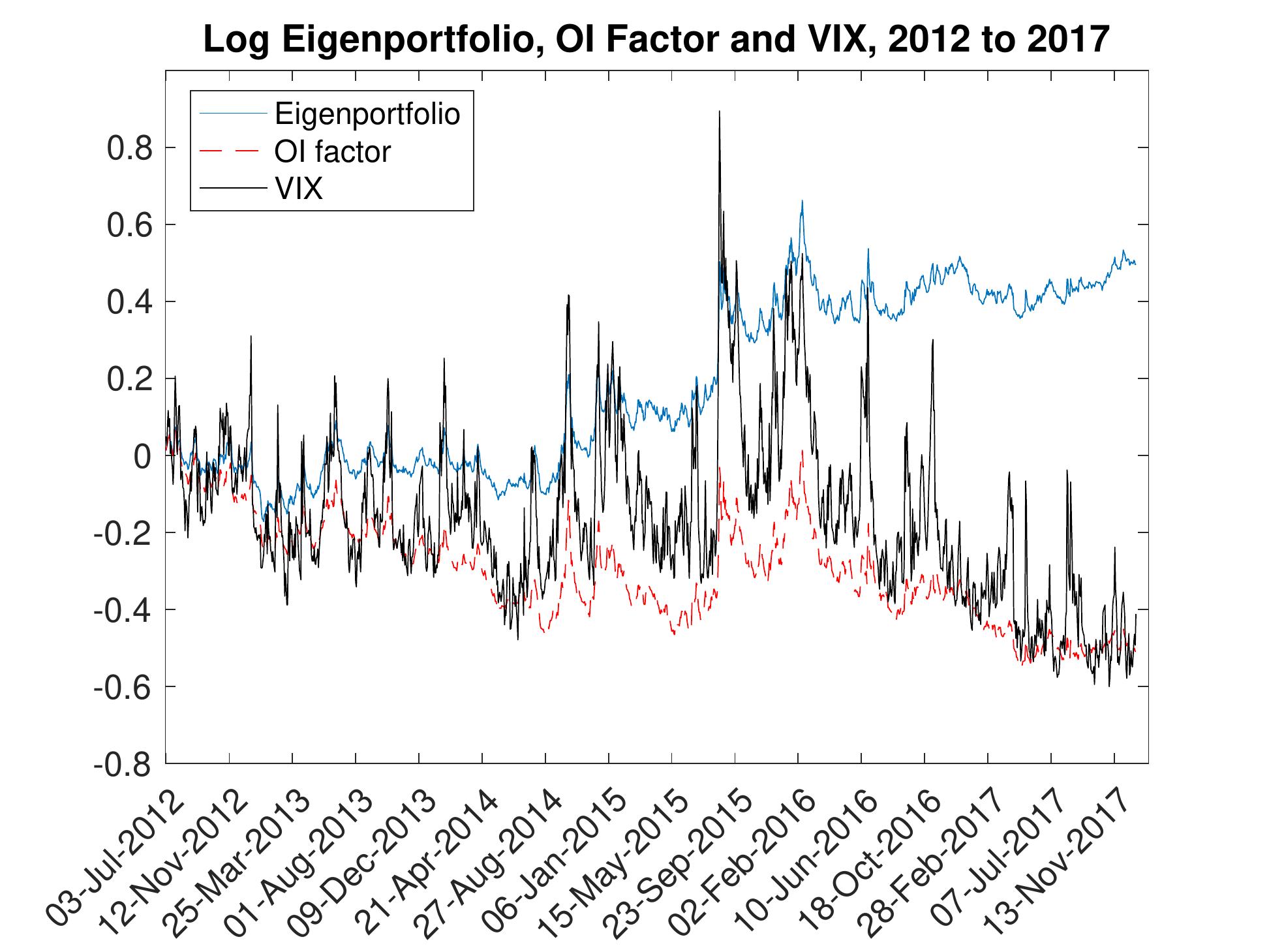} 
   \caption{The adapted 6-month sliding-window eigenportfolio and the OI-weighted index computed using weighting function $\omega(\mathcal{OI})=\log(1+\mathcal{OI})\times\mathcal{V}^{untls}$ (logarithm of each index). By taking the OI weights to be the logarithm of $1$ plus OI, we see improved tracking of the VIX compared to Figure \ref{fig:portfolioTracking}. Regression of the eigenportfolio onto the factor has significant intercept coefficient but improved projection error compared to the same regression with $\omega(\mathcal{OI})=\mathcal{OI}$ (see Table \ref{tab:alphas}). }
   \label{fig:portfolioTrackingLOGweight}     
    \end{subfigure}
    \caption{Output from adapted eigenportfolios using a 6-month sliding window.\label{fig:slidingWindowAnalysis}}
\end{figure}

Lastly, some discussion on performance with the two different weighting functions is in order. The visual evidence in Figures \ref{fig:sortedEigenportfolios} and \ref{fig:slidingWindowAnalysis} should be sufficiently convincing for the reader that OI needs to be included in factor construction. The discussion that remains is for us to decide how to determine the OI-weighted factor that is somehow 'best'. Table \ref{tab:alphas} provides evidence indicating that $\omega(\mathcal{OI}) = \mathcal{OI}$ is good because it leaves the least amount of unexplained systematic return after regression of the eigenportfolio onto the factor's returns (as well as two other factors). However, visually from Figure \ref{fig:portfolioTrackingLOGweight} we see that $\omega(\mathcal{OI}) = \log(1+\mathcal{OI})\times \mathcal{V}^{untls}$ produces a factor that is better for tracking the VIX, and the VIX is the U.S. market's premier volatility index, but this choice of $\omega$ leaves a significant amount of unexplained systematic return after the regression. Bear in mind that leaving unexplained systematic return is not entirely bad because we have thus far only been able to construct a single factor for explained implied volatility movements, but we know that we need at least two (recall Figure \ref{fig:singularValues} where it was clear that there are at least two principal components). 

It turns out that there is a setting wherein $\omega(\mathcal{OI}) = \log(1+\mathcal{OI})\times \mathcal{V}^{untls}$ is an acceptable factor, but it will require us to move from ordinary 'flat' linear algebra and use tensors in the next Section. 
\begin{table}[hbp]
    \centering
    \small
    \begin{tabular}{l|r|r|r|r|r|r}
    \multicolumn{1}{c|}{$\omega(\mathcal{OI})$}&
    \multicolumn{1}{c|}{$\alpha_3$}&
    \multicolumn{1}{c|}{$\beta_3$}&
    \multicolumn{1}{c|}{$\alpha_2$}&
    \multicolumn{1}{c|}{$\beta_2$}&
    \multicolumn{1}{c|}{$\alpha_1$}&
    \multicolumn{1}{c}{$\beta_1$}\\
    \hline
    \hline
    \multicolumn{7}{c}{2012-2014}\\
    \hline
        $\mathcal{OI}$&0.0629& 0.6348& 0.0755& 0.7196& 0.0908& 0.7187\\
&(1.6307)&& (1.8777)&& (2.2614) &\\
&\multicolumn{2}{c|}{$R^2=$ 0.9387}&\multicolumn{2}{c|}{$R^2=$ 0.9332}&\multicolumn{2}{c}{ $R^2=$ 0.9322}\\
                \cline{2-7}
    $\log(1+\mathcal{OI})\times\mathcal{V}^{untls}$&0.1224& 0.8947& 0.1418& 1.0053& 0.1745& 1.0000\\
&(4.0669)&& (4.4146)&& (5.2169) &\\
&\multicolumn{2}{c|}{$R^2=$ 0.9629}&\multicolumn{2}{c|}{$R^2=$ 0.9574}&\multicolumn{2}{c}{ $R^2=$ 0.9529}\\

    \hline
    \hline
    \multicolumn{7}{c}{2013-2015}\\
    \hline
    $\mathcal{OI}$&0.0570& 0.5500& 0.0728& 0.6794& 0.0865& 0.6769\\
&(1.1356)&& (1.3413)&& (1.5885) &\\
&\multicolumn{2}{c|}{$R^2=$ 0.9303}&\multicolumn{2}{c|}{$R^2=$ 0.9184}&\multicolumn{2}{c}{ $R^2=$ 0.9172}\\
\cline{2-7}
    $\log(1+\mathcal{OI})\times\mathcal{V}^{untls}$&0.1409& 0.8472& 0.1664& 0.9842& 0.1934& 0.9745\\
&(4.1438)&& (4.4377)&& (4.8793) &\\
&\multicolumn{2}{c|}{$R^2=$ 0.9681}&\multicolumn{2}{c|}{$R^2=$ 0.9610}&\multicolumn{2}{c}{ $R^2=$ 0.9561}\\
    \hline
    \hline
    \multicolumn{7}{c}{2014-2016}\\
    \hline
    $\mathcal{OI}$&0.0190& 0.5259& 0.0116& 0.6600& 0.0165& 0.6594\\
&(0.3091)&& (0.1746)&& (0.2491) &\\
&\multicolumn{2}{c|}{$R^2=$ 0.9183}&\multicolumn{2}{c|}{$R^2=$ 0.9049}&\multicolumn{2}{c}{ $R^2=$ 0.9047}\\
                \cline{2-7}
    $\log(1+\mathcal{OI})\times\mathcal{V}^{untls}$&0.1841& 0.8731& 0.2007& 0.9734& 0.2184& 0.9687\\
&(4.9527)&& (5.1087)&& (5.3412) &\\
&\multicolumn{2}{c|}{$R^2=$ 0.9701}&\multicolumn{2}{c|}{$R^2=$ 0.9664}&\multicolumn{2}{c}{ $R^2=$ 0.9634}\\
    \hline
    \hline
    \multicolumn{7}{c}{2015-2017}\\
    \hline
    $\mathcal{OI}$&-0.0505& 0.5164& -0.0705& 0.6168& -0.0803& 0.6170\\
&(-0.8877)&& (-1.1757)&& (-1.3413) &\\
&\multicolumn{2}{c|}{$R^2=$ 0.9102}&\multicolumn{2}{c|}{$R^2=$ 0.8997}&\multicolumn{2}{c}{ $R^2=$ 0.8992}\\
                \cline{2-7}
    $\log(1+\mathcal{OI})\times\mathcal{V}^{untls}$&0.1705& 0.8926& 0.1769& 0.9389& 0.1836& 0.9381\\
&(6.5261)&& (6.6106)&& (6.8505) &\\
&\multicolumn{2}{c|}{$R^2=$ 0.9810}&\multicolumn{2}{c|}{$R^2=$ 0.9800}&\multicolumn{2}{c}{ $R^2=$ 0.9797}\\
       \hline
    \hline
    \multicolumn{7}{c}{2012-2017 \textbf{(all years)}}\\
    \hline
    $\mathcal{OI}$&-0.0010& 0.5444& -0.0020& 0.6604& 0.0028& 0.6601\\
&(-0.0289)&& (-0.0546)&& (0.0781) &\\
&\multicolumn{2}{c|}{$R^2=$ 0.9156}&\multicolumn{2}{c|}{$R^2=$ 0.9033}&\multicolumn{2}{c}{ $R^2=$ 0.9032}\\
                \cline{2-7}
    $\log(1+\mathcal{OI})\times\mathcal{V}^{untls}$&0.1476& 0.8903& 0.1619& 0.9760& 0.1813& 0.9732\\
&(7.2613)&& (7.5857)&& (8.2952) &\\
&\multicolumn{2}{c|}{$R^2=$ 0.9697}&\multicolumn{2}{c|}{$R^2=$ 0.9665}&\multicolumn{2}{c}{ $R^2=$ 0.9645}\\
    \end{tabular}
    \caption{Factor loadings and unexplained systematic returns for the eigenportfolio from a 6-month sliding window and various OI-weightings in the factor construction. For each sample period there are three regressions: a 3-factor regression $\frac{dE^{ep}}{E^{ep}} = \alpha_3+\beta_3 \frac{dQ}{Q}+b_{eq}\frac{d\hbox{\scriptsize S\&P}}{\hbox{\scriptsize S\&P}}+b_{vx}\frac{d \hbox{\scriptsize VIX}}{\hbox{\scriptsize VIX}}+\varepsilon$, a 2-factor regression $\frac{dE^{ep}}{E^{ep}} = \alpha_2+\beta_2\frac{dQ}{Q}+b_{eq}\frac{d\hbox{\scriptsize S\&P}}{\hbox{\scriptsize S\&P}}+\varepsilon$, and a 1-factor regression $\frac{dE^{ep}}{E^{ep}} = \alpha_1+\beta_1 \frac{dQ}{Q}+\varepsilon$. The factor obtained by weighting function $\omega(\mathcal{OI}) = \mathcal{OI}$ is perhaps desireable because it leaves no significant excess return in the residual, but there is no reason why we should reject a factor with non-zero intercept. In contrast, the weighting function $\omega(\mathcal{OI})=\log(1+\mathcal{OI})\times\mathcal{V}^{untls}$ has significant intercept and a higher $R^2$, and hence is perhaps a better factor.}
    \label{tab:alphas}
\end{table}

\section{Factors and Eigenportfolios Using Tensors}
\label{sec:tensors}

The implied volatility surfaces have a natural tensor structure with 4 dimensions: time, name, maturity and delta (normalized strike). To work with this tensor we first need to redefine our notation for implied volatility returns from how they were defined in \eqref{eq:returns} and \eqref{eq:standardizedReturns}. Before we had $1\leq i\leq N$ where $N = 500\cdot 8\cdot 7=28,000$ (500 names, 8 maturities, 7 deltas). Now we have multiple indices,
\[r_{ijk}(t) = \hbox{time-$t$ implied vol. return for $i^{th}$ name, $j^{th}$ maturity and $k^{th}$ $\Delta$}\ ,\]
where now $1\leq i\leq N^{(1)}$, $1\leq j\leq N^{(2)}$, $1\leq k\leq N^{(3)}$ and $1\leq t\leq T$ (i.e., $N^{(1)}=500$, $N^{(2)}=8$ and $N^{(3)}=7$). We standardize the returns and place them in a 4-dimensional tensor,
\begin{equation}
    \label{eq:tensorReturns}
    R = \left[\frac{r_{ijk}(t) - \overline{r}_{ijk}}{h_{ijk}}\right]_{1\leq t\leq T,~1\leq i\leq N^{(1)},~1\leq j\leq N^{(2)},~1\leq k\leq N^{(3)}}\ ,
\end{equation}
where $\overline{r}_{ijk} = \frac{1}{T}\sum_tr_{ijk}(t) $ and $h_{ijk} = \sqrt{\frac{1}{T-1}\sum_t(r_{ijk}(t) -\overline{r}_{ijk})^2}$. Similar to the sector-based hierarchical PCA done for equities in \cite{avellaneda2019hierarchical}, we can define individualized factors for each value of a certain tensor dimension. The maturity dimension and the delta dimension are the two candidates for individualized factors; the following subsection demonstrates the advantage of constructing the individualized factors.

\subsection{Eigenportfolios Via Multilinear SVD (MLSVD)}

A tensor analogue for the SVD is the multilinear singular value decomposition (MLSVD) (see \cite{cichocki2015tensor,kolda2009tensor,deLathauwer2000multilinear}). The canonical polyadic decomposition (CPD) (see \cite{kolda2009tensor,tucker1966some}) can also be used, but when it comes to computing principal components with the IVS data, the results produced by these two approaches are very similar. We use the MLSVD, which is briefly described in the Appendix, as follows. We can write the tensor $R$ in the form
\begin{equation}
    \label{eq:MLSVD}
    R = \sum_{t,i,j,k}S_{tijk}~ U_t^{(1)}\circ U_i^{(2)}\circ U_j^{(3)}\circ U_k^{(4)}\ ,
\end{equation}
where $U^{(1)}$ is a $T\times T$ orthonormal matrix, $U^{(2)}$ is an $N^{(1)}\times T$ orthonormal matrix, $U^{(3)}$ is an $N^{(2)}\times N^{(2)}$ orthonormal matrix, $U^{(4)}$ is a $N^{(3)}\times N^{(3)}$ orthonormal matrix, $S$ is a $T\times N^{(1)}\times N^{(2)}\times N^{(3)}$ real-valued tensor, and where $\circ$ denotes the vector outerproduct. The main disadvantage of the MLSVD is that the so-called \textit{core tensor} $S$ is, in general, not diagonal or even sparse, which results in difficulties when computing a rank-$d$ decomposition that is best in the sense of the Frobenius norm. In practice, the rank-1 decomposition used for computing the eigenportfolio can be estimated using the top MLSVD vectors, which we write as
\begin{align}
\label{eq:tensorPrinc}
\widetilde U^{(1)} = U_1^{(2)}\circ U_1^{(3)}\circ U_1^{(4)}\ ,
\end{align}
which for the IVS data is very close to what one gets using CPD. To compute the tensor eigenportfolio, we do an elementwise division with the tensor of standard deviations,
\begin{equation}
    \label{eq:tensorEigenportfolio}
    \pi_{ijk}^{(1)}\propto \frac{\widetilde U_{ijk}^{(1)}}{h_{ijk}}\ .
\end{equation}
It remains to decide how to normalize this $\pi^{(1)}$. If we normalize with one global summation over the multi-index $(i,j,k)$, then there would have been no need to use tensors. However, if there is a dimension such that for each of the index values there corresponds a different OI-weighted factor, then the normalization needs to be done separately for the different values of this index. The logic for using different normalizations for different index values will become clearer in the following example.

\subsection{Example: Different Factors for Each Maturity}

Suppose that we construct a different OI-weighted factor for each maturity, for a total of 8 factors. Moreover, suppose that our factor considers the $\beta$ for each option to be its loading on the factor corresponding to the same maturity, i.e., there are factors $Q_j$ for $j=1,2,\dots 8$, with $dQ_j(t)/Q_j(t) \propto \sum_{i,k}\omega(\mathcal{OI}_{ijk}(t))r_{ijk}(t)$, 
where $\mathcal{OI}$ is the tensor of open interests. The factor model for returns is then
\[r_{ijk}(t) = \beta_{ijk}\frac{dQ_j(t)}{Q_j(t)} +\xi_{ijk}(t)\ ,\]
where $\xi$ is a tensor of idiosyncratic noise that is independent of the factor processes in $Q_j$. In this case, for the the $\beta_{ijk}/h_{ijk}^2$'s to line up with the eigenportfolio in \eqref{eq:tensorEigenportfolio}, the normalization needs to be done as follows for each $j$,
\begin{align}
    \nonumber
    \pi_{ijk}^{(1)} &= \frac{\widetilde U_{ijk}^{(1)}/h_{ijk}}{\sum_{i,k}\widetilde U_{ijk}^{(1)}/h_{ijk}}\\
    \label{eq:tensorNormalizations}
    \frac{dQ_j(t)}{Q_j(t)}&=\frac{\sum_{i,k}\omega(\mathcal{OI}_{ijk}(t))r_{ijk}(t)}{\sum_{i,k}\omega(\mathcal{OI}_{ijk}(t))}\ .
\end{align}
Finally, the global factor (to compare with the eigenportfolio) is simply the mean of the tenor factors, 
\begin{equation}
    \label{eq:tensorGlobalFactor}
    \frac{d\overline Q(t)}{\overline Q(t)} = \frac{1}{N^{(2)}}\sum_j\frac{dQ_j(t)}{Q_j(t)}\ .
\end{equation}
Figure \ref{fig:tensorPortfolioPerformance} shows improved results if this maturity-wise approach is used with the normalizations in \eqref{eq:tensorNormalizations} along with the weighting function $\omega(\mathcal{OI}) =\log(1+\mathcal{OI})\times \mathcal V^{untls}$. The linear dependence between the tensor eigenportfolio and the $\overline Q$ seen in Figure \ref{fig:eigenAndTensorOI_scatter} is a considerable improvement from the linear dependence shown in Figure \ref{fig:eigenAndOI_scatter} that was obtained using flat matrices. The ex-post regression diagnostics of the tensor approach show an improvement over the flat matrix approach, as Table \ref{tab:tensorAlphas} lists regression outputs that have higher $R^2$ (i.e., less projection error) than their counterparts in Table \ref{tab:alphas}.
\begin{figure}[t!]
    \centering
    \begin{subfigure}[b]{0.45\textwidth}
    \captionsetup{width=.9\linewidth}
    \includegraphics[width=\textwidth]{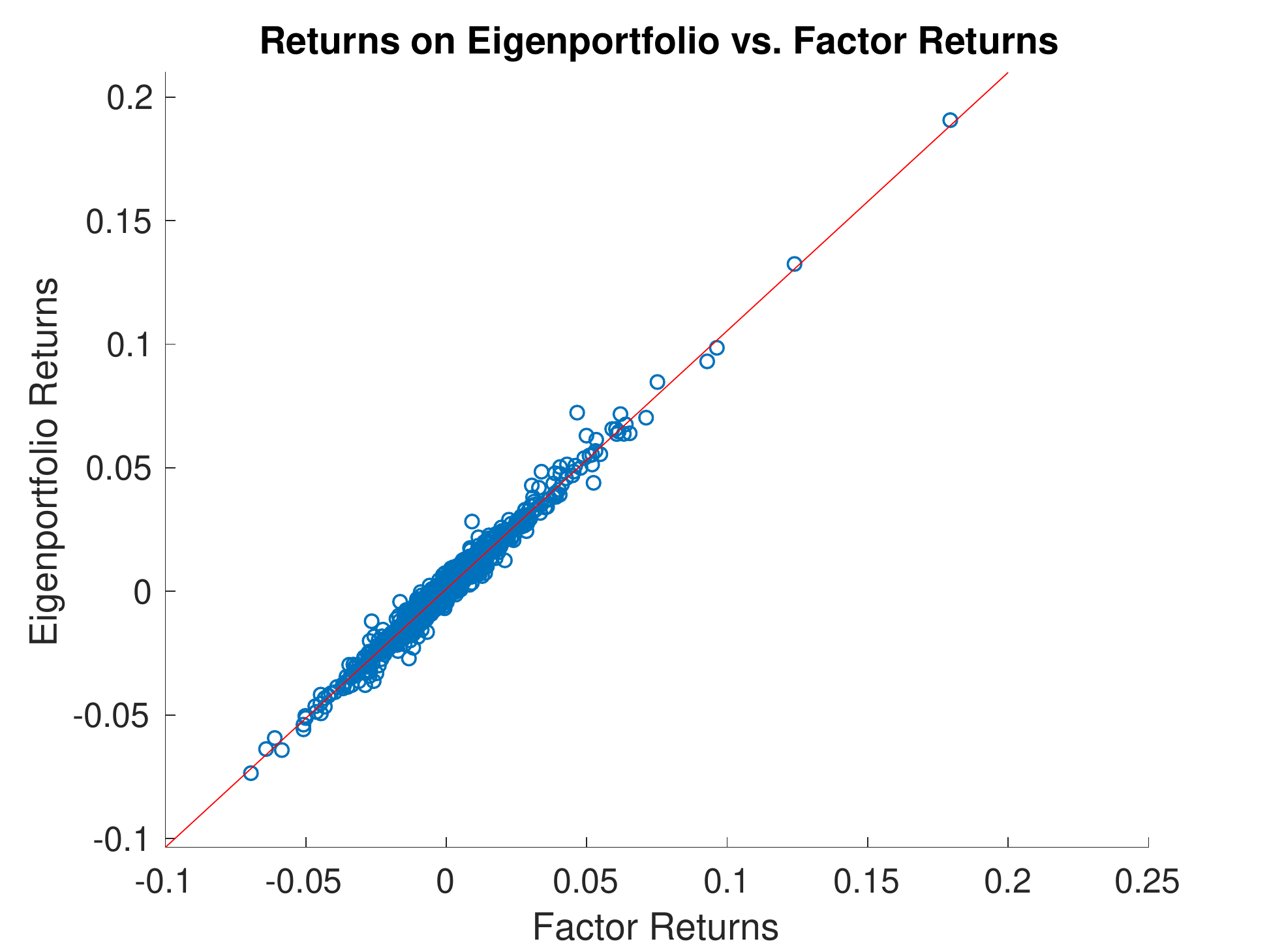}
    \caption{A scatter plot of the adapted tensor eigenportfolio returns against the  returns of tensor OI-weighted factor $\overline Q$ given by \eqref{eq:tensorGlobalFactor}, computed using weighting function $\omega(\mathcal{OI}) = \log(1+\mathcal{OI})\times\mathcal{V}^{untls}$, from 2012 through 2017. The eigenportfolio is computed using a 6-month sliding window. The linear dependence seen here is an improvement from the linear dependence seen in Figure \ref{fig:eigenAndOI_scatter} achieved using flat matrices.}
    \label{fig:eigenAndTensorOI_scatter}
    \end{subfigure}
    \begin{subfigure}[b]{0.45\textwidth}
    \captionsetup{width=.9\linewidth}
    \includegraphics[width=\textwidth]{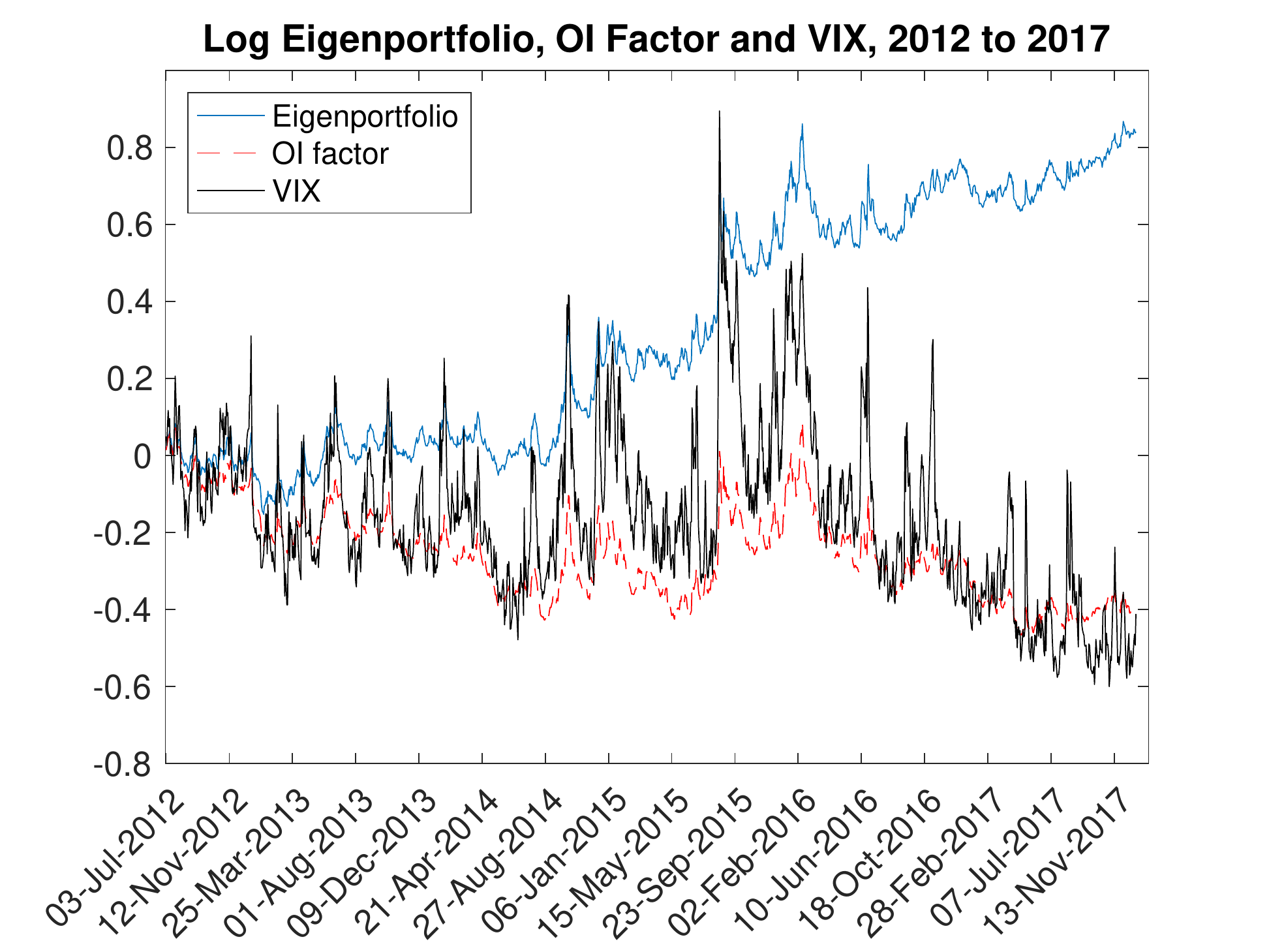}
   \caption{The adapted tensor eigenportfolio and the tensor OI-weighted index computed using weighting function $\omega(\mathcal{OI}) = \log(1+\mathcal{OI})\times\mathcal{V}^{untls}$ (logarithm of each index). This picture does not suggest that the tensor approach to eigenportfolio and factor construction offers an improvement to simply using flat matrices, as it is roughly similar to its analogue in Figure \ref{fig:portfolioTrackingLOGweight}. Improvement due to tensors is more evident in Figures \ref{fig:eigenAndTensorOI_scatter}, \ref{fig:PE+CUMRET_2012to2014} and \ref{fig:PE+CUMRET_2015to2017}, and in Table \ref{tab:tensorAlphas}.}
   \label{fig:tensorPortfolioTracking}
   \end{subfigure}
   \caption{Output from the adapted tensor eigenportfolio using a 6-month sliding window.\label{fig:tensorPortfolioPerformance}}
\end{figure}
\begin{table}[hbp]
    \centering
    \small
    \begin{tabular}{l|r|r|r|r|r|r}
    \multicolumn{1}{c|}{$\omega(\mathcal{OI})$}&
    \multicolumn{1}{c|}{$\alpha_3$}&
    \multicolumn{1}{c|}{$\beta_3$}&
    \multicolumn{1}{c|}{$\alpha_2$}&
    \multicolumn{1}{c|}{$\beta_2$}&
    \multicolumn{1}{c|}{$\alpha_1$}&
    \multicolumn{1}{c}{$\beta_1$}\\
    \hline
    \hline
    \multicolumn{7}{c}{2012-2017 \textbf{(all years)}}\\
    \hline
    $\mathcal{OI}$&0.0891& 0.7384& 0.0923& 0.8432& 0.0925& 0.8432\\
&(3.0439)&& (2.9875)&& (3.0107) &\\
&\multicolumn{2}{c|}{$R^2=$ 0.9466}&\multicolumn{2}{c|}{$R^2=$ 0.9404}&\multicolumn{2}{c}{ $R^2=$ 0.9404}\\
                \cline{2-7}
    $\log(1+\mathcal{OI})\times\mathcal{V}^{untls}$&0.2092& 0.9673& 0.2211& 1.0466& 0.2339& 1.0450\\
&(11.8532)&& (11.8433)&& (12.3862) &\\
&\multicolumn{2}{c|}{$R^2=$ 0.9806}&\multicolumn{2}{c|}{$R^2=$ 0.9782}&\multicolumn{2}{c}{ $R^2=$ 0.9775}
    \end{tabular}
    \caption{Factor loadings and systematic unexplained returns for the tensor eigenportfolio from a 6-month sliding window and various OI-weightings in the tensor-factor construction. For each sample period there are three regressions: a 3-factor regression $\frac{dE^{ep}}{E^{ep}} = \alpha_3+\beta_3 \frac{d\overline{Q}}{\overline{Q}}+b_{eq}\frac{d\hbox{\scriptsize S\&P}}{\hbox{\scriptsize S\&P}}+b_{vx}\frac{d \hbox{\scriptsize VIX}}{\hbox{\scriptsize VIX}}+\varepsilon$, a 2-factor regression $\frac{dE^{ep}}{E^{ep}} = \alpha_2+\beta_2\frac{d\overline{Q}}{\overline{Q}}+b_{eq}\frac{d\hbox{\scriptsize S\&P}}{\hbox{\scriptsize S\&P}}+\varepsilon$, and a 1-factor regression $\frac{dE^{ep}}{E^{ep}} = \alpha_1+\beta_1 \frac{d\overline{Q}}{\overline{Q}}+\varepsilon$. The tensor regressions shown in this Table have higher $R^2$ than their counterparts in Table \ref{tab:alphas}, which is an indication that the tensor approach to factor construction is better.}
    \label{tab:tensorAlphas}
\end{table}

Finally, we show the in-sample plots of cumulative returns for the tensor portfolios next to the cumulative returns of the portfolios from the flat matrices; all portfolios are constructed with weighting function $\omega(\mathcal{OI}) = \log(1+\mathcal{OI})\times\mathcal V^{untls}$. Figures \ref{fig:PE+CUMRET_2012to2014} and \ref{fig:PE+CUMRET_2015to2017} show these results, from which it is clear that the tensor factor constructed in this subsection allows for eigenportfolio tracking that is much closer to the factor. This is evidence that the family of OI-based tensor factors subsumes a 'good' factor, in the sense that it can track the eigenportfolio returns. We recall that the reason why we are concerned with the eigenportfolio when constructing factors, is because we know from theoretical analysis of the spike model that the $1^{st}$ eigenportfolio will have weights close to the vector $h^{-2}\beta$, where $\beta$ are loadings on a dominant factor. Hence, to determine if we have a dominant factor, we should make comparisons with the eigenportfolio and in the tensor IVS data context the results come out better, most likely because the data is heterogeneous and it is therefore beneficial to respect the tensor structure, which MLSVD does. 

\begin{figure}
    \centering
    \begin{subfigure}[b]{0.45\textwidth}
        \includegraphics[width=\textwidth]{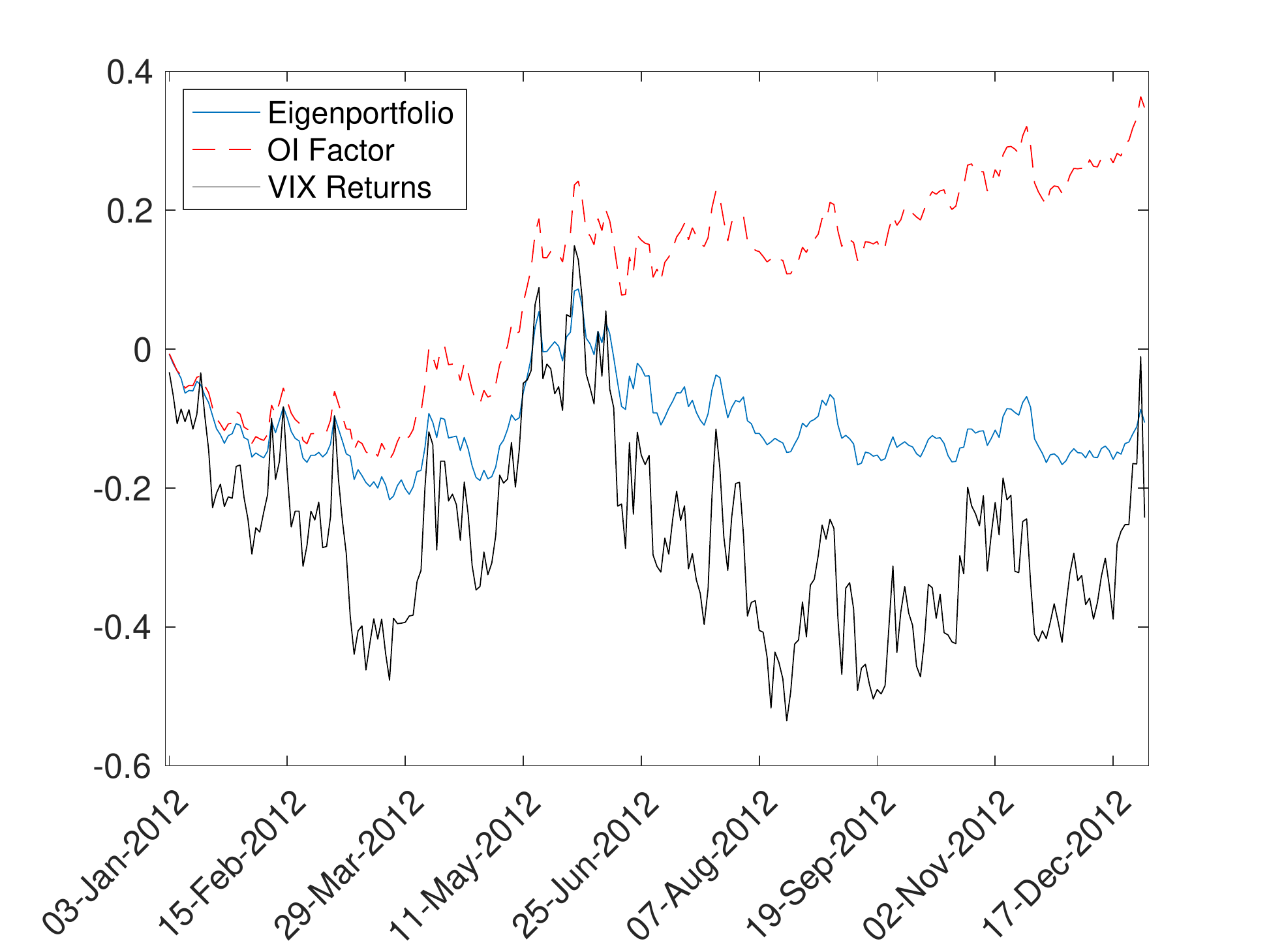}
        \caption{2012 flat}
        \label{fig:2012cumRet}
    \end{subfigure}
    \begin{subfigure}[b]{0.45\textwidth}
        \includegraphics[width=\textwidth]{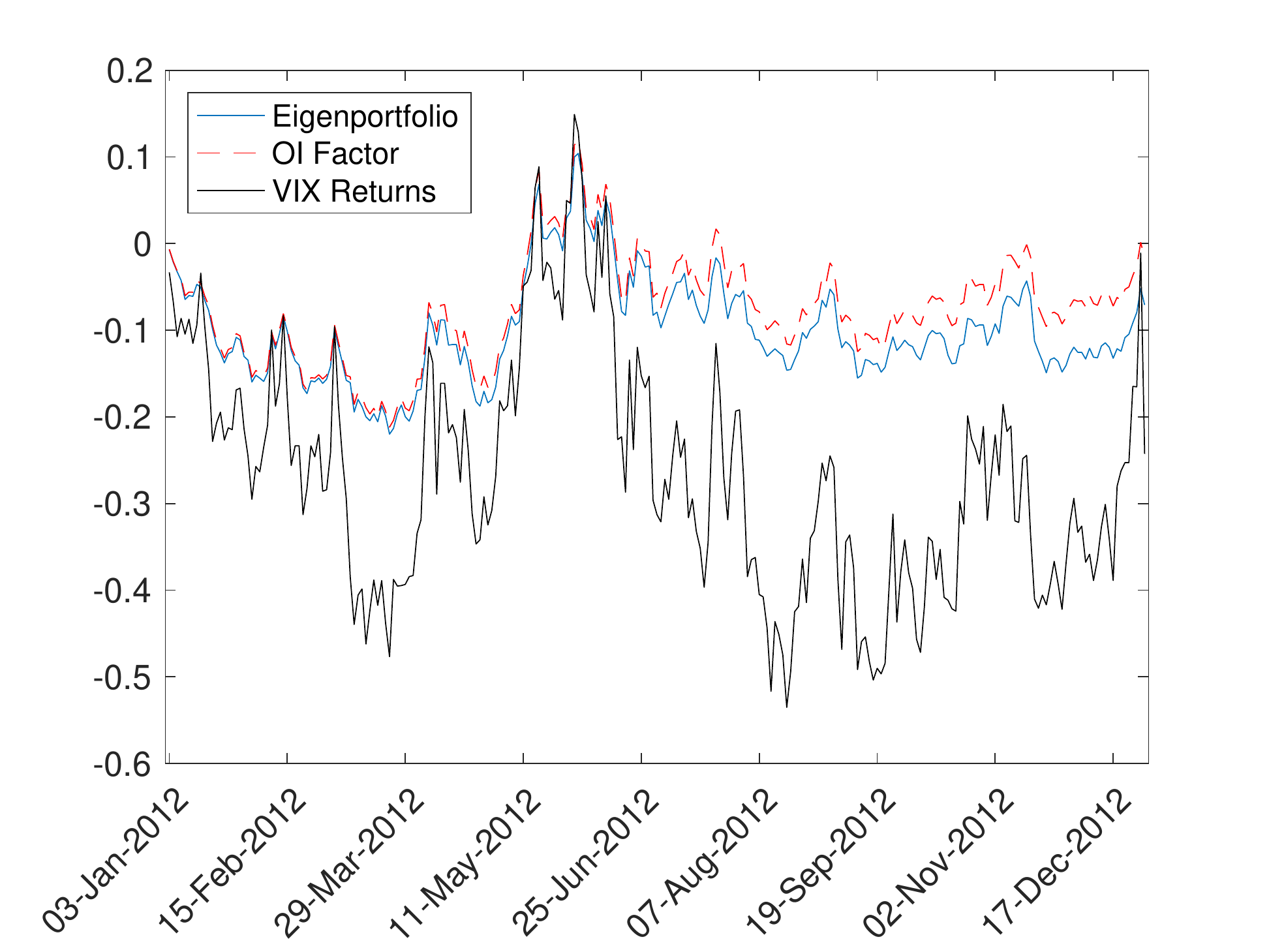}
        \caption{2012 tensor}
        \label{fig:2012tensorCumRet}
    \end{subfigure}\\
\begin{subfigure}[b]{0.45\textwidth}
        \includegraphics[width=\textwidth]{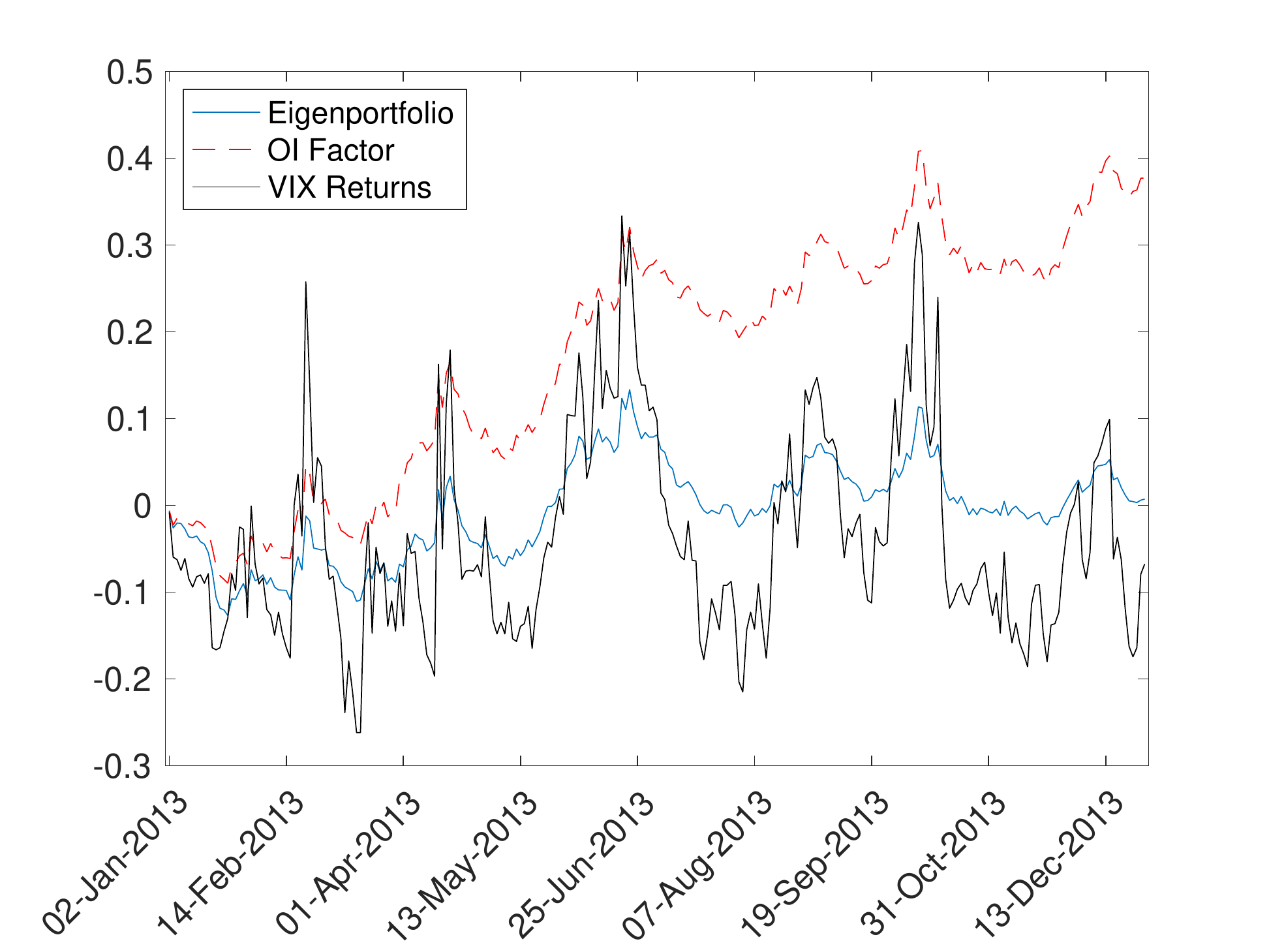}
        \caption{2013 flat}
        \label{fig:2013cumRet}
    \end{subfigure}   
    \begin{subfigure}[b]{0.45\textwidth}
        \includegraphics[width=\textwidth]{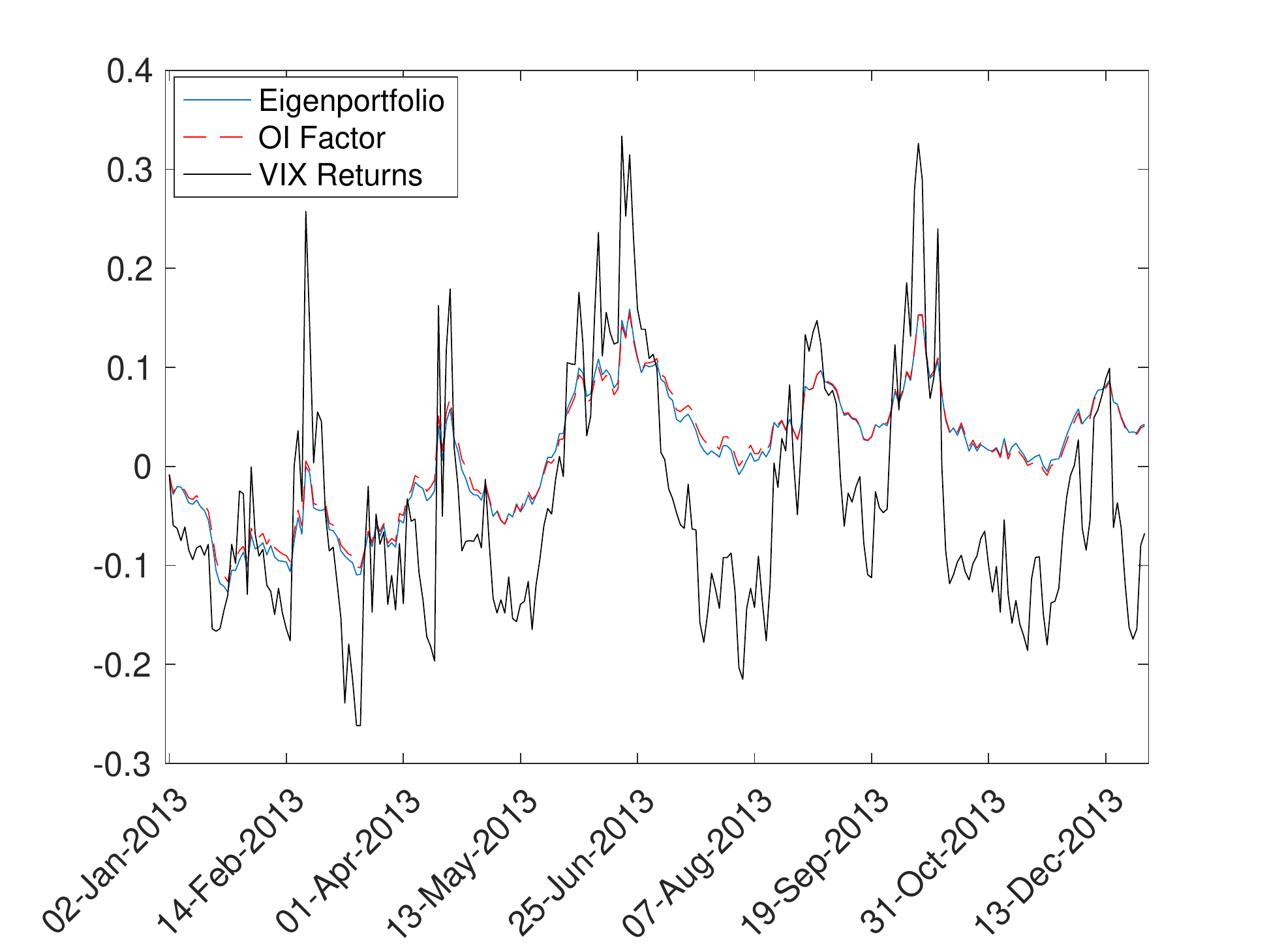}
        \caption{2013 tensor}
        \label{fig:2013tensorCumRet}
    \end{subfigure}   \\
\begin{subfigure}[b]{0.45\textwidth}
        \includegraphics[width=\textwidth]{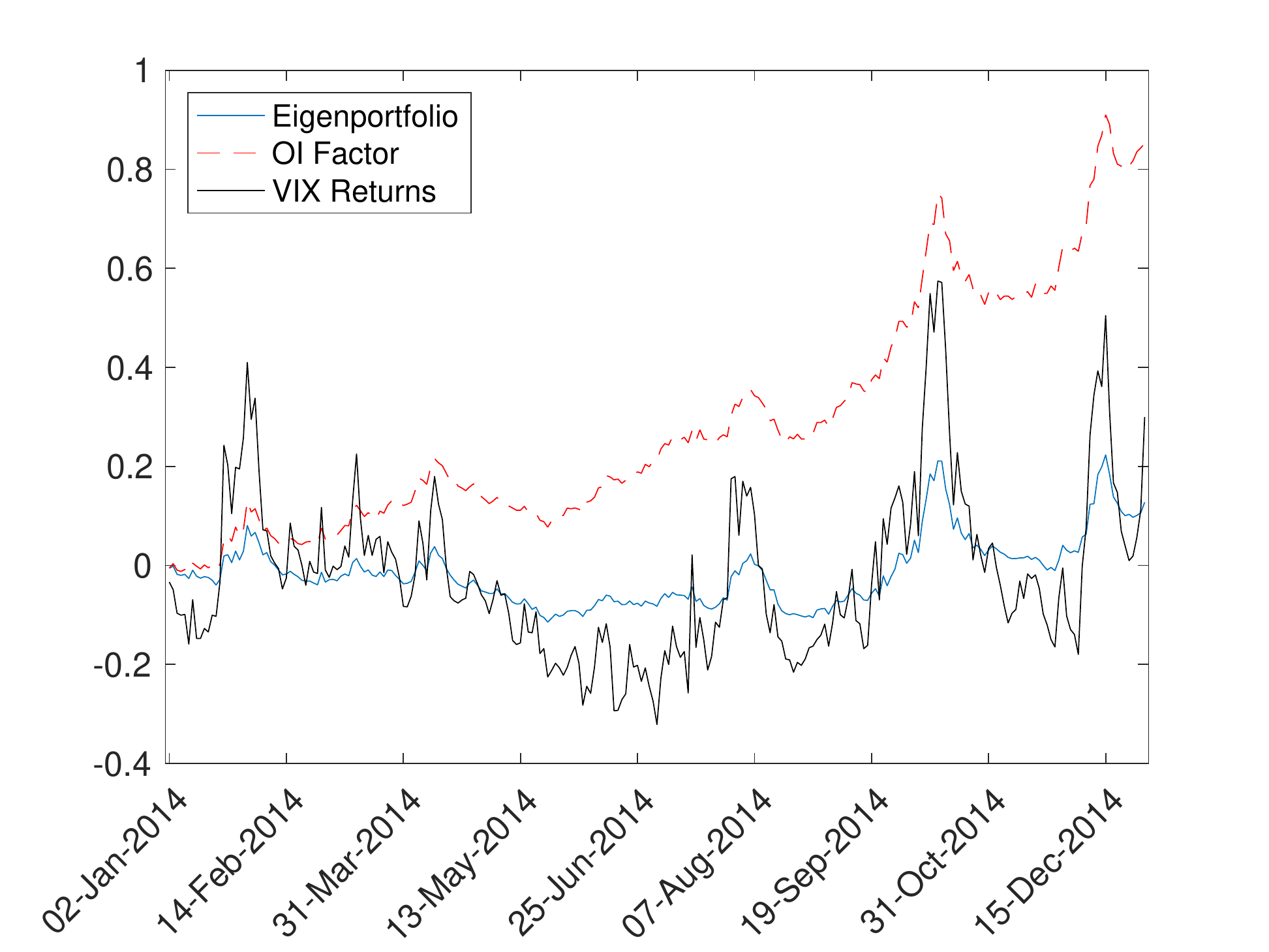}
        \caption{2014 flat}
        \label{fig:2014cumRet}
    \end{subfigure}
    \begin{subfigure}[b]{0.45\textwidth}
        \includegraphics[width=\textwidth]{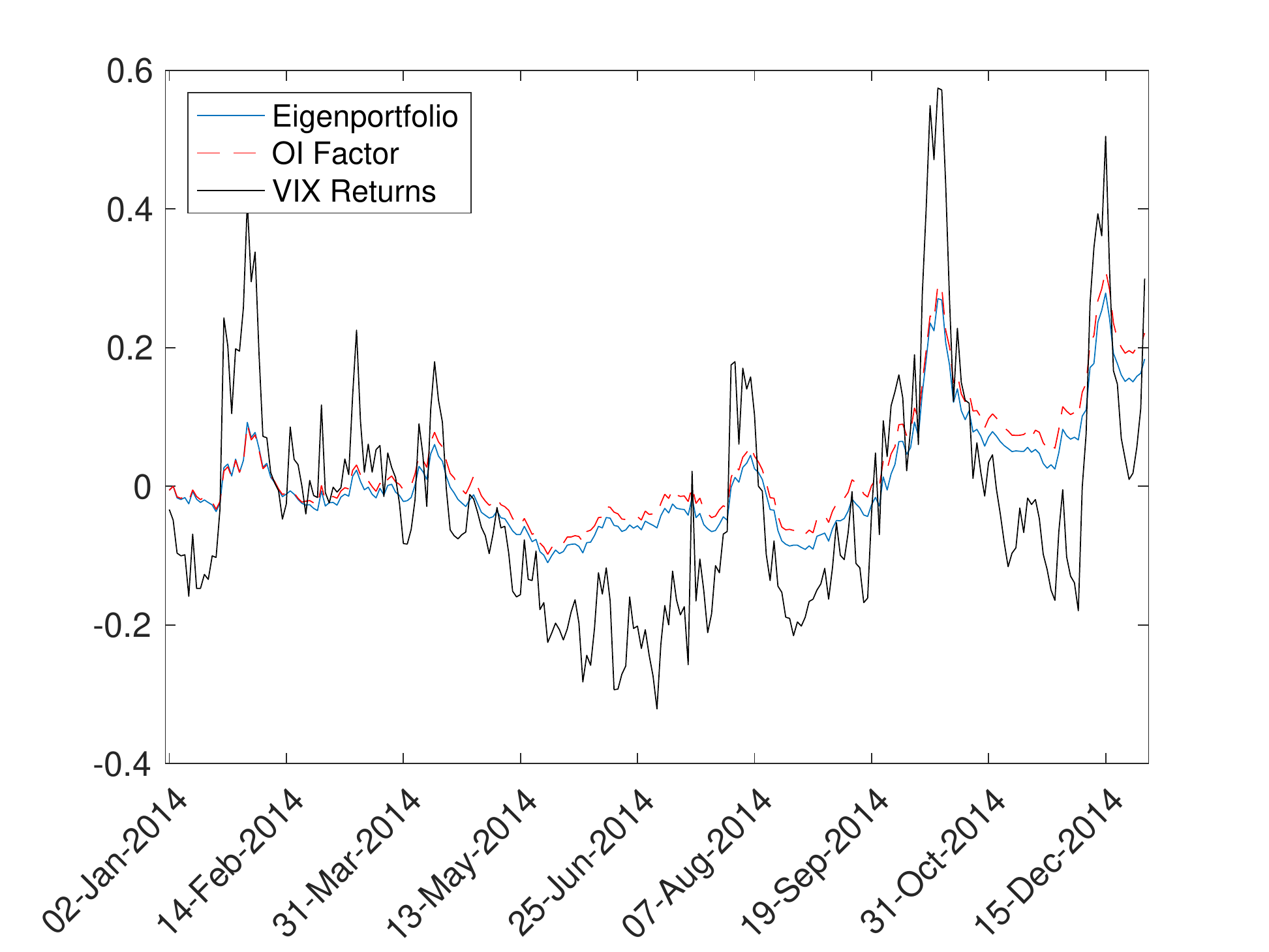}
        \caption{2014 tensor}
        \label{fig:2014tensorCumRet}
    \end{subfigure}
     \caption{Comparison of the eigenportfolio tracking the OI factor, for both the flat matrices (on the left) and the tensors (on the right); all portfolios are constructed with weighting function $\omega(\mathcal{OI}) = \log(1+\mathcal{OI})\times\mathcal V^{untls}$. The plots display the logarithm of each index. These are in-sample fits for each of the years 2012, 2013 and 2014. \label{fig:PE+CUMRET_2012to2014}}
\end{figure}

\begin{figure}
    \centering
\begin{subfigure}[b]{0.45\textwidth}
        \includegraphics[width=\textwidth]{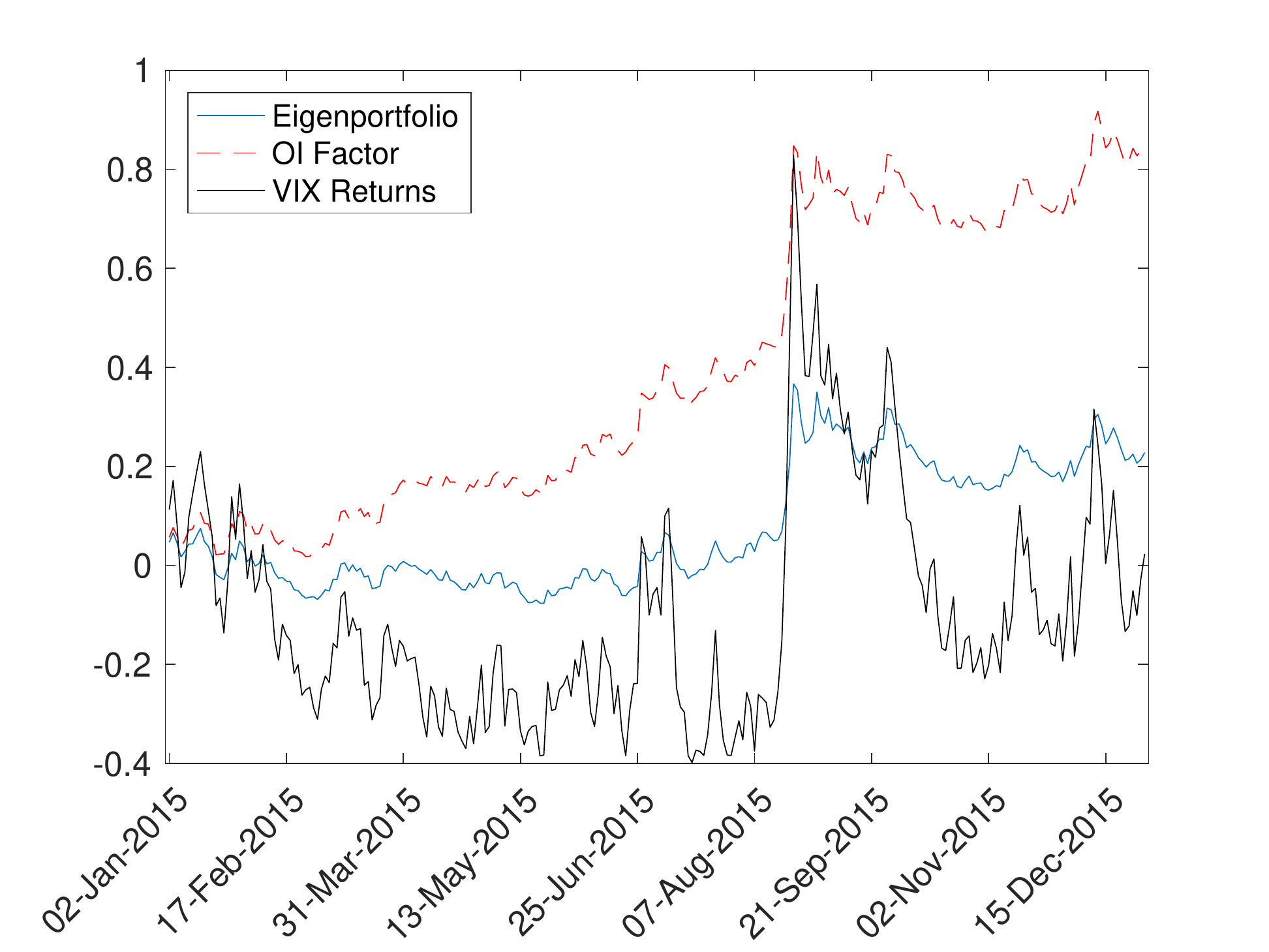}
        \caption{2015 flat}
        \label{fig:2015cumRet}
    \end{subfigure}
    \begin{subfigure}[b]{0.45\textwidth}
        \includegraphics[width=\textwidth]{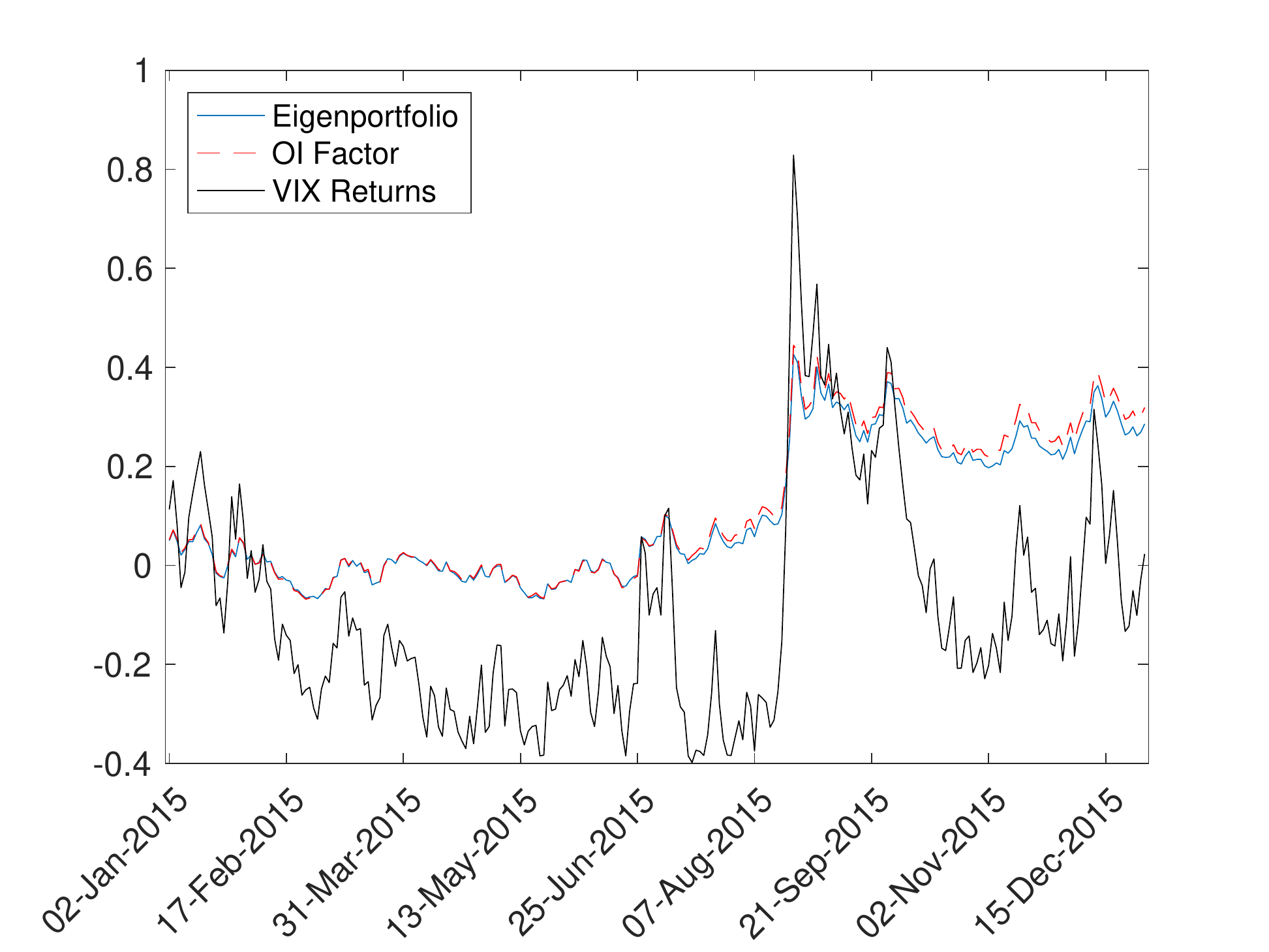}
        \caption{2015 tensor}
        \label{fig:2015tensorCumRet}
    \end{subfigure}\\
\begin{subfigure}[b]{0.45\textwidth}
        \includegraphics[width=\textwidth]{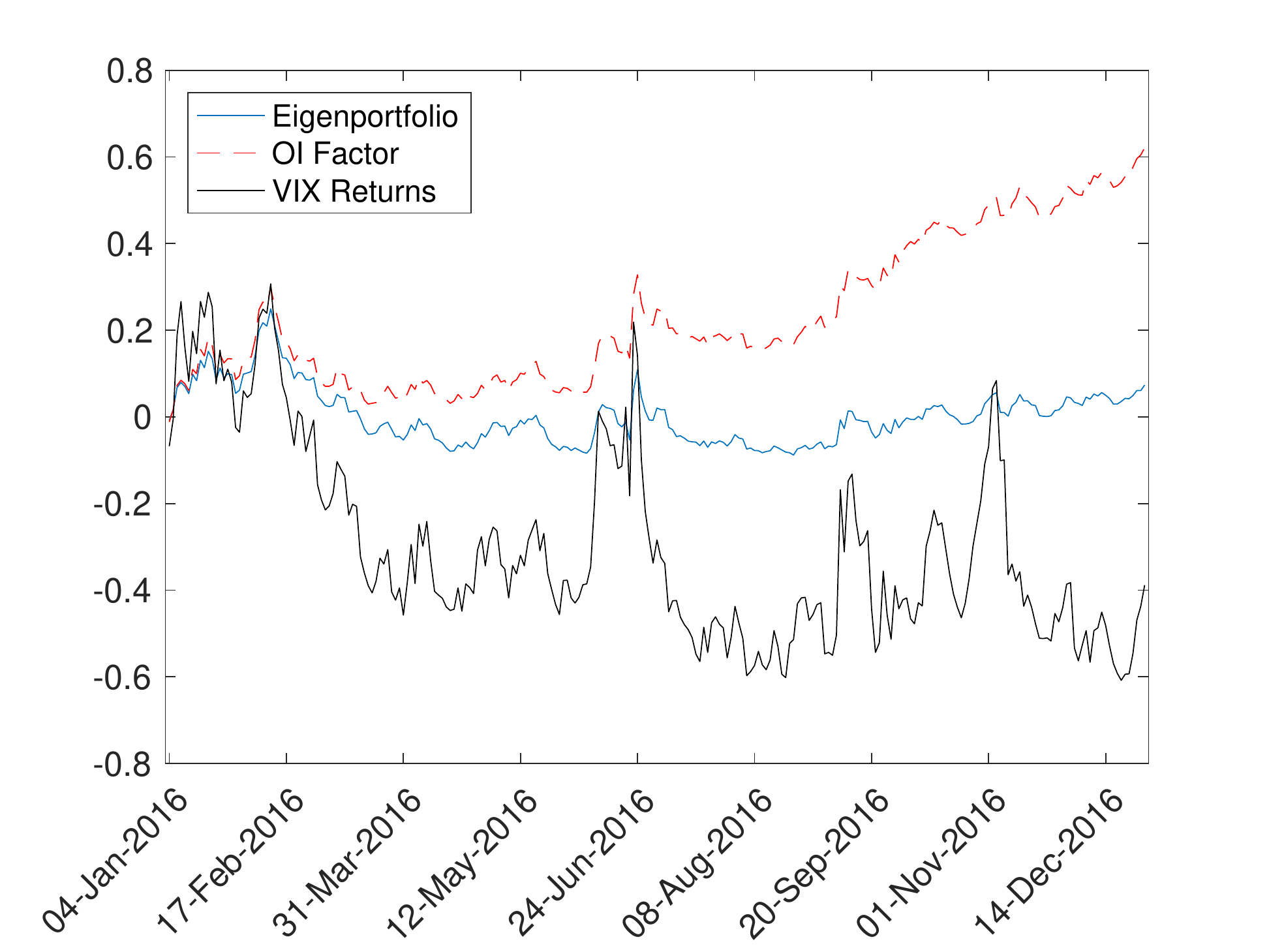}
        \caption{2016 flat}
        \label{fig:2016cumRet}
    \end{subfigure}
    \begin{subfigure}[b]{0.45\textwidth}
        \includegraphics[width=\textwidth]{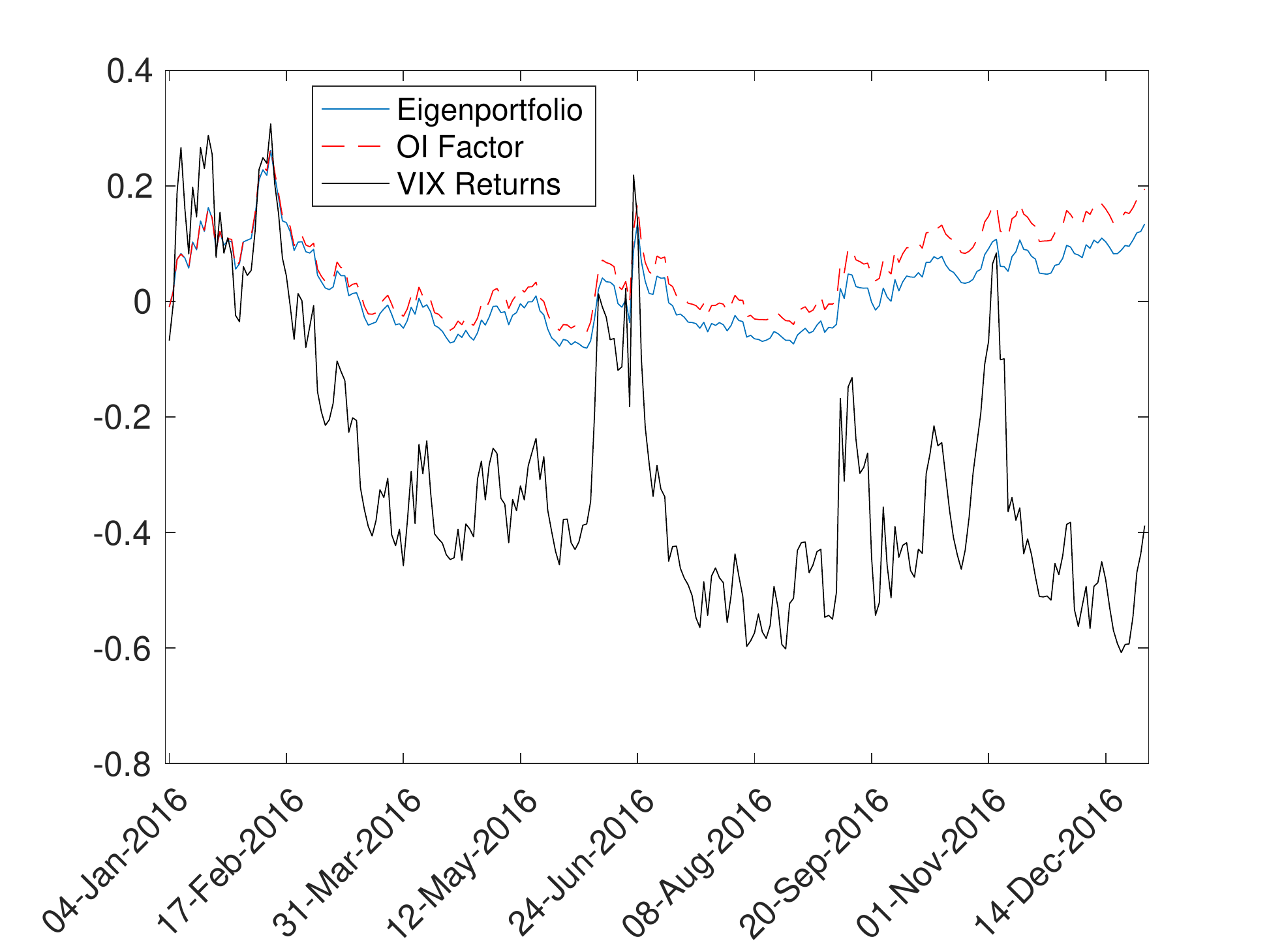}
        \caption{2016 tensor}
        \label{fig:2016tensorCumRet}
    \end{subfigure}\\
\begin{subfigure}[b]{0.45\textwidth}
        \includegraphics[width=\textwidth]{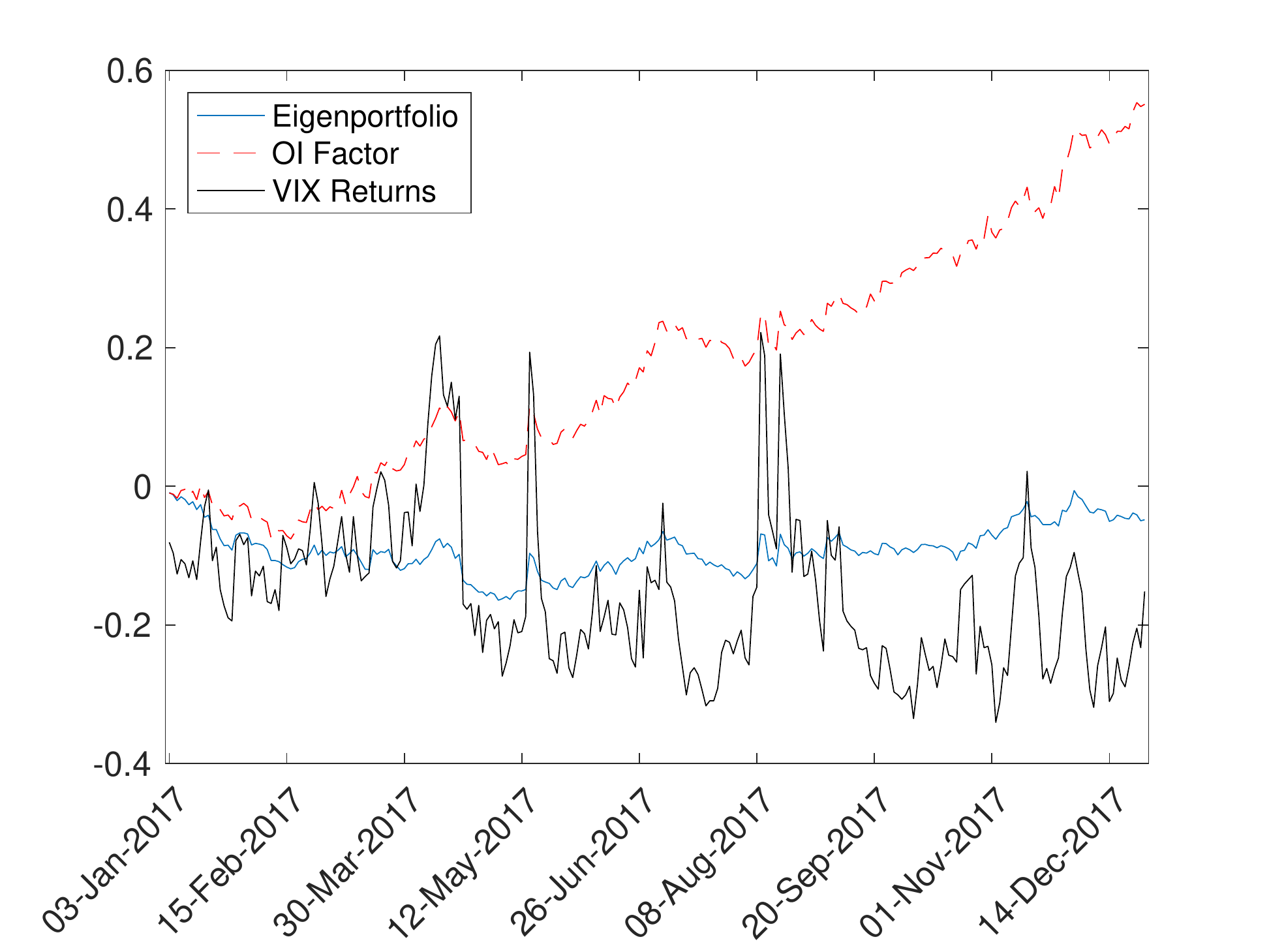}
        \caption{2017 flat}
        \label{fig:2017cumRet}
    \end{subfigure}
\begin{subfigure}[b]{0.45\textwidth}
        \includegraphics[width=\textwidth]{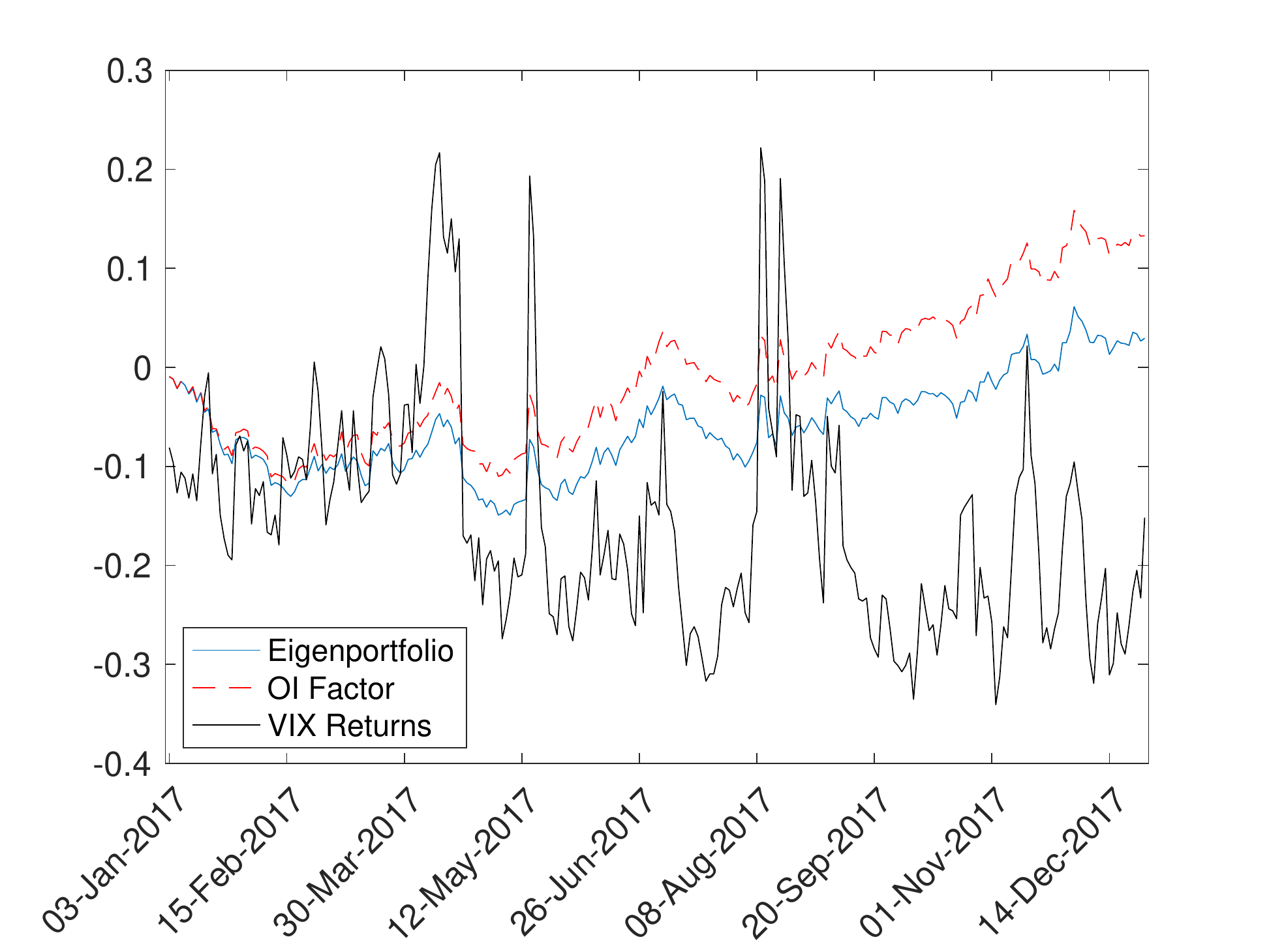}
        \caption{2017 tensor}
        \label{fig:2017tensorCumRet}
    \end{subfigure}
     \caption{Comparison of the eigenportfolio tracking the OI factor, for both the flat matrices (on the left) and the tensors (on the right); all portfolios are constructed with weighting function $\omega(\mathcal{OI}) = \log(1+\mathcal{OI})\times\mathcal V^{untls}$. The plots display the logarithm of each index. These are in-sample fits for each of the years 2015, 2016 and 2017. \label{fig:PE+CUMRET_2015to2017}}
\end{figure}

\section{Summary and Conclusion}

We have carried out a principal component analysis of a data set of implied volatility surfaces from options on U.S. equities. It contains daily implied volatilities for the $6$ years 2012-2017 for $500$ equities (names), $8$ maturities and $7$ strikes, that is, $28,000$ data points daily for $251\times 6$ days. We have posed and answered three questions in this data-driven study: (a) what is the essential information in this data set, or how many factors are needed to represent the data with the residual being ``noise", (b) by analogy with equity returns PCA analysis, can the principal eigenportfolio be associated with a ``market" portfolio and what should that ``market" portfolio be and (c) since the natural structure of the data is that of a four-dimensional tensor (time, name, maturity, strike), do we benefit by preserving this structure in the PCA analysis, that is, by doing a tensor PCA?

The analysis of question (a) is in the section ``Matrix of Implied Volatility Returns". Building on \cite{avellanedaDobi2014}, we perform matrix PCA on the flattened IVS data excluding stock returns, arriving at the conclusion that the number of significant factors is $9$. For comparison, the number of significant factors for equity returns is normally in the range $15-20$. There is a very large dimension reduction to the IVS data set since there is a lot of structure in option prices and hence in the IVS. In this section, we also introduce the concept of effective dimension for the residual because, contrary to theoretical factor analysis as well as equity returns analysis, the IVS residuals retain structural patterns or correlations. We find that the effective spatial dimension (name, maturity, strike) of the residuals is closer to $500$ than to the nominal dimension of $28,000$. This is the way spectral analysis, that is, PCA, deals with the heterogeneity of the IVS data: the residuals have patterns in them even after the market information has been taken out.

The analysis of question (b) is in the section ``Principal Eigenportfolios". The main result here is the construction of the analog for IVS of the market portfolio for equity returns. This is based on using the open interest of the options as well as their (unitless) Vega. OI is the number of contracts on a given day for each name, maturity and strike, and the Vega is an associated sensitivity to volatility. We find that portfolios of IVS returns weighed suitably by OI and Vega do track the IVS eigenportfolio. This provides an interpretation of the eigenportfolio that is analogous to the one for equity returns, is robust and can be used in ways that the VIX, the SPX volatility index, is used.

The analysis of the last question (c) can be found in the ``Factors and Eigenportfolios Using Tensors" section. Yes, we find that retaining the tensor structure in the eigenportfolio analysis makes a difference in that the OI-Vega weighted (tensor) IVS returns portfolio tracks the (tensor) eigenportfolio much better. This is a strong indication that data structure matters and data flattening should be avoided if possible.

There are obviously many, many more questions that can and should be asked about the IVS data set, including theoretical ones about the methodology used. It is an evolving research enterprise.

\appendix
\setcounter{equation}{0}

\renewcommand{\theequation}{A.\arabic{equation}}
\section{Appendix: Description of Data}
\label{sec:data}

\subsection{Implied Volatility}
Implied volatilities are not directly observable in the market and must be derived from the prices of traded options. At any point in time, for a given underlying stock, there will be a variety of call and put options available to trade on the market, each of which will have a specific strike price and time to maturity. Given the observed prices of these options it is then possible to infer an implied volatility using a numerical method. 

Options on individual stocks have an American-style exercise feature and must be priced using a numerical algorithm as no closed form solution is available. For this purpose OptionMetrics uses the industry standard Cox-Ross-Rubinstein (CRR) binomial tree model. This model can accommodate underlying securities with either discrete dividend payments or a continuous dividend yield. An option is priced by working backwards through the tree from the maturity date when the payoff is known and incorporating any potential value arising from the possibility of early exercise at each node. The calculated price of the option at time $t=0$ is
the model price. 

To compute the implied volatility of an option given its price, the model is run iteratively with different values of $\sigma$ until the model price of the option converges to its market price, defined as the midpoint of the option’s best closing bid and best closing offer prices. At this point, the final value of $\sigma$ is the option’s implied volatility.

This model can be adapted to account for the discrete dividends that stocks typically pay on a quarterly basis. The approach taken is to adjust the price of the underlying stock by subtracting the discounted value of all dividends to be paid between now and the expiry of the option.

Once the implied volatility has been calculated for an option, it is a simple matter to then calculate its delta using the Black-Scholes model. We can then create a grid of time to maturity vs delta with a corresponding implied volatility where it is known. This will result in a grid of implied volatilities which have very different times to maturity and deltas from those needed to form a standardised grid, which is referred to as an implied volatility surface.

OptionMetrics calculates its standardized option implied volatilities using a kernel smoothing technique. The data is first organized by the log of days to expiration and by “call-equivalent delta” (delta for a call, one plus delta for a put). A kernel smoother is then used to generate a smoothed volatility value at each of the specified interpolation grid points. At each grid point j on the volatility surface, the smoothed volatility $\hat{\sigma}_j$ is calculated as a weighted sum of option implied volatilities:

\begin{align*}
\hat{\sigma}_j = \frac{\sum_i \mathcal{V}_i \sigma_i \Phi\left( x_{i,j}, y_{i,j}, z_{i,j}\right)}{\sum_i \mathcal{V}_i \Phi\left( x_{i,j}, y_{i,j}, z_{i,j}\right)}\ ,
\end{align*}
where i is indexed over all the options for that day, $\mathcal{V}_i$ is the vega of the option, $\sigma_i$ is the implied volatility and $\Phi$ is the kernel function:

\begin{align*}
\Phi\left( x, y, z\right) = \frac{1}{\sqrt{2\pi}}e^{-\left(\left(\frac{x^2}{2h_1}\right)+\left(\frac{y^2}{2h_2}\right)+\left(\frac{z^2}{2h_3}\right)\right) }
\end{align*}
where

\begin{align*}
x_{i,j} &= \log(T_i/T_j)\\
y_{i,j} &= \Delta_i - \Delta_j\\
z_{i,j} &= I_{CP_i=CP_j}\ .
\end{align*}
Values $x_{i,j}, y_{i,j}$ and $z_{i,j}$ are measures of the “distance” between the option and the target grid point, $T_i (T_j)$ is the number of days to the expiration of the option (grid point), $\Delta_i(\Delta_j)$ is the “call-equivalent delta” of the option (grid point), $CP_i (CP_j)$ is the call/put identifier of the option (grid point) and $I_{()}$ is an indicator function. The kernel “bandwidth” parameters were chosen empirically and are set as $h_1=0.05$, $h_2=0.005$, and $h_3=0.001$. 

\subsection{Open Interest}
Unlike implied volatility, open interest is directly observable in the market and represents the number of contracts in a particular option (underlying stock, strike and expiry date) open at a point in time. For each open contract, there will be a party who is long the option and conversely one who is short the option. Open interest is therefore concentrated in the most popular contracts, which tend to be those closest to at-the-money and have less then one year but more than two weeks to expiry. 

OptionMetrics provides an open interest for every available contract at the close of each trading day, as this is the information available from the exchange. However, as described above we are interested in the open interest at the grid points we have chosen for our implied volatility surfaces, namely constant $\Delta $'s and constant maturities. In order to estimate the open interest for these points, we used a bucketing approach.

Firstly, we allocated each option to one of eight time buckets according to its days to expiry ($\tau$) where the limits on these ranges are as given in Table \ref{table:oi time buckets}. Similarly, we allocated each option to one of eight $\Delta$ buckets according to where the limits on these ranges are as given in Table \ref{table:oi delta buckets}. Once this has been done, each option has been allocated to one of our 56 grid points and we then sum the open interest across all options at each grid point for each stock on each day in our dataset.

\begin{table}[H]
    \centering
    \begin{tabular}{|c|c|c|}
    \hline
    $\tau$ Bucket (days) & Min Value (days) & Max Value (days) \\
    \hline
    30 & 1 & 45 \\
    60 & 46 & 75 \\
    91 & 76 & 106 \\
    122 & 107 & 137 \\
    152 & 138 & 167 \\
    182 & 168 & 227 \\
    273 & 228 & 319 \\
    365 & 320 & \\
    \hline
    \end{tabular}
    \caption{Bucketing scheme for open interest }
    \label{table:oi time buckets}
\end{table}

\begin{table}[H]
    \centering
    \begin{tabular}{|c|c|c|}
    \hline
    $\Delta$ Bucket & Min Value & Max Value  \\
    \hline
    -20 & -25 & -1 \\
    -30 & -35 & -24 \\
    -40 & -45 & -34 \\
    50 & 45 & 54 \\
    40 & 35 & 44 \\
    30 & 25 & 34 \\
    20 & 1 & 24 \\
    
    \hline
    \end{tabular}
    \caption{Bucketing scheme for open interest }
    \label{table:oi delta buckets}
\end{table}

\section{Appendix: The Multilinear SVD}
\label{sec:MLSVD}

If $A$ is a matrix\footnote{The notation in the Appendix is a bit different from that used in the paper and closer to that used in the linear algebra literature.}, we can use the singular value decomposition (SVD)
to expand
\begin{align*}
A=A(i,t),~i=1,\ldots,N,~t=1,\ldots,T
\end{align*}
in the form 
\begin{align*}
A= U\Sigma V^*\ .
\end{align*}
Here $U$ is an $N\times N$ unitary matrix, $V$ is a unitary $T\times T$ matrix and $\Sigma$ is a diagonal matrix with entries $\sigma_1 \geq \sigma_2 \geq \cdots \geq \sigma_R \geq 0$, where the index $R$ is the rank of $A$ and $R \leq \min\{N,T\}$.
In terms of the components, we have
\begin{align*}
A(i,t) = \sum_{r=1}^R \sigma_r U(i,r)V(t,r),~ i=1,\ldots,N,~j=1,\ldots,T\ .
\end{align*}
For this expansion, only $R$ columns of the unitary matrices $U$ and $V$ are needed to represent the matrix $A$. When the sum stops at $\bar{R} < R$ then the corresponding matrix $A_{\bar{R}}$ is the best rank $\bar{R}$ approximation of the matrix $A$ in the Frobenius norm.

These properties do not generalize to tensors except in special and rather limited ways. A frequently used approach for tensors is the
multi-linear SVD (MLSVD) or Tucker decomposition which for a fourth order tensor that has the form:
\begin{align}
\label{eq:mlsvd}
A(\underline{i},t) = \sum_{\underline{1}\leq \underline{i'}\leq \underline{N},1\leq t'\leq T} S(\underline{i'},t') U(\underline{i},\underline{i'})V(t,t')\ .
\end{align}
Here $\underline{i}=(i_1,i_2,i_3)$, $\underline{N}=(N_1,N_2,N_3)$, $V$ is a unitary $T\times T$ matrix and $U$ is a unitary tensor of the form
\begin{align}
\label{eq:tensorU}
U(\underline{i},\underline{i'})=U^{(1)}(i_1,i'_1)U^{(2)}(i_2,i'_2)U^{(3)}(i_3,i'_3)   
\end{align}
and the matrices $U^{(1)},U^{(2)},U^{(3)}$ are unitary of size $N_1\times N_1$, $N_2\times N_2$, and $N_3\times N_3$, respectively. The fourth order tensor $S$ is not diagonal anymore, in general.  However, it has the property of \textbf{all orthogonality}:
\begin{align}
\label{eq:tortho}
\sum S(\underline{i},t) S(\underline{i'},t') &=0
\end{align}
where the indices $(\underline{i},t)$ and $(\underline{i'},t')$ are equal except for one of the four components and the sum is over the three equal components. This means that distinct ``slices" of each orientation (Horizontal ($N_1$), Vertical ($N_2$), Frontal ($N_3$), Time ($T$)) are orthogonal. Moreover, the indices can
be permuted so that the sums of squares over all except for one index are ordered by size
\begin{align}
\label{eq:mlsvdsigmas}
\sigma^H_1 &= \sqrt{\sum_{j,k,t} S^2(1,j,k,t)}  \geq \sigma^H_2 \geq \cdots \geq \sigma^H_{N_1} \\
\nonumber
\sigma^V_1 &= \sqrt{\sum_{i,k,t} S^2(i,1,k,t)}  \geq \sigma^V_2 \geq \cdots \geq \sigma^V_{N_2} \\
\nonumber
\sigma^F_1 &= \sqrt{\sum_{i,j,t} S^2(i,j,1,t)}  \geq \sigma^F_2 \geq \cdots \geq \sigma^F_{N_3} \\
\nonumber
\sigma^T_1 &= \sqrt{\sum_{i,j,k} S^2(i,j,k,1)}  \geq \sigma^T_2 \geq \cdots \geq \sigma^T_{T}
\end{align}
where we write here $\underline{i}=(i,j,k)$.

The MLSVD representation \eqref{eq:mlsvd} is exact and is obtained from the application of SVDs to all the possible flattenings of the tensor and organizing the output suitably to get the result, see \cite{kolda2009tensor,deLathauwer2000multilinear} and \cite{cichocki2015tensor}.
A schematic of the MLSVD expansion is shown in Figure \ref{fig:tensorschematic}. Another expansion, not used here, is the canonical polyadic decomposition that is based on Frobenius norm minimization, a non-convex problem; a schematic is shown in Figure \ref{fig:tuckerdecompositionschematic}.

In tensor PCA, we are interested in the covariance over time of the data, which we assume here is already normalized. In the notation of this Appendix, we define this covariance as
\begin{align}
\label{eq:defcov}
C(\underline{i},\underline{j})=
\sum_t A(\underline{i},t)\overline{A(\underline{j},t)}\ .
\end{align}
Using the MLSVD representation of $A$ in \eqref{eq:mlsvd} we deduce the representation
\begin{align}
\label{eq:tcov}
C(\underline{i},\underline{j})=\sum_{\underline{i'},\underline{j'}}s(\underline{i'},\underline{j'})
U(\underline{i},\underline{i'})\overline{U(\underline{j},\underline{j'})}
\end{align}
where
\begin{align}
s(\underline{i},\underline{j})= \sum_t 
S(\underline{i},t)S(\underline{j},t)\ .
\end{align}
Using the unitarity of the tensor $U$, \eqref{eq:tensorU}, in the covariance expansion \eqref{eq:tcov} we obtain
\begin{align}
\label{eq:specov}
 \sum_{\underline{j}}C(\underline{i},\underline{j})
 U(\underline{j},\underline{k})=
 \sum_{\underline{i'}}s(\underline{i'},\underline{k})
 U(\underline{i},\underline{i'})\ .
\end{align}
We see from this expression that the unitary tensor $U$
is not an eigenvector of the covariance tensor $C$ in \eqref{eq:specov} because $s$ is not diagonal in general. As noted earlier, $U$ has the form
\begin{align*}
U(\underline{i},\underline{j})=U^{(1)}(i_1,j_1)U^{(2)}(i_2,j_2)U^{(3)}(i_3,j_3)
\end{align*}
with $U^{(1)}(i_1,j_1),~U^{(2)}(i_2,j_2),~U^{(3)}(i_3,j_3)$ unitary matrices of size $N_1\times N_1$, $N_2\times N_2$ $N_3\times N_3$, respectively. The equation \eqref{eq:specov} captures rather clearly the scope the MLSVD representation of $A$ in \eqref{eq:mlsvd} in that while it is not a spectral form for the covariance, the tensor $s$ on the right has positive diagonal elements that can be ordered and has off diagonal elements that are often small because of the total orthogonality property \eqref{eq:tortho}, although this has to be verified separately and it is not generally true. When, however, $s(\underline{1},\underline{1})$ is large then we can take 
\begin{align}
\label{eq:tprincipal}
\tilde{U}^{(1)}(\underline{i},\underline{1})=
U^{(1)}(i_1,1),~U^{(2)}(i_2,1),~U^{(3)}(i_3,1)
\end{align}
as the principal tensor eigenvector, as we did with a slightly different notation in \eqref{eq:tensorPrinc}.

We close this Appendix with a few comments that complement the discussion up to now. (\#1) The truncation that produces the principal tensor eigenvector does not, in general, arise from a Frobenius norm minimization of the difference $|| A- A_{PCA}||$. However, for the IVS data, the principal tensor eigenvector using MLSVD as described here and using CPD by Frobenius norm minimization produces essentially the same result. There will be a difference for multi-factor tensor PCA, an issue that is not considered here. (\#2) It is observed in practice that if there is a big gap in the size of the sigmas, \ref{eq:mlsvdsigmas}, then this kind of truncation does behave like it does for matrices. That is, the residual tends to behave as if it came from random entries. (\#3) How does one test that a tensor has entries that behave as if they are random? We saw in the matrix case in the ``Matrix of Implied Volatility Returns" section that one looks at the histogram of the singular values and compares this to a suitably adapted Marchenko-Pastur density, using a Kolmogorov-Smirnov test, for example.
For tensors a test can be developed by using matrix flattenings of the tensor, horizontal, vertical, frontal, etc. flattenings. Since the Marchenko-Pastur law is an asymptotic one, care must be taken so that the data structure is constructed so that it gives rise to flattenings for which this asymptotic law can actually be used. This is work in progress at present. (\#4) There are other methods for constructing tensor principal components: Alternating least squares (or alternating SVDs), are often used but for which little can be said theoretically, or direct (non-convex in general) Frobenius norm optimizations using gradient descent or stochastic gradient descent to avoid getting stuck in local minima early on.
\begin{figure}[H]
\centering
\includegraphics[width=0.7\linewidth]{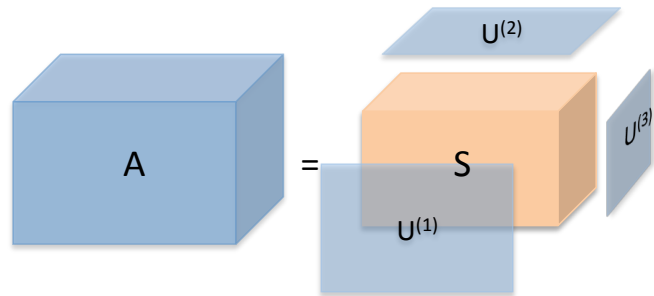}
\caption{Core decomposition of a 3-D cube}
\label{fig:tensorschematic}
\end{figure}
\begin{figure}[H]
\centering
\includegraphics[width=0.7\linewidth]{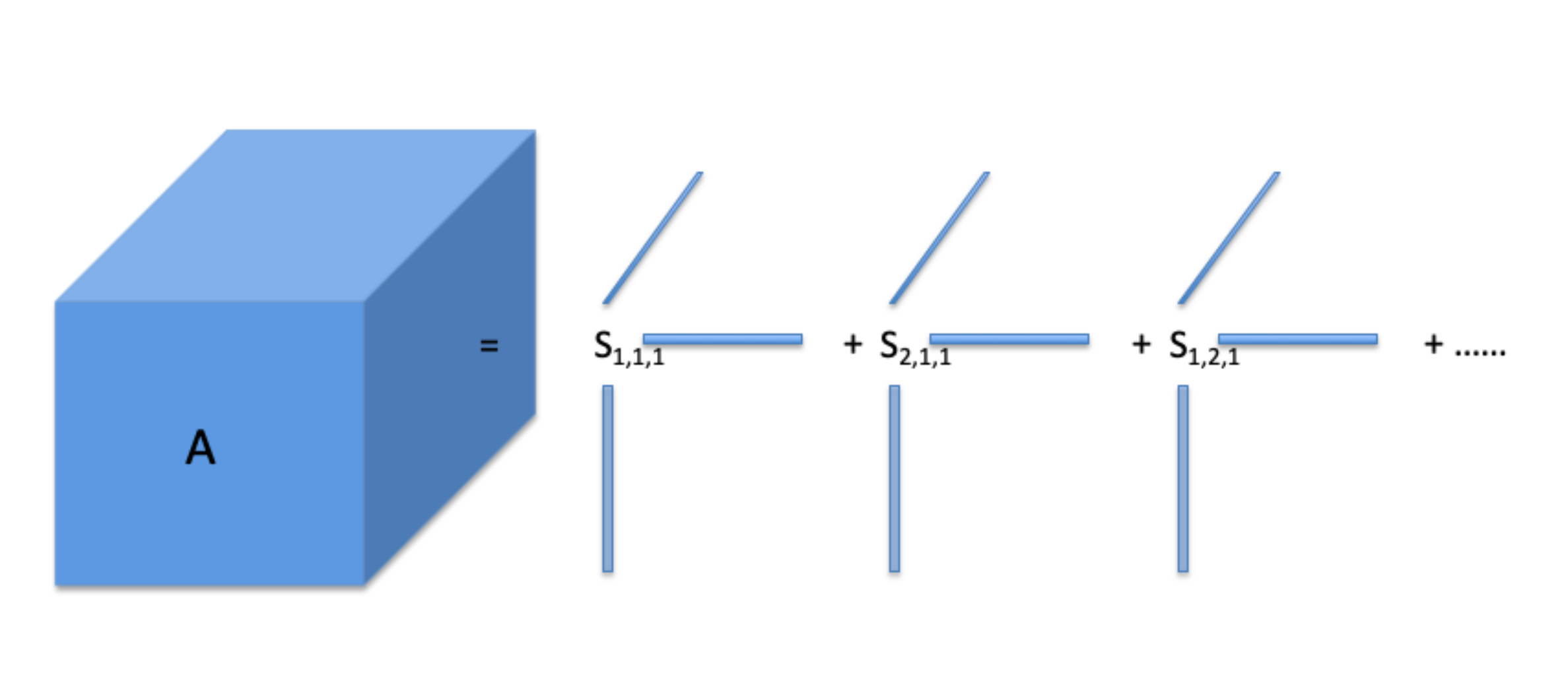}
\caption{Tucker decomposition}
\label{fig:tuckerdecompositionschematic}
\end{figure}


\end{document}